\documentclass[a4paper,11pt]{article}
\usepackage[left=2cm,right=2cm,top=2.5cm,bottom=2.5cm]{geometry}
\usepackage{amsmath,amsfonts,amssymb,bm,mathrsfs}
\usepackage{comment}
\usepackage{amsthm}
\newtheorem{proposition}{Proposition}
\newtheorem{lemma}{Lemma}
\newtheorem{remark}{Remark}
\newtheorem{ex}{Example}
\theoremstyle{definition}

\usepackage{caption}
\allowdisplaybreaks[1]
\usepackage{xcolor}
\usepackage{mathtools}
\usepackage{graphicx}
\usepackage{algorithm,algpseudocode, algorithmicx}
\floatname{algorithm}{Algorithm}
\algrenewcommand{\algorithmicrequire}{\textbf{Input:}}
\algrenewcommand{\algorithmicensure}{\textbf{Output:}}
\algrenewcommand{\algorithmicprocedure}{\textbf{Step:}}
\algrenewtext{EndProcedure}{\algorithmicend\;\textbf{Step}}

\usepackage{authblk}

\usepackage{enumerate,booktabs,multirow,multicol,adjustbox}

\usepackage{setspace}
\usepackage[sectionbib]{natbib}
\usepackage[colorlinks=true, allcolors=blue]{hyperref}
\usepackage{placeins}


\renewcommand{\vec}[1]{\mathbf{#1}}
\newcommand{\mat}[1]{\mathrm{#1}}
\newcommand{\parvec}[1]{\bm{#1}}

\newcommand{\pr}{\mathbb{P}}
\newcommand{\E}{\mathbb{E}}

\renewcommand{\sp}[1]{\mathfrak{#1}}
\newcommand{\bigO}{\mathcal O}
\DeclareMathOperator{\argmin}{\arg\min}
\newcommand{\smallO}{{\scriptstyle\mathcal O}}
\newcommand{\iid}{\mathrm{i.i.d.}}
\newcommand{\se}[1]{\scriptsize{(\textit{#1})}}
\newcommand{\best}[1]{\underline{#1}}

\definecolor{soham}{RGB}{0,167,159}

\definecolor{annesha}{RGB}{148,0,211}

\title{A scalable version of MADD for big-data classification}
\author[1]{Annesha Ghosh}
\author[2]{Adrija Saha}
\author[3]{Soham Sarkar\footnote{Corresponding author. Email id: \url{sohamsarkar@isical.ac.in}}}
\affil[1]{Theoretical Statistics and Mathematics Unit, Indian Statistical Institute, \protect\\ 203, B.\ T.\ Road, Kolkata 700108, India.}
\affil[2]{Chief Model Risk Officer Department, UBS Business Solutions (India) Pvt. Ltd., \protect\\ Airoli Knowledge Park, Navi Mumbai 400708, India}
\affil[3]{Interdisciplinary Statistical Research Unit, Indian Statistical Institute, \protect\\ 203, B.\ T.\ Road, Kolkata 700108, India.}

\date{}

\usepackage{float}
\begin{document}

\maketitle

\singlespacing

\begin{abstract}
Distance-based classifiers are very popular, and the Euclidean distance is one of the most commonly used metrics in distance-based classifiers. However, classifiers based on the Euclidean distance often suffer in high-dimensional setups due to issues such as distance concentration, violation of neighborhood structures, and the presence of hubs. In high-dimension, low-sample-size (HDLSS) situations, a data-driven semi-metric called the Mean Absolute Difference of Distances (MADD) is known to circumvent these issues. But one major problem with MADD is that its computational complexity increases quadratically with the training sample size. As a result, the application of MADD becomes computationally challenging for big datasets that have both a high dimension as well as a large number of observations. In this paper, we propose a scalable version of MADD that significantly reduces its computational complexity while retaining its advantages. This speed-up is achieved by selecting a representative set during the computation of MADD. Further speed-ups are achieved by using the idea of Random Fourier Features, particularly when the sample size is very large. We establish that our proposed methods achieve performances similar to MADD but only at a fraction of its computing time, both theoretically as well as numerically. Our approach broadens the scope of MADD, allowing its use to big-data with a very large number of observations.

\medskip
\noindent
\textbf{Keywords:} Cross-validation, Determinantal Point Process, $\mat L$-ensemble, Random Fourier Features, Scalability.
\end{abstract}

\section{Introduction}
\label{sec:intro}

In a classification problem, our goal is to assign one of $J (\ge 2)$ labels to an observation based on a collection of labelled observations (called a training sample). These problems form one of the cores of modern machine learning, and they draw significant attention from the entire scientific community. There are several classification techniques available in the literature. Among them, distance-based classifiers have been extensively used due to their simplicity and interpretability. This type of classifiers assign a label to a data point based on its proximity to the training observations. The proximity is measured by some distance between the observations, and the most commonly used distance is the Euclidean distance \citep{duda2001pattern,hastie2009elements}.

Although Euclidean-distance-based classifiers work well in low dimensions, their performance often deteriorates when the dimension becomes high. Consider a two-class classification problem (i.e., $J=2$), where the $j$-th population has mean $\parvec\mu_j$ and variance $\parvec\Sigma_j$, $j=1,2$. Let $\vec X_1,\vec X_1^\prime$ and $\vec X_2,\vec X_2^\prime$ be indepdent and identically distributed (i.i.d.) $d$-dimensional random vectors from class-1 and class-2, respectively. Then, under some regularity conditions, the following concentrations of Euclidean distances occur:

\begin{align*}
\frac{1}{d}\|\vec X_1 - \vec X_1^\prime\|^2 \overset{P}{\to} 2\sigma_1^2,\; \frac{1}{d}\|\vec X_2 - \vec X_2^\prime\|^2 \overset{P}{\to} 2\sigma_2^2, \text{ and } \frac{1}{d}\|\vec X_1 - \vec X_2\|^2 \overset{P}{\to} \sigma_1^2 + \sigma_2^2 + \nu^2 \quad \text{as } d \to \infty,
\end{align*}
where $\sigma_j^2 = \lim_{d \to \infty} \mathrm{trace}(\parvec\Sigma_j)/d$, $j=1,2$, and $\nu^2 = \lim_{d \to \infty} \|\parvec\mu_1-\parvec\mu_2\|^2/d$ \citep[][]{hall2005geometric,chan2009scale}. Note that $\nu^2$ quantifies the location difference between the two populations, while $\sigma_1^2$ and $\sigma_2^2$ quantify the scales of the two populations, respectively. Now, it is easy to see that if $\nu^2 < \sigma_1^2 - \sigma_2^2$, then $\|\vec X_1 - \vec X_1^\prime\|$ becomes larger than $\|\vec X_1 - \vec X_2\|$ with probability tending to $1$. In other words, when the scale difference between the two populations becomes larger than their location difference, the Euclidean distance between two observations from the same class becomes larger than the Euclidean distance between two observations from different classes. This violation of neighborhood structure can severely impact the performance of classifiers based on the Euclidean distance, as shown by \cite{hall2005geometric,chan2009scale,pal2016high,roy2022generalizations}. Moreover, the distance concentration phenomenon results in the occurrences of hubs, which further affect the performance of Euclidean-distance-based classifiers \citep{radovanovic2010hubs,tomavsev2014hubness}.

To overcome these issues with the Euclidean distance, \cite{pal2016high} proposed a data-driven semi-metric called the Mean Absolute Difference of Distances or MADD. For $j=1,\ldots,J$, let $\sp X_j = \{\vec X_{j1},\ldots,\vec X_{jn_j}\}$ denote the training sample (of size $n_j$) from class-$j$, and let $\sp X = \cup_{j=1}^J \sp X_j$ denote the entire training sample (of size $n=\sum_{j=1}^J n_j$). Then, MADD between two points $\vec u$ and $\vec v$ is defined as
\begin{align}\label{eq:MADD}
\rho(\vec u,\vec v) := \frac{1}{|\sp X \setminus \{\vec u, \vec v\}|} \sum_{\vec z \in \sp X \setminus \{\vec u,\vec v\}} \Big| \|\vec u - \vec z\| - \|\vec v - \vec z\| \Big|.
\end{align}
Here, for a set $S$, $|S|$ denotes its cardinality. It can be shown that under the same set of regularity conditions as mentioned above, $\rho(\vec X_1,\vec X_1^\prime)/\sqrt{d}$ and $\rho(\vec X_2,\vec X_2^\prime)/\sqrt{d}$ converge to $0$, whereas $\rho(\vec X_1,\vec X_2)/\sqrt{d}$ converges to a constant which is strictly positive as long as either $\nu^2 > 0$ or $\sigma_1^2 \ne \sigma_2^2$ \citep{pal2016high}. In other words, when the underlying populations differ in either their locations or their scales, MADD maintains the correct neighborhood structure with high probability. Consequently, classifiers based on MADD outperform those based on the Euclidean distance, especially in high-dimensional situations. This has been well articulated in \cite{pal2016high,roy2022generalizations}. MADD has also been successfully used in two-sample testing problems \citep{sarkar2020some} and clustering \citep{sarkar2019perfect} with high-dimension, low-sample-size data.

Despite the advantages of MADD, its computational complexity poses a significant challenge, especially when the sample size is large. To see this, notice that the calculation of MADD between two observations requires $\bigO(nd)$ operations, compared to $\bigO(d)$ for the Euclidean distance. Moreover, in a classification problem, we need to compute these pairwise distances for all the training sample observations, making the computing time even higher. For instance, consider the nearest neighbor (NN) classifier. In order to classify an observation using the NN classifier, we need to compute the distance of that observation from all the training sample observations. For MADD, this requires computations of the order $\bigO(n^2d)$, which can become very challenging, and sometimes even prohibitive. To demonstrate this, in Table~\ref{tab:Introduction}, we report the computing times for classifying  $5000$ test observations (with $d=100$) using the NN classifier with MADD (henceforth, NN-MADD) for a range of training sample sizes. For comparison, we also report the corresponding computing times for the NN classifier with the Euclidean distance (henceforth, NN). Clearly, the computing time for NN-MADD increases exponentially with the training sample size. For $n=16384$, the classification takes more than $6.5$ hours to complete. Datasets with such large number of training samples are quite common now-a-days, e.g., with large-scale gene expressions or high-resolution images. 

\begin{table}[t!]
\centering
\caption{Computing times (in seconds) for NN classifier with the Euclidean distance and MADD for different training sample sizes ($n$). The experiments were conducted on a system with AMD Ryzen 7 CPU @ 3.20 GHz, 16 GB RAM, an RTX 30-series GPU, and Windows 11 (64-bit) OS.}
\label{tab:Introduction}
\begin{tabular}{l|rrrrrr} \toprule
Training sample size ($n$) &  $512$ & $1024$ & $2048$ & $4096$ & $8192$ & $16384$\\ \hline
NN with Euclidean distance & $0.80$ & $1.50$ & $2.89$ & $5.73$ & $11.39$ & $22.52$ \\
NN with MADD & $19.89$ & $81.26$ & $357.108$ & $1249.31$ & $4758.40$ & $24003.98$ \\\bottomrule
\end{tabular}
\end{table}

To address this issue, we revisit the theoretical behavior of MADD. As mentioned above, for two observations $\vec U$ and $\vec V$ from the same population, $\rho(\vec U,\vec V)/\sqrt{d}$ converges to $0$ as $d \to \infty$. In fact, each summand in \eqref{eq:MADD} scaled by $\sqrt{d}$ converges to $0$ as $d \to \infty$. On the other hand, if $\vec U$ and $\vec V$ are from different populations, and the populations differ in their locations and/or scales (in the sense mentioned above), then some of the summands in \eqref{eq:MADD} scaled by $\sqrt{d}$ converge to a strictly positive quantity. This ensures that MADD is able to discriminate between populations which differ in their locations and/or scales. However, in order to achieve this separation, we do not necessarily need to consider all the summands, or equivalently all the $\vec Z$'s in $\sp X \setminus \{\vec U, \vec V\}$, while defining \eqref{eq:MADD}. It is enough to select a few observations which ensure that the sum is positive (asymptotically, after proper scaling) for two observations $\vec U$ and $\vec V$ from different populations.

Based on the above observation, we propose a scalable version of MADD by redefining it using a small representative set of observations. In order to retain the advantages of MADD, the representative set needs to capture the landscape of the entire training sample. We achieve this by employing a determinantal point process \citep[DPP,][]{kulesza2012determinantal} on the training sample. The details of our proposal are given in the next section. It is observed that the proposed scalable MADD is able to achieve superior performance even when the representative set is of a small size. However, the choice of the number of representative observations is crucial in the performance of the classifier, and we propose a data-adaptive way of choosing this number. We also utilize the notion of random Fourier features (RFF) to further speed up our procedure, particularly for massive/huge datasets.

The paper is organized as follows. We present our methodology in Section~\ref{sec:methodology}. In the same section, we also propose a novel incremental cross-validation technique to select the number of representative samples in a data-adaptive manner. An extensive simulation study is conducted in Section~\ref{sec:Other classifiers} to compare the performance of the proposed classifier with existing methods. In Section~\ref{sec:RFF}, we define the RFF-based speed-up of the proposed method, which can handle huge number of samples. A scalable version of a generalized version of MADD is proposed in Section~\ref{sec:gmadd}. In Section~\ref{sec:Real_Data}, we demonstrate the utility of the proposed method on several benchmark datasets. The theoretical behavior of the proposed classifier is studied in Section~\ref{sec:Theorem}. A brief summary of the work and some concluding remarks are given in Section~\ref{sec:Conclusion}. Implementation details of the proposed method, proofs of mathematical results, and some additional numerical results are provided in the appendices. 

\section{Proposed Method\label{sec:methodology}}

To motivate the idea behind our method, we revisit the definition of MADD (Eq.~\eqref{eq:MADD}). For two random vectors $\vec U$ and $\vec V$, we have
\begin{align*}\
d^{-1/2} \rho(\vec U, \vec V) = \frac{1}{|\sp X \setminus \{\vec U, \vec V\}|} \sum_{j=1}^J \sum_{\vec Z \in \sp X_j: \vec Z \ne \vec U, \vec V} d^{-1/2} \Big| \|\vec U - \vec Z\| - \|\vec V - \vec Z\| \Big|.
\end{align*}
If $\vec U$ and $\vec V$ are from the same population, then all the summands in the right converge to $0$, resulting in $\rho(\vec U,\vec V)/\sqrt{d}$ to converge to $0$. In fact, for this convergence to hold, it is enough to consider a single observation $\vec Z$ from the training sample which is different from $\vec U$ and $\vec V$. The more interesting case is when $\vec U$ and $\vec V$ are from different populations. In this case, each summand within the inner-sum on the right converges to the same non-negative quantity. The sum of these quantities (and hence, the overall sum) is positive if and only if the underlying populations differ in their locations and/or scales, i.e, $\nu_{jj^\prime}^2 > 0$ and/or $\sigma_j^2 \ne \sigma_{j^\prime}^2$ \citep{pal2016high,sarkar2019perfect}. Here, $\nu_{jj^\prime}^2 = \lim_{d \to \infty} \|\parvec\mu_j - \parvec\mu_{j^\prime}\|^2/d$ and $\sigma_j^2 = \lim_{d \to \infty} \mathrm{trace}(\parvec\Sigma_j)/d$ are as defined before. Therefore, for $\rho(\vec U,\vec V)/\sqrt{d}$ to converge to a positive quantity, it is enough to consider one observation from each class other than $\vec U$ and $\vec V$. In essence, the class separation property of MADD remains intact if we redefine it using a representative set consisting of at least two observations from the training samples of each class.

More generally, let $\sp X^\ast$ be a collection of representative observations from the training data, i.e., $\sp X^\ast \subset \sp X$. Using this representative set, we re-define MADD as
\begin{align}\label{eq:MADD_scalable}
\rho_{\rm sc}(\vec u,\vec v) := \frac{1}{|\sp X^\ast \setminus \{\vec u, \vec v\}|} \sum_{\vec z \in \sp X^\ast \setminus \{\vec u,\vec v\}} \Big| \|\vec u - \vec z\| - \|\vec v - \vec z\| \Big|.
\end{align}
If $\sp X^\ast$ consists of $k$ observations, then the calculation of $\rho_{\rm sc}$ between two points requires a computation of the order $\bigO(kd)$, compared to $\bigO(nd)$ for the usual MADD. Similarly, NN classification using $\rho_{\rm sc}$ requires $\bigO(knd)$ computations compared to $\bigO(n^2d)$ for the usual MADD. When $k$ is significantly smaller than $n$, this modified MADD results in a substantial decrease in the computational cost. Therefore, we call $\rho_{\rm sc}$ the \emph{scalable version of MADD} (in short, MADD$_\mathrm{sc}$).

\subsection{Selection of Representative Observations}

As we noted above, the theoretical advantages of MADD can be retained by considering as small as $2$ observations from each class as representatives (i.e., with $k=2J$). The practical performance of the method, however, depends on the choice of the representative set, and a judicial choice is needed to achieve good performance. One can use popular methods like condensed nearest neighbors \citep[CNN,][]{hart1968condensed} or reduced nearest neighbors \citep[RNN,][]{gates1972reduced} for this purpose. In fact, \cite{pal2016high} suggested their use in a similar but different context. Both CNN and RNN methods perform repeated nearest-neighbor searches over the training set, requiring significant computational times. Instead, one can use simple random sampling without replacement \citep[SRSWOR,][]{cochran1977sampling} -- a foundational method for selecting a subset of units -- where every unit has an equal probability of being selected. However, SRSWOR completely disregards any intrinsic structure present in the data, and may fail to capture its diversity.

To demonstrate this, we consider two binary classification problems with bivariate populations. In Example~A, the two competing populations have the bivariate normal distribution $N_2(\vec 0, 3 \mat I)$ and the standard bivariate $t$ distribution with $3$ degrees of freedom, respectively. In Example~B, the populations are unequal mixtures of bivariate normal distributions. For the first population, we take a mixture of $N_{2}((0,5)^\top,\,2 \mat I)$ and $N_{2}((0,-5)^\top,\,\mat I)$ with mixing proportions $0.25$ and $0.75$ respectively, while the second population is a mixture of $N_{2}((5,0)^\top,\,\mat I)$ and $N_{2}((-5,0)^\top,\,2\mat I)$ with mixing proportions $0.75$ and $0.25$ respectively. For both of these examples, we generate $500$ observations from the two competing populations. Then, we employ SRSWOR to select $50$ representative observations from these $500$ observations. In Figure~\ref{fig:Intro}, we show the scatter plots of the original set of $500$ observations and the sample of $50$ observations selected using SRSWOR, separately from both the populations. Clearly, the representative samples selected using SRSWOR miss intricate structures present in the data, especially when the underlying populations are multimodal, and the data is heterogeneous.

\begin{figure}[!t]
	\centering
	\begin{tabular}{c c c c c}
	\multicolumn{5}{c}{(a) Example~A} \\ [5pt]
	~~~~Original data && ~~~~SRSWOR sample && ~~~~DPP sample \\
	\includegraphics[width=0.23\linewidth]{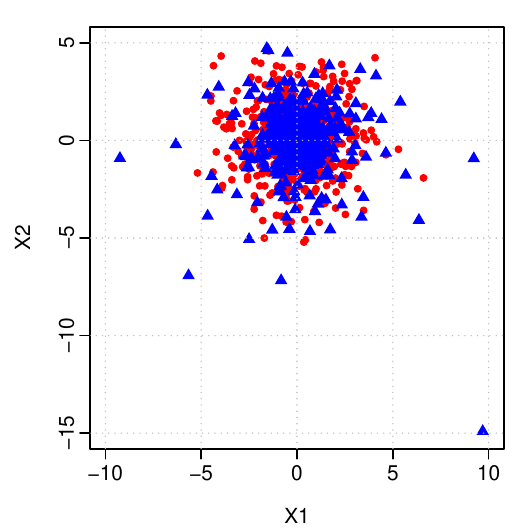} &&
	\includegraphics[width=0.23\linewidth]{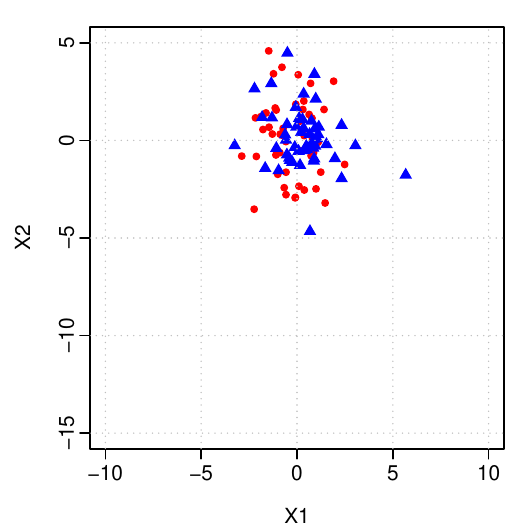} &&	
	\includegraphics[width=0.23\linewidth]{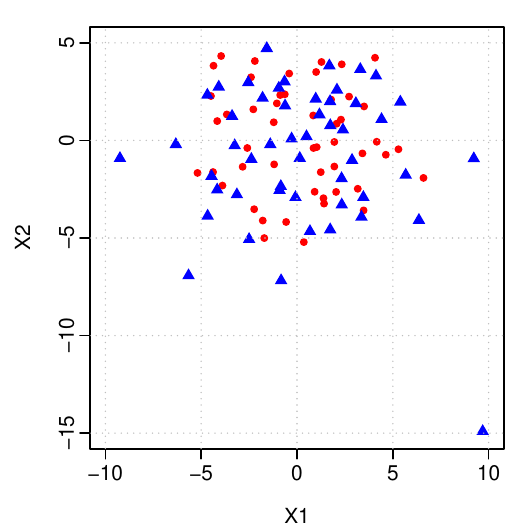} \\ [3pt]
	\multicolumn{5}{c}{(b) Example~B} \\ [5pt]
	~~~~Original data && ~~~~SRSWOR sample && ~~~~DPP sample \\
	\includegraphics[width=0.23\linewidth]{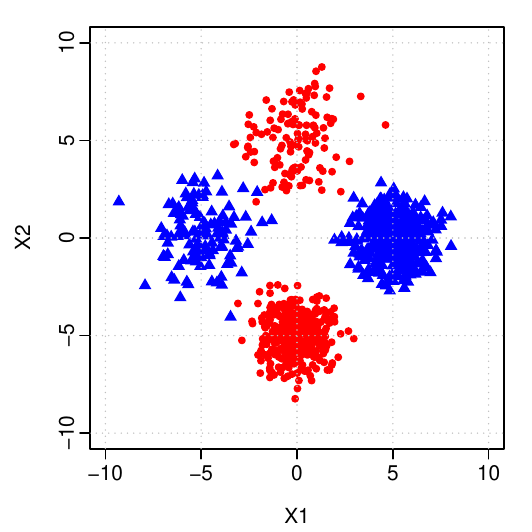} &&
	\includegraphics[width=0.23\linewidth]{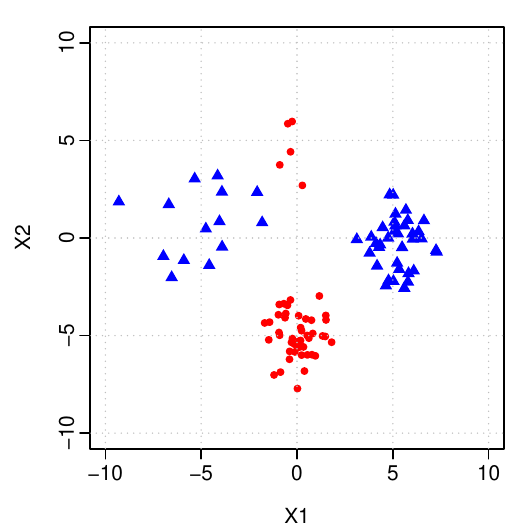} &&	
	\includegraphics[width=0.23\linewidth]{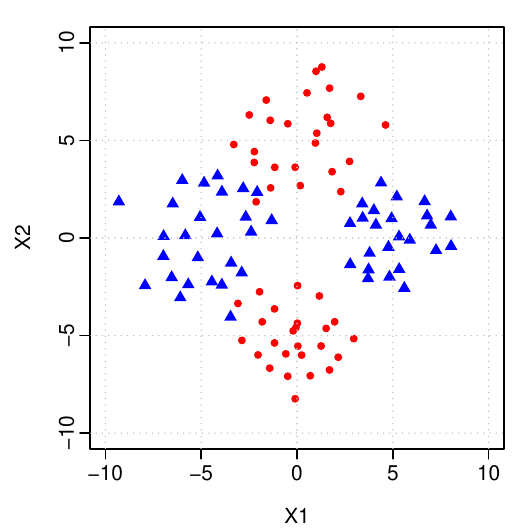}
	\end{tabular}

    \caption{Scatter plots of $500$ observations from the two populations (left), and $50$ observations selected using simple random sampling without replacement (SRSWOR, middle) and determinantal point process (DPP, right) in Examples~A and B. Here, \textcolor{red}{red $\bullet$} (respectively, \textcolor{blue}{blue $\scriptstyle\blacktriangle$}) represents observations from Class 1 (respectively, Class 2).\label{fig:Intro}}
\end{figure}

Thus, SRSWOR samples are not suitable as representative observations for defining MADD$_{\rm sc}$. To appreciate this, recall the original definition of MADD (Eq.~\eqref{eq:MADD}). In order to have the same information as MADD, the representative set should be chosen in such a way that all the possible variations in the pairwise distances (as captured by the summands in Eq.~\eqref{eq:MADD}) are retained. This is ensured if the chosen representative observations resemble the structure of the original data, which may not be the case with an SRSWOR sample.

Based on the above discussion, it is clear that our representative set needs to capture the structure of the original data. Also, to address our computational concerns, the selection strategy should be fast to implement. In this article, we employ Determinantal Point Process \citep[DPP,][]{kulesza2012determinantal} to select the representative set. The representative observations ($50$ from each class) selected using DPP in Examples~A and B are depicted in the right panels of Figure~\ref{fig:Intro}. Clearly, DPP samples exhibit a much better representation of the original data compared to SRSWOR samples. It selects fewer observations from denser regions and more observations from sparse regions, essentially filling the entire data cloud. We describe the representative set selection technique below, starting with a brief description of DPP.

Point processes provide probabilistic models for random subsets of a finite set. More precisely, for a finite set $\mathcal A = \{1,\ldots,N\}$, a point process defines the distribution of a random element $\Psi$ taking values among all possible subsets of $\mathcal A$. For determinantal point processes (DPPs), the distribution is governed by the determinant of submatrices of a real, symmetric matrix. In particular, given an $N \times N$ real, symmetric matrix $\mat K = ((k_{ij}))$, $\pr(\Psi \supseteq A) = \det(\mat K_A)$ for $A \subseteq \mathcal A$, where $\mat K_A = ((k_{ij}))_{i,j \in A}$ is the submatrix of $\mat K$ with indices restricted to $A$. Consequently, for two indices $1\le i \ne j \le N$, we get $\pr(\Psi \ni \{i,j\}) = k_{ii}k_{jj} - k_{ij}^2$. Therefore, for random subsets selected using DPP, the probability of simultaneous inclusion of $i,j$ becomes low if $k_{ij}$ is high -- introducing a repulsion effect in the selected subset \citep{kulesza2012determinantal}.

For a DPP to be a valid probabilistic model, the matrix $\mat K$ needs to satisfy $0 \le \det(\mat K_A) \le 1$ for all $A \subseteq \mathcal A$, which is restrictive. Therefore, in practical implementations of DPPs, a related concept known as $\mat L$-ensemble is used. For an $\mat L$-ensemble, we have $\pr(\Psi = A) \propto \det(\mat L_A)$ for some matrix $\mat L = ((l_{ij}))$. This gives us a valid probabilistic model as long as $\mat L$ is positive semi-definite. Moreover, an $\mat L$-ensemble is a DPP with $\mat K = \mat L(\mat L+\mat I)^{-1}$ \citep[cf.][for more details]{kulesza2012determinantal}. More importantly, similar to DPP, $\mat L$-ensemble assigns lower probabilities to simultaneous inclusion of indices with high $l_{ij}$ values. In particular, if $\mat L$ is a similarity matrix, then subsets selected using the corresponding $\mat L$-ensemble has a low probability of containing highly similar elements. This repulsion among similar elements results in diversity in the selected subset.

In the present context, we are interested in selecting representative sets from each of the competing classes $\sp X_j = \{\vec X_{j1},\ldots,\vec X_{jn_j}\}$, $j=1,\ldots,J$. For each $j=1,\ldots,J$, we can select a subset of $\{1,\ldots,n_j\}$ using an $\mat L$-ensemble. If the selected subset is $\{\pi_1,\ldots,\pi_{n_j^{\ast}}\}$, then the representative set for class-$j$ would be $\sp X_j^\ast = \{\vec X_{j\pi_1},\ldots,\vec X_{j\pi_{n_j^\ast}}\}$. An important issue here is the choice of the matrix $\mat L$. Since we want to capture the diversity among the pairwise Euclidean distances, a natural choice is to use the Gaussian radial basis kernel: $\mat L^{(j)} = ((l_{ii^\prime}^{(j)}))$ for class-$j$ with $l^{(j)}_{ii^\prime} = \exp\{-\|\vec X_{ji} - \vec X_{ji^\prime}\|^2/(2\sigma_j^2)\}$, where $\sigma_j$ is a bandwidth parameter \citep{Bishop2006PRML,hastie2009elements}. In our implementation, we use the highly popular median heuristic: $\sigma_j = \operatorname{median}\{\|\vec X_{ji} - \vec X_{ji^\prime}\|: 1 \le i < i^\prime \le n_j\}$ \citep{garreau2018median}. This choice gives the right scaling in high dimensions \citep{beyer1999nearest}. 

With the Gaussian kernel, the similarity values lie between $0$ and $1$, with $l_{ii}^{(j)} = 1$ for $i=1,\ldots,n_j$. Also, $l_{ii^\prime}^{(j)}$ is a decreasing function of the Euclidean distance $\|\vec X_{ji} - \vec X_{ji^\prime}\|$. Therefore, a subset of $\sp X_j$ selected using $\mat L^{(j)}$-ensemble tend to contain observations which are more distant from each other. However, for a subset selected using $\mat L$-ensemble, the size of the subset is random. This is problematic for us, since we want to avoid both the situations where the selected subset contains either too few (and hence unrepresentative) or too many (leading to higher computing cost) observations. To circumvent this issue, we utilize $k$-DPP \citep[cf.][]{kulesza2012determinantal}, which selects a subset of exactly $k$ points while maintaining repulsion among the selected points. The algorithm for selecting an $k$-DPP sample is given in the Appendix (see Algorithm~\ref{alg:k-DPP}).

After selecting representative observations from each class, we construct the representative set $\sp X^\ast = \sp X_1^\ast \cup \cdots \cup \sp X_J^\ast$. To classify an observation $\vec Z$, we compute the MADD$_{\rm sc}$ values between $\vec Z$ and all the training observations. Then, we classify the observation using the nearest neighbor method with $\rho_{\rm sc}$ as the distance. The entire procedure of nearest neighbor classification using MADD$_{\rm sc}$ (referred to as NN-MADD$_{\rm sc}$ classification) is formalized in Algorithm~\ref{alg:SMNN}.

\begin{algorithm}[t!]
\caption{Nearest Neighbor Classification Using Scalable MADD\label{alg:SMNN}}
	\begin{algorithmic}[1]
	\Require $\sp X_1,\ldots,\sp X_J$ -- Training sample, where $\sp X_j = \{\vec X_{j1},\ldots,\vec X_{jn_j}\}$.\newline
	$k_1,\ldots,k_J$ -- Number of representative samples to be selected.\newline
	$\vec Z$ -- Test observation.
	
	\Ensure $\delta_{\rm sc}(\vec Z)$ -- The predicted class label for $\vec Z$.
	
	\bigskip

	\For {$j \in \{1,\dots, J\}$}
	\State Find $\sigma_j = \operatorname{median}\{\|\vec X_{ji} - \vec X_{ji^\prime}\|: 1 \le i < i^\prime \le n_j\}$.
	\State Compute $l^{(j)}_{ii^\prime} = \exp\{-\|\vec X_{ji} - \vec X_{ji^\prime}\|^2/(2\sigma_j^2)\}$, $i,i^\prime = 1,\ldots,n_j$. Construct $\mat L^{(j)} = ((l^{(j)}_{ii^\prime}))$.
	\State Select a subset $\{\pi_1,\ldots,\pi_{k_j}\}$ of $\{1,\ldots,n_j\}$ using $k_j$-DPP (Algorithm~\ref{alg:k-DPP}) with the matrix $\mat L^{(j)}$.
	\State The representative observations from class-$j$ are $\sp X_j^\ast = \{\vec X_{j\pi_1},\ldots,\vec X_{j\pi_{k_j}}\}$.
	\EndFor
	\State Define $\sp X^\ast = \sp X_1^\ast \cup \cdots \cup \sp X_J^\ast$ to be the representative set.
	\State Compute $\rho_{\rm sc}(\vec Z,\vec X_{ji})$, $j=1,\ldots,J$, $i=1,\ldots,n_j$ using \eqref{eq:MADD_scalable}.
	\State For $j=1,\ldots,J$, compute $\rho_{\rm sc}(\vec Z,\sp X_j) = \min_{i=1,\ldots,n_j} \rho_{\rm sc}(\vec Z,\vec X_{ji})$.
	\State Assign $\delta_{sc}(\vec Z) = \argmin_{j=1,\ldots,J} \rho_{\rm sc}(\vec Z,\sp X_j)$.
	\end{algorithmic}
\end{algorithm}

\subsection{Choice of the Number of Representative Observations\label{sec:cross-validation}}

With the method described above, we are left with the choice of the number of representative observations from each class, viz., $k_1,\ldots,k_J$, and this choice plays a crucial role on the performance of the classifier. To demonstrate this, as well as the effectiveness of the proposed NN-MADD$_{\rm sc}$ classifier, we consider the following binary classification problems with varying sample sizes ($n$) and dimensions ($d$). These are generalizations of Examples~A and B defined earlier.

\begin{ex}\label{example1}
The two population distributions are $N_d(\vec 0,\,3\mat I)$ and the standard multivariate $t$ distribution with $3$ degrees of freedom.
\end{ex}

\begin{ex}\label{example2}
The two population distributions are unequal mixtures of multivariate normal distributions. The observations in class-$1$ are from $0.25\,N_d(\parvec\mu_{11},\,2\mat I) + 0.75\,N_d(\parvec\mu_{12},\,\mat I)$, while the observations from class-$2$ are from $0.75\,N_d(\parvec\mu_{21},\,\mat I) + 0.25\,N_d(\parvec\mu_{22},\,2\mat I)$. We use $\parvec\mu_{11}=\eta(0,1,0,1,\ldots)^\top$, $\parvec\mu_{12}=\eta(0,-1,0,-1,\ldots)^\top$, $\parvec\mu_{21}=\eta(1,0,1,0,\ldots)^\top$ and $\parvec\mu_{22}=\eta(-1,0,-1,0,\ldots)^\top$, with $\eta = 0.3$.
\end{ex}

For each example, we consider two different setups: (i) fixing $d=100$, we vary $n=1000,2000$ and $4000$, and (ii) fixing $n=1000$, we vary $d=50,150$ and $250$; generating an equal number of observations from the two classes. For each scenario, we compute the misclassification rates of NN-MADD$_{\rm sc}$ with $k_1=k_2=50,100,200$ on a test sample of size $5000$ (equal number of observations from the two classes). The averages of these misclassification rates over $25$ replications are reported in Table~\ref{tab:madd_MADDsc}, along with the corresponding standard errors. The misclassification rates of NN-MADD are reported for comparison. To demonstrate the scalability of our proposed method, we also report the average computing times of the classifiers.

\begin{table}[t!]
\centering
\caption{Average misclassification rates (in $\%$) and average computing times (in seconds) of NN-MADD and NN-MADD$_{\rm sc}$ with different numbers of representative observations ($k_1=k_2$) in Examples~\ref{example1}~and~\ref{example2}. The standard errors of the misclassification rates are reported in a smaller font within parentheses.\label{tab:madd_MADDsc}}
\small
\setlength{\tabcolsep}{5pt}
\begin{adjustbox}{max width=\textwidth}
\begin{tabular}{l|rr|rr|rr|rr}
\toprule
& \multicolumn{2}{c|}{} 
& \multicolumn{6}{c}{NN-MADD$_{\rm sc}$} \\
\cmidrule(lr){4-9}
& \multicolumn{2}{c|}{NN-MADD} 
& \multicolumn{2}{c|}{$k_1=k_2=50$} 
& \multicolumn{2}{c|}{$k_1=k_2=100$} 
& \multicolumn{2}{c}{$k_1=k_2=200$} \\
\cmidrule(lr){2-9}
& \multicolumn{1}{l}{Error} & Time  
& \multicolumn{1}{l}{Error} & Time  
& \multicolumn{1}{l}{Error} & Time  
& \multicolumn{1}{l}{Error} & Time  \\
\midrule

\multicolumn{9}{c}{\textbf{Example~\ref{example1}}} \\
\midrule
\multicolumn{9}{l}{$\mathbf{d = 100}$} \\
\midrule
$n = 1000$ &
14.18 \se{0.17} & 76.43 &
14.39 \se{0.24} & 5.70 &
14.19 \se{0.16} & 11.83 &
14.24 \se{0.18} & 26.62 \\

$n = 2000$ &
14.82 \se{0.10} & 319.95 &
14.85 \se{0.14} & 14.05 &
14.64 \se{0.14} & 24.90 &
14.73 \se{0.12} & 56.87 \\

$n = 4000$ &
14.14 \se{0.13} & 1191.63 &
14.39 \se{0.11} & 43.51 &
14.22 \se{0.12} & 66.38 &
14.20 \se{0.13} & 125.46 \\

\midrule
\multicolumn{9}{l}{$\mathbf{n = 2000}$} \\
\midrule
$d = 50$ &
18.61 \se{0.11} & 331.50 &
18.79 \se{0.11} & 13.35 &
18.85 \se{0.14} & 23.73 &
18.66 \se{0.13} & 55.26 \\

$d = 150$ &
12.11 \se{0.12} & 342.81 &
12.10 \se{0.13} & 13.61 &
12.15 \se{0.13} & 24.54 &
12.37 \se{0.13} & 56.67 \\

$d = 250$ &
9.95 \se{0.09} & 353.13 &
9.98 \se{0.14} & 13.77 &
9.93 \se{0.14} & 25.31 &
10.16 \se{0.12} & 56.48 \\

\midrule
\multicolumn{9}{c}{\textbf{Example~\ref{example2}}} \\
\midrule

\multicolumn{9}{l}{$\mathbf{d = 100}$} \\
\midrule
$n = 1000$ &
20.23 \se{0.17} & 67.04 &
25.10 \se{0.24} & 7.06 &
22.60 \se{0.23} & 15.12 &
21.36 \se{0.22} & 33.35 \\

$n = 2000$ &
19.28 \se{0.13} & 339.54 &
24.65 \se{0.22} & 16.22 &
22.19 \se{0.18} & 30.11 &
20.88 \se{0.19} & 68.16 \\

$n = 4000$ &
18.44 \se{0.12} & 1251.36 &
23.15 \se{0.30} & 42.19 &
21.19 \se{0.14} & 69.61 &
19.97 \se{0.15} & 138.27 \\

\midrule
\multicolumn{9}{l}{$\mathbf{n = 2000}$} \\
\midrule
$d = 50$ &
30.92 \se{0.15} & 283.42 &
32.83 \se{0.29} & 15.91 &
32.03 \se{0.19} & 29.48 &
31.69 \se{0.21} & 67.45 \\

$d = 150$ &
12.33 \se{0.12} & 316.25 &
18.58 \se{0.23} & 16.34 &
15.65 \se{0.18} & 30.41 &
13.84 \se{0.14} & 68.22 \\

$d = 250$ &
5.25 \se{0.08} & 388.45 &
10.74 \se{0.17} & 16.38 &
8.16 \se{0.12} & 30.38 &
6.70 \se{0.12} & 68.79 \\

\bottomrule
\end{tabular}
\end{adjustbox}
\end{table}

Table~\ref{tab:madd_MADDsc} shows that the misclassification rates of NN-MADD$_{\rm sc}$ is very close to those of NN-MADD even with a small number of representative observations. On average, the misclassification rate of NN-MADD$_{\rm sc}$ remains within $1\%$ of that of NN-MADD classifier when $(k_1,k_2)$ is chosen appropriately. In Example~\ref{example1}, the misclassification rates are comparable for all choices of $(k_1,k_2)$, with a small edge at $k_1=k_2=100$; whereas in Example~\ref{example2}, NN-MADD$_{\rm sc}$ achieves the best performance with $k_1=k_2=200$. Moreover, the computing time of NN-MADD$_{\rm sc}$ is significantly smaller than that of NN-MADD, particularly for smaller values of $(k_1,k_2)$. Thus, the choice of $k_1,\ldots,k_J$ plays an important role in the performance as well as computational cost of the NN-MADD$_{\rm sc}$ classifier.

In essence, $k_1,\ldots,k_J$ are \emph{hyper-parameters} of our method, and a natural way to select them in a data-dependent way is via \emph{cross-validation} \citep{duda2001pattern,hastie2009elements}. In particular, for $V$-fold cross-validation with a set $\mathcal K$ of candidate hyper-parameters, we would randomly split the training data into $V$ equal parts; hold out one part as \emph{validation data} and use the remaining as \emph{training data} to construct NN-MADD$_{\rm sc}$ with $(k_1,\ldots,k_J) \in \mathcal K$; and compute the corresponding misclassification rate on the held out validation data. Finally, the cross-validation score would be calculated as the average of the $V$ misclassification rates, and the hyper-parameter leading to the lowest cross-validation score would be selected. However, with this usual cross-validation technique, for each fold and every choice of hyper-parameters, we would need to repeatedly select a set of representative observations using DPP and compute MADD$_{\rm sc}$ for all the observations in the training data, which would require heavy computations. To address this, we propose an \emph{incremental cross-validation} with significantly reduced computational cost. The particular structure of MADD$_{\rm sc}$ as well as DPP sampling plays an important role in the proposed strategy.

Suppose that $\mathcal K = \{\vec k_t := (k_{1t},\ldots,k_{Jt}): t=1,\ldots,T\}$ consists of $T$ candidate values for $(k_1,\ldots,k_J)$. Without loss of generality, we assume that $k_{j1} \le k_{j2} \le \cdots \le k_{jT}$ for $j=1,\ldots,J$. Thus, $\vec k_t \le \vec k_{t+1}$ for every $t=1,\ldots,T-1$, where the inequality holds coordinatewise. Now, if it was the case that the representative sets were ordered, i.e., $\sp X^{\ast}(\vec k_t) \subseteq \sp X^{\ast}(\vec k_{t+1})$ for every $t$, then
\begin{align}\label{eq:incremental_MADD_SC}
\rho_{\rm sc}^{(t+1)}(\vec u,\vec v) &= \frac{1}{|\sp X^{\ast}(\vec k_{t+1}) \setminus \{\vec u,\vec v\}|} \sum_{\vec z \in \sp X^{\ast}(\vec k_{t+1}) \setminus \{\vec u,\vec v\}} \Big|\|\vec u - \vec z\| - \|\vec v - \vec z\|\Big| \nonumber\\
&= \frac{1}{|\sp X^{\ast}(\vec k_{t+1}) \setminus \{\vec u,\vec v\}|} \left\{\sum_{\vec z \in \sp X^{\ast}(\vec k_t) \setminus \{\vec u,\vec v\}} \Big|\|\vec u - \vec z\| - \|\vec v - \vec z\|\Big| \right. \nonumber\\
&\kern40ex + \left.\sum_{\vec z \in \Delta\sp X^{\ast}(\vec k_{t+1}) \setminus \{\vec u,\vec v\}} \Big|\|\vec u - \vec z\| - \|\vec v - \vec z\|\Big|\right\} \nonumber\\
&= \frac{1}{|\sp X^{\ast}(\vec k_{t+1}) \setminus \{\vec u,\vec v\}|} \Bigg\{|\sp X^{\ast}(\vec k_t) \setminus \{\vec u,\vec v\}|\;\rho_{\rm sc}^{(t)}(\vec u,\vec v) \nonumber\\
&\kern40ex + \sum_{\vec z \in \Delta\sp X^{\ast}(\vec k_{t+1}) \setminus \{\vec u,\vec v\}} \Big|\|\vec u - \vec z\| - \|\vec v - \vec z\|\Big|\Bigg\},
\end{align}
where $\Delta\sp X^\ast(\vec k_{t+1}) = \sp X^\ast(\vec k_{t+1}) \setminus \sp X^\ast(\vec k_t)$ is the collection of additional observations in $\sp X^\ast(\vec k_{t+1})$ which are not present in $\sp X^{\ast}(\vec k_t)$. Thus, during cross-validation, we could calculate the MADD$_{\rm sc}$ values incrementally, starting with $\vec k_1$ and gradually increasing to $\vec k_T$, without the need to re-compute them repeatedly for each hyper-parameter.

Unfortunately, if we select usual DPP samples, then this ordering among the generated samples cannot be ensured. However, a careful look at the naive DPP sampling algorithm reveals two important steps: (i) selection of a potential direction set $V$ in which the eigenvector $\vec v$ is included with probability $\lambda/(\lambda+1)$ ($\lambda$ is the corresponding eigenvalue) and (ii) selection of the $i$-th observation with probability $\mathrm{pr}(i) = |V|^{-1} \sum_{\vec v \in V} (\vec v^\top \vec e_i)^2$, adjusting the set $V$ after each selection (cf. Algorithm~\ref{alg:DPP} in the Appendix). For our proposed cross-validation technique, we adjust these two steps. In particular, (i) we construct the set $V$ with the eigenvectors corresponding to the top $k$ values of $\lambda/(\lambda+1)$ (equivalently, the top $k$ values of $\lambda$) and (ii) we select the observations with the top $k$ values of $\mathrm{pr}(i) = |V|^{-1} \sum_{\vec v \in V} (\vec v^\top \vec e_i)^2$, adjusting the set $V$ after each step. This modified algorithm is given in the Appendix (see Algorithm~\ref{alg:k-DPP_modified}). As a consequence of this approach, the corresponding selected sets become naturally ordered.

Using the modified representative set selection technique described above, we propose the following incremental cross-validation technique. For each fold, we use the modified technique to select a subset of size $k_{jT}$ from $\sp X_{j,{\rm tr}}$ -- the training sample corresponding to Class-$j$, $j=1,\ldots,J$. If the selected sample is $\sp X^{\ast}_{j,{\rm tr}} = \{\vec X_{j1}^{\ast},\ldots,\vec X_{j,k_{jT}}^{\ast}\}$, then we define $\sp X^{\ast}_{j,{\rm tr}}(k_{jt}) = \{\vec X_{j1}^{\ast},\ldots,\vec X_{j,k_{jt}}^{\ast}\}$, $j=1,\ldots,J$, and $\sp X^{\ast}(\vec k_t) = \cup_{j=1}^J \sp X_{j,{\rm tr}}^{\ast}(k_{jt})$, $t=1,\ldots,T$. Consequently, $\Delta\sp X^{\ast}(\vec k_{t+1}) = \cup_{j=1}^J \big\{\vec X_{j,k_{jt}+1}^{\ast},\ldots,\vec X_{j,k_{j,t+1}}^{\ast}\big\}$. Using these, we calculate MADD$_{\rm sc}$ values sequentially according to Eq.~\eqref{eq:incremental_MADD_SC}. Unlike the usual cross-validation, here we select the representative set once and compute the MADD$_{\rm sc}$ values sequentially, saving a lot of computations.

\begin{table}[t!]
\centering
\caption{Misclassification rates (in \%) of the NN-MADD$_{\rm sc}$ classifier in Examples~\ref{example1} and \ref{example2} with the hyper-parameters chosen using the usual cross-validation and the proposed incremental cross-validation. The best observed misclassification rates over the range of hyper-parameters are also shown. The reported numbers are averages based on $25$ simulation runs. The corresponding standard errors are reported in a smaller font within parentheses.\label{tab:madd_MADDsc_CV}} 
\small
\begin{tabular}{rr|rrr|rrr}
\toprule
& & \multicolumn{3}{c|}{\textbf{Example~\ref{example1}}} 
& \multicolumn{3}{c}{\textbf{Example~\ref{example2}}} \\ [3pt]
$d$ & $n$  
& \multicolumn{1}{l}{Best} & \multicolumn{1}{l}{Usual CV} & \multicolumn{1}{l|}{Proposed CV}
& \multicolumn{1}{l}{Best} & \multicolumn{1}{l}{Usual CV} & \multicolumn{1}{l}{Proposed CV}\\
\midrule
100 & 1000 & 13.63 \se{0.16} & 14.17 \se{0.20} & 14.11 \se{0.18} & 20.79 \se{0.20} & 21.18 \se{0.25} & 21.11 \se{0.21} \\
100 & 2000 & 14.29 \se{0.11} & 14.75 \se{0.16} & 14.78 \se{0.11} & 20.30 \se{0.13} & 20.53 \se{0.16} & 20.35 \se{0.18}\\
100 & 4000 & 13.86 \se{0.10} & 14.17 \se{0.12} & 14.38 \se{0.13}  & 19.41 \se{0.11} & 19.66 \se{0.17} & 19.77 \se{0.11}\\
\hline
50  & 2000 & 18.24 \se{0.11} & 18.64 \se{0.16} & 18.79 \se{0.11} & 31.39 \se{0.16} & 31.94 \se{0.21} & 31.65 \se{0.22}\\
150 & 2000 & 11.61 \se{0.13} & 12.03 \se{0.13} & 12.46 \se{0.14} & 13.26 \se{0.14} & 13.27 \se{0.14} & 13.30 \se{0.13}\\
250 & 2000 & 9.50 \se{0.09} & 10.12 \se{0.13} & 9.99 \se{0.15} & 5.82 \se{0.07} & 5.93 \se{0.14} & 5.87 \se{0.10}\\
\bottomrule
\end{tabular}
\end{table}

To demonstrate the effectiveness of the proposed incremental cross-validation strategy, in Table~\ref{tab:madd_MADDsc_CV} we report the average misclassification rates for the NN-MADD$_{\rm sc}$ classifier (based on 25 replications) in Examples~\ref{example1} and \ref{example2} with the hyper-parameters selected using the proposed method. For Class-$j$, we use $k_{jt} = 2^{t-2} \sqrt{d} (n_j/n) \log{n_j}$ for $t=1,\ldots,5$, which takes into account both the dimension of the data as well as the training sample size. The choice is motivated by our theoretical results (cf.\ Remark~\ref{remark:representative_sample_size}). To facilitate comparison, we also report the average misclassification rates of NN-MADD$_{\rm sc}$ with the hyper-parameters selected using the usual cross-validation, and the averages of the lowest observed misclassification rates over the choice of hyper-parameters. It can be observed that the results based on the usual cross-validation and the proposed cross-validation are very close, and both of them are close to the best possible results (the differences never exceed $1\%$). To show the computational gain of the proposed method, we report the average runtimes of NN-MADD$_{\rm sc}$ with the usual and the incremental cross-validation for different training sample sizes in Table~\ref{tab:Time_MADD_MADD_sc}, along with the runtimes of NN-MADD. NN-MADD$_{\rm sc}$ with cross-validation is faster than NN-MADD for all the sample sizes, and the difference increases rapidly with $n$. The proposed cross-validation is significantly faster than the usual cross-validation, and the gain becomes more prominent with increasing $n$. At $n = 4096$, the computing time of NN-MADD$_{\rm sc}$ with the incremental cross-validation is almost half of the computing time of NN-MADD$_{\rm sc}$ with usual cross-validation, and almost one-third of that of NN-MADD.

\begin{table}[t!]
\centering
\caption{Average computing times (in seconds) for classifying $5000$ test observations using NN-MADD and NN-MADD$_{\rm sc}$ classifiers with fixed dimension $d=100$ and varying training sample sizes $n$. For NN-MADD$_{\rm sc}$, we separately report the runtimes when the hyper-parameters are chosen using the usual cross-validation and the proposed incremental cross-validation.\label{tab:Time_MADD_MADD_sc}}
\begin{tabular}{l|rrrrrr}\toprule
Training sample size ($n$) &  $512$ & $1024$ & $2048$ & $4096$ \\ \hline
NN-MADD  & $19.89$ & $81.26$ & $357.108$ & $1249.31$  \\
NN-MADD$_{\rm sc}$ with usual CV & 16.61 & 82.20 & 278.49 & 885.57 \\
NN-MADD$_{\rm sc}$ with incremental CV & 14.68 & 72.94 & 161.94 & 472.25\\\bottomrule
\end{tabular}
\end{table}

\section{Simulation Studies}
\label{sec:Other classifiers}

Here, we study the empirical performance of the proposed NN-MADD$_{\rm sc}$ classifier on seven simulated examples involving binary classification problems. Examples~\ref{example1}~and~\ref{example2} were introduced in Section~\ref{sec:cross-validation}. Examples~\ref{example3}--\ref{example7} are described below.

\begin{ex}\label{example3}
The two classes have multivariate normal distributions $N_d(\vec 0,\mat I)$ and $N_d(\mu\,\vec 1,\mat I)$, which differ only in their locations. We take $\mu = 0.4$.
\end{ex}

\begin{ex}\label{example4}
The two classes have multivariate normal distributions $N_d(\vec 0,\mat I)$ and $N_d(\vec 0,\sigma^2\,\mat I)$ differing only in their scales. We use $\sigma^2 = 1.5$.
\end{ex}

\begin{ex}\label{example5}
The two classes have multivariate normal distributions $N_d(\vec 0,\mat I)$ and $N_d(\mu\,\vec 1,\sigma^2\,\mat I)$ having different locations as well as scales. We take $\mu = 0.1$ and $\sigma^2 = 1.5$.
\end{ex}

\begin{ex}\label{example6}
Class-$1$ has distribution $0.5\,N_d(\parvec\mu_{11},\mat I) + 0.5\,N_d(\parvec\mu_{12},\mat I)$, while Class-$2$ has distribution $0.5\,N_d(\parvec\mu_{21},\mat I) + 0.5\,N_d(\parvec\mu_{22},\mat I)$, where $\parvec\mu_{11} = \vec 0$, $\parvec\mu_{12} = \eta(1,1,0,0,\ldots)^\top$, $\parvec\mu_{21} = \eta(1,0,0,0,\ldots)^\top$, and $\parvec\mu_{22} = \eta(0,1,0,0,\ldots)^\top$ with $\delta=3$. Here, only the first two coordinates of the data contain information regarding class separability, while the other coordinates are noise.
\end{ex}

\begin{ex}\label{example7}
Both the classes are equal mixtures of four multivariate normal distributions, each with dispersion matrix $\mat I$ and different mean vectors. For Class-$1$, the mean vectors are $\eta(0,1,0,1,\ldots)^\top$, $-\eta(0,1,0,1,\ldots)^\top$, $2\eta(1,1,1,\ldots)^\top$, and $-2\eta(1,1,1,\ldots)^\top$, while for Class-$2$ those are $\eta(1,0,1,0,\ldots)^\top$, $-\eta(1,0,1,0,\ldots)^\top$, $2\eta(1,-1,1,-1,\ldots)^\top$, and $-2\eta(1,-1,1,-1,\ldots)^\top$. We use $\eta=0.3$.
\end{ex}

For our study, we fix the dimension $d=100$ and vary the training sample size $n=1000, 2000, 4000$, with an equal number of observations from the two classes. The misclassification rates are calculated based on $5000$ test observations ($2500$ from each class). As before, we repeat each experiment $25$ times and report the average misclassification rates along with the corresponding standard errors. For the proposed NN-MADD$_{\rm sc}$ classifier, the hyper-parameters are chosen using the incremental cross-validation technique discussed in Section~\ref{sec:cross-validation} with $k_{jt} = 2^{t-3} \sqrt{d}(n_j/n)\log(n_j)$, $t=1,\ldots,5$, $j=1,\ldots,J$. The codes for implementation of our method are available at \url{https://github.com/Agy-code/Scalable-MADD}.

We perform two different sets of comparisons. First, we compare the performance of NN-MADD$_{\rm sc}$ with NN-MADD, the results of which are reported in Appendix~\ref{appendix: Simulation} (see Table~\ref{tab:appendix_simulation_MADD_sc}). The results show that the difference between the average misclassification rates of NN-MADD and NN-MADD$_{\rm sc}$ is never more than $1\%$ across all the simulation setups. As already noted, NN-MADD$_{\rm sc}$ is much faster compared to NN-MADD. The results assert the effectiveness of our proposed classifier as a replacement for NN-MADD when the sample size is large. 

In Table~\ref{tab:Compare}, we compare the performance of NN-MADD$_{\rm sc}$ with some state-of-the-art classifiers from the literature. In particular, we consider the Euclidean-distance based nearest neighbor classifier (NN), logistic regression with elastic-net regularization (GLMNET), classification tree (CART), random forest (RF), linear support vector machine (LSVM), and non-linear support vector machine with the radial basis function (NLSVM). All these classifiers are implemented in \texttt{R}. Specifically, we use the \texttt{class} library for NN classifier. For GLMNET, we use the \texttt{glmnet} library, with the tuning parameters $\alpha$ and $\lambda$ selected using cross-validation. CART is implemented using \texttt{rpart} library, with the complexity parameter selected as the value minimizing the internal cross-validation error before pruning. For RF, we use the \texttt{randomForest} package, with the splitting parameter \texttt{mtry} chosen by minimizing the out-of-bag error using the \texttt{tuneRF} function. For both LSVM and NLSVM, we use the \texttt{caret} package, which selects the hyper-parameters using cross-validation over a specified tuning grid.

\begin{table}[ht]
\centering
\caption{Average misclassification rates (in \%) of different classifiers in Examples~\ref{example1}--\ref{example7} with $d=100$ and varying $n$. The corresponding standard errors are reported in a smaller font within parentheses. The underlined numbers represent the best result in each simulation setup.\label{tab:Compare}}

\fontsize{9}{11}\selectfont
\setlength{\tabcolsep}{3pt}
\begin{adjustbox}{max width=\textwidth}
\begin{tabular}{l|r|rrrrrrr}
\toprule
Example & $n$
& \multicolumn{1}{l}{NN} & \multicolumn{1}{l}{GLMNET} & \multicolumn{1}{l}{CART} & \multicolumn{1}{l}{RF} & \multicolumn{1}{l}{LSVM} & \multicolumn{1}{l}{NLSVM} & \multicolumn{1}{l}{NN-MADD$_{\rm sc}$} \\
\midrule
\multirow{3}{*}{Example~\ref{example1}} 
& 1000 & 37.00 \se{0.19} & 27.97 \se{0.16} & 44.39 \se{0.24} & 28.63 \se{0.17} & 31.14 \se{0.19} & 24.62 \se{0.40} &  \best{21.27} \se{0.20} \\
& 2000 & 36.32 \se{0.15} & 26.98 \se{0.09} & 43.63 \se{0.18} & 27.47 \se{0.13} & 28.84 \se{0.14} & 24.02 \se{0.27} &  \best{20.88} \se{0.17} \\
& 4000 & 36.04 \se{0.18} & 26.79 \se{0.13} & 43.06 \se{0.16} & 26.66 \se{0.15} & 27.68 \se{0.15} & 22.11 \se{0.13} &  \best{19.55} \se{0.11} \\
\midrule

\multirow{3}{*}{Example~\ref{example2}} 
& 1000 & 50.92 \se{0.02} & 49.66 \se{0.15} & 36.31 \se{0.30} & 19.07 \se{0.10} & 47.88 \se{0.13} & 16.67 \se{0.09} &  \best{14.39} \se{0.21} \\
& 2000 & 51.10 \se{0.03} & 49.88 \se{0.18} & 35.78 \se{0.27} & 18.71 \se{0.12} & 48.09 \se{0.29} & 17.03 \se{0.10} &  \best{14.82} \se{0.14} \\
& 4000 & 51.20 \se{0.03} & 50.01 \se{0.15} & 33.34 \se{0.22} & 17.88 \se{0.11} & 47.33 \se{0.37} & 16.57 \se{0.08} &  \best{14.28} \se{0.10} \\
\midrule

\multirow{3}{*}{Example~\ref{example3}} 
& 1000 & 14.94 \se{0.24} &  \best{2.52} \se{0.06} & 35.61 \se{0.16} & 4.46 \se{0.09} & 3.51 \se{0.06} & 2.69 \se{0.05} & 5.68 \se{0.09} \\
& 2000 & 13.90 \se{0.16} &  \best{2.43} \se{0.05} & 34.41 \se{0.15} & 3.89 \se{0.09} & 3.01 \se{0.05} & 2.62 \se{0.04} & 5.39 \se{0.07} \\
& 4000 & 13.18 \se{0.14} &  \best{2.33} \se{0.04} & 33.78 \se{0.17} & 3.56 \se{0.05} & 2.71 \se{0.05} & 2.46 \se{0.05} & 4.96 \se{0.07} \\
\midrule

\multirow{3}{*}{Example~\ref{example4}} 
& 1000 & 49.20 \se{0.04} & 49.94 \se{0.14} & 41.39 \se{0.24} & 19.42 \se{0.31} & 49.47 \se{0.15} &  \best{8.73} \se{0.09} & 12.00 \se{0.19} \\
& 2000 & 49.25 \se{0.05} & 49.94 \se{0.14} & 39.71 \se{0.25} & 17.18 \se{0.27} & 49.42 \se{0.17} &  \best{8.48} \se{0.10} & 11.84 \se{0.17} \\
& 4000 & 49.27 \se{0.04} & 50.07 \se{0.14} & 38.60 \se{0.22} & 15.16 \se{0.21} & 49.46 \se{0.13} &  \best{8.23} \se{0.08} & 11.91 \se{0.14} \\
\midrule

\multirow{3}{*}{Example~\ref{example5}} 
& 1000 & 48.94 \se{0.04} & 35.97 \se{0.19} & 39.10 \se{0.24} & 16.79 \se{0.34} & 36.38 \se{0.19} &  \best{7.72} \se{0.09} & 11.39 \se{0.15} \\
& 2000 & 48.98 \se{0.05} & 34.64 \se{0.12} & 38.40 \se{0.21} & 15.62 \se{0.23} & 34.87 \se{0.15} &  \best{7.46} \se{0.07} & 11.43 \se{0.15} \\
& 4000 & 49.04 \se{0.04} & 33.65 \se{0.12} & 36.92 \se{0.20} & 14.33 \se{0.24} & 33.69 \se{0.14} &  \best{7.18} \se{0.08} & 11.19 \se{0.11} \\
\midrule

\multirow{3}{*}{Example~\ref{example6}} 
& 1000 & 39.59 \se{0.11} & 50.06 \se{0.13} & 47.30 \se{1.33} & 43.98 \se{1.06} & 50.08 \se{0.10} & 45.08 \se{0.19} &  \best{28.05} \se{0.19} \\
& 2000 & 38.47 \se{0.17} & 50.04 \se{0.12} & 45.63 \se{2.04} & 39.73 \se{1.31} & 50.09 \se{0.15} & 40.18 \se{0.14} &  \best{26.74} \se{0.18} \\
& 4000 & 37.69 \se{0.17} & 50.13 \se{0.14} & 46.48 \se{1.60} & 36.39 \se{1.30} & 50.16 \se{0.18} & 35.07 \se{0.15} &  \best{26.23} \se{0.16} \\
\midrule

\multirow{3}{*}{Example~\ref{example7}} 
& 1000 & 25.41 \se{0.16} & 50.10 \se{0.18} & 42.42 \se{0.21} & 23.54 \se{0.11} & 50.08 \se{0.11} & 19.92 \se{0.13} &  \best{12.39} \se{0.15} \\
& 2000 & 24.48 \se{0.13} & 50.19 \se{0.14} & 40.52 \se{0.19} & 22.52 \se{0.13} & 50.03 \se{0.11} & 17.51 \se{0.12} &  \best{12.36} \se{0.07} \\
& 4000 & 23.78 \se{0.16} & 50.07 \se{0.15} & 39.50 \se{0.11} & 21.84 \se{0.12} & 50.14 \se{0.18} & 15.95 \se{0.14} &  \best{11.70} \se{0.11} \\
\bottomrule
\end{tabular}
\end{adjustbox}
\end{table}

In Example~\ref{example1}, where the underlying distributions differ only in their shapes, NN-MADD$_{\rm sc}$ has the lowest misclassification rates, followed by NLSVM, RF and GLMNET. NN-MADD$_{\rm sc}$ also achieves the best performance in Examples~\ref{example2}, \ref{example6} and \ref{example7} involving mixture distributions. In Example~\ref{example2}, NLSVM and RF also perform well, but all other classifiers have more than $35\%$ misclassification rates. In Example~\ref{example6}, while the misclassification rate of NN-MADD$_{\rm sc}$ is close to $28\%$, no other classifier barring NN has misclassification rates lower than $40\%$. In Example~\ref{example7} also, while NN-MADD$_{\rm sc}$ has misclassification rates close to $12\%$, all classifiers except NLSVM have more than $20\%$ misclassification rates. Example~\ref{example3} involves a location problem. Not surprisingly, GLMNET yields the lowest misclassification rate in this problem. NN-MADD$_{\rm sc}$ achieves close to $5\%$ misclassification in this example. In fact, all the classifiers except NN and CART perform well in Example~\ref{example3}. However, in Example~\ref{example4} -- which corresponds to a scale problem -- NLSVM achieves the minimum misclassification rate followed by NN-MADD$_{\rm sc}$ and RF. We see a similar result in Example~\ref{example5}, where we have a location-scale problem. In both these examples, all the classifiers except NLSVM, NN-MADD$_{\rm sc}$ and RF misclassify more than $40\%$ of the observations. Overall, the simulation results demonstrate that the proposed NN-MADD$_{\rm sc}$ classifier consistently delivers competitive and often superior classification performance across a wide range of scenarios.

\section{Handling Massive Datasets}
\label{sec:RFF}
We have already demonstrated that our proposed method is capable of handling datasets with moderate to large sample sizes. But, when the training sample size becomes very large (e.g., $n>5000$), the proposed method can become heavily time consuming. To understand this, note that our method has two important components, viz., construction of the similarity matrices $\mat L^{(j)}$ (of size $n_j \times n_j$), $j=1,\ldots,J$ and generation of DPP samples using these similarity matrices. The second step requires eigen-decomposition of the similarity matrix, which is usually quite fast for moderate to large $n_j$ values. But, for larger $n_j$ values (e.g., $n_j > 2500$), the eigen-decomposition step requires heavy computations. Moreover, storage of the huge similarity matrices can also become challenging in such cases. One way to mitigate this issue is to use approximate methods, e.g., Nystr{\"o}m's approximation \citep{williams2000using} or randomized SVD \citep{halko2011finding}, which approximate the leading eigenvalues and eigenvectors without performing a full decomposition. However, these methods often require explicit access to (or sampling from) the similarity matrix, which can still be expensive.

To address the issue, we take an alternative approach based on the Random Fourier Features (RFF) framework introduced by \citet{rahimi2007random}. The idea is based on \cite{bochner1933monotone}'s representation of a continuous shift-invariant kernel $l(\vec x,\vec y)=l(\vec x-\vec y)$ as the Fourier transformation of a symmetric probability distribution $p$: $l(\vec x-\vec y)=\mathbb{E}_{\vec w \sim p}[\cos(\vec w^\top (\vec x-\vec y))]$. The kernel $l$ and the probability distribution $p$ are uniquely determined by each other. In particular, for $l(\vec x-\vec y) = \exp\{-\|\vec x-\vec y\|^2/(2\sigma^2)\}$, the corresponding probability distribution is $N(\vec 0,\sigma^{-2}\mat I_d)$. Due to the representation of the kernel as an expectation, it can be approximated using Monte Carlo. Consequently, the similarity matrix $\mat L^{(j)}$ can be well approximated using Monte Carlo as follows. Let $\vec w_1,\ldots,\vec w_D \overset{\iid}{\sim} N_d(\vec 0,\sigma_j^{-2}\mat I_d)$. Define the RFF map $\mathsf{r}(\vec x)= \big(\cos(\vec w_1^\top x),\sin(\vec w_1^\top x),\ldots,\cos(\vec w_D^\top x),\sin(\vec w_D^\top x)\big)^\top/\sqrt{D}$. Form the RFF matrix $\mat R_j = [\mathsf{r}(\vec X_{j1})^\top,\ldots,\mathsf{r}(\vec X_{jn_j})^\top]^\top$. The similarity matrix $\mat L^{(j)}$ can be approximated as $\mat L^{(j)} \approx \mat R_j \mat R_j^\top$ (see \citealt{rahimi2007random} for details).

It is well known that the nonzero eigenvalues of $\mat R_j \mat R_j^\top$ coincide with those of $\mat R_j^\top \mat R_j$. Moreover, the corresponding eigenvectors of $\mat R_j\mat R_j^\top$ can be obtained from the eigenvectors of $\mat R_j^\top\mat R_j$ (\citealt{golub2013matrix}). Consequently, eigen-decomposition of the $n_j \times n_j$ matrix $\mat L^{(j)} \approx \mat R_j\mat R_j^\top$ can be carried out by performing eigen-decomposition of the $2D \times 2D$ matrix $\mat R_j^\top \mat R_j$. In fact, it is enough to perform the singular value decomposition of the $n_j \times 2D$ matrix $\mat R_j$ (\citealt{golub2013matrix}). This approach also helps reduce the memory complexity since we do not need to compute or store the huge similarity matrices $\mat L^{(j)}$, $j=1,\ldots,J$. The complexity of this approach depends on the choice of $D$, which controls how good the approximations are. Good approximations are usually obtained with moderate values of $D$. In our implementations, we use $D=500$. The method is formalized in Algorithm~\ref{alg:SMNN_RFF}.

\begin{algorithm}[t!]
\caption{Nearest Neighbor Classification Using RFF-based MADD$_{\rm sc}$\label{alg:SMNN_RFF}}
	\begin{algorithmic}[1]
	\Require $\sp X_1,\ldots,\sp X_J$ -- Training sample, where $\sp X_j = \{\vec X_{j1},\ldots,\vec X_{jn_j}\}$. \newline
	$k_1,\ldots,k_J$ -- Number of representative samples to be selected. \newline
	$D$ -- Number of random Fourier features. \newline
	$\vec Z$ -- Test observation.
	
	\Ensure $\delta_{\rm sc}(\vec Z)$ -- The predicted class label for $\vec Z$.
	
	\bigskip
	
	\For {$j \in \{1,\dots,J\}$}
    \State Find $\sigma_j = \operatorname{median}\{\|\vec X_{ji} - \vec X_{ji^\prime}\|: 1 \le i < i^\prime \le n_j\}$.
    \State Generate $\vec w_{j1},\ldots,\vec w_{jD} \stackrel{\text{i.i.d.}}{\sim} N_d(\vec 0,\sigma_j^{-2}\mat I_d)$.
    \State Define the RFF map $\mathsf{r}_j(\vec x) = \big(\cos(\vec w_{j1}^\top \vec x),\sin(\vec w_{j1}^\top\vec x),\ldots,\cos(\vec w_{jD}^\top\vec x),\sin(\vec w_{jD}^\top\vec x)\big)$, $\vec x \in \mathbb R^d$.
	\State Construct the $n_j \times 2D$ RFF matrix $\mat R_j = [\mathsf{r}_j(\vec X_{j1})^\top,\ldots,\mathsf{r}_j(\vec X_{jn_j})^\top]^\top$.
	\State Select a subset $\{\pi_1,\ldots,\pi_{k_j}\}$ of $\{1,\ldots,n_j\}$ using the approximate $k_j$-DPP (Algorithm~\ref{alg:k-DPP}) with the matrix $\mat R_j$. 
    \State The representative observations from class-$j$ are $\sp X_j^\ast = \{\vec X_{j\pi_1},\ldots,\vec X_{j\pi_{k_j}}\}$.
	\EndFor
	\State Define $\sp X^\ast = \sp X_1^\ast \cup \cdots \cup \sp X_J^\ast$ to be the representative set.
	\State Compute $ \rho_{\rm sc}(\vec Z,\vec X_{ji})$, $j=1,\ldots,J$, $i=1,\ldots,n_j$ using \eqref{eq:MADD_scalable}.
	\State For $j=1,\ldots,J$, compute $\rho_{\rm sc}(\vec Z,\sp X_j) = \min_{i=1,\ldots,n_j} \rho_{\rm sc}(\vec Z,\vec X_{ji})$.
	\State Assign $\delta_{sc}(\vec Z) = \argmin_{j=1,\ldots,J} \rho_{\rm sc}(\vec Z,\sp X_j)$.
	\end{algorithmic}
\end{algorithm}

\begin{remark}
There are alternative ways of constructing RFF maps which also ensure that $\mat L^{(j)}$ can be approximated by $\mat R_j\mat R_j^\top$. The essential requirement is that $l(\vec x-\vec y) \approx \mathsf{r}(\vec x)^\top\mathsf{r}(\vec y)$. For instance, one can use $\mathsf{r}(\vec x) = \sqrt{2/D}\big(\cos(\vec w_1^\top\vec x+b_1),\ldots,\cos(\vec w_D^\top\vec x+b_D)\big)^\top$, where $\vec w_1,\ldots,\vec w_D \overset{\iid}{\sim} N_d(\vec 0,\sigma_j^{-2}\mat I_d)$ and $b_1,\ldots,b_D \overset{\iid}{\sim} U(0,2\pi)$ \citep[cf.][]{rahimi2007random}. With this choice, the size of the matrix $\mat R_j$ is further reduced to $n_j \times D$ instead of $n_j \times 2D$.  
\end{remark}

To check the usefulness of the RFF-based approach, we compute the relative errors $\|\mat L^{(j)} - \mat R_j\mat R_j^\top\|_{\rm F}/\|\mat L^{(j)}\|_{\rm F}$, $j=1,\ldots,J$ with $n=7000$ and $d=500$ for the different simulated examples considered in Section~\ref{sec:Other classifiers}. In all our simulation setups, the relative error never exceed $4\%$ for any of the classes. Also, to demonstrate the utility of the RFF-based approach, we report the computing times of NN-MADD$_{\rm sc}$ with and without the RFF approximation step for different values of $n$ and $d$ in Table~\ref{tab:Time_MADD_MADD_sc_RFF}. From the table, we observe that the reduction in computing times for the RFF approximation compared to the usual NN-MADD$_{\rm sc}$ ranges between 3X and 40X, with larger reductions observed for larger values of $n$. In particular, with $n=25000$, the RFF-based approximation takes less than $25$ minutes to complete, compared to more than $15$ hours for the usual NN-MADD$_{\rm sc}$.

\begin{table}[t!]
\centering
\caption{Average computing times (in seconds) for classifying $5000$ test observations using NN-MADD, and NN-MADD$_{\rm sc}$ with and without the RFF approximation for varying sample sizes.\label{tab:Time_MADD_MADD_sc_RFF}}
\begin{tabular}{l|rrrrr}\toprule
Training sample size ($n$) &  $5000$ & $10000$ & $15000$ & $20000$ & $25000$ \\ \hline
NN-MADD & 1521.38 & 6443.03 & 16084.28 &  78819.78 & xxx \\
NN-MADD$_{\rm sc}$ without RFF approximation   & 565.56  & 3847.29 & 12138.01 & 28262.53 & 54650.38 \\
NN-MADD$_{\rm sc}$ with RFF approximation  & 206.31 & 322.70 & 568.52 & 958.64 & 1468.68 \\\bottomrule
\end{tabular}

xxx The NN-MADD classifier failed due to insufficient memory.
\end{table}

To further demonstrate the empirical performance of the RFF-based NN-MADD$_{\rm sc}$ classifier, we revisit Examples~\ref{example1}--\ref{example7} from Section~\ref{sec:Other classifiers} with much larger sample size and dimension. We use $d=500$ and $n=7000$, with an equal number of observations from the two classes, while the test sample size is kept fixed at $2500$ observations per class. For NN-MADD$_{\rm sc}$, we choose the number of representative samples using the incremental cross-validation technique described in Section~\ref{sec:cross-validation}. For the other methods, the hyper-parameters are selected as described in Section~\ref{sec:Other classifiers}. To have a reasonable comparison among the methods, the parameter values in the examples are chosen differently than in Section~\ref{sec:Other classifiers}. In particular, we choose $\delta=0.25$ in Example~\ref{example2}, the location parameter $\mu=0.2$ in Example~\ref{example3}, the scale parameter $\sigma^2=1.2$ in Example~\ref{example4}, the location and scale parameters $\mu=0.1$, $\sigma^2=1.25$ in Example~\ref{example5}, $\delta=5$ in Example~\ref{example6}, and $\delta=0.125$ in Example~\ref{example7}. The average misclassification rates over $25$ replications, along with the corresponding standard errors, of the different classifiers are reported in Table~\ref{tab:RFF}.

\begin{table}[b!]
\centering
\caption{Average misclassification rates (in \%) of different classifiers in Examples~\ref{example1}--\ref{example7} with $n=7000$ and $d=500$. The corresponding standard errors are reported in a smaller font within parentheses. The underlined numbers represent the best result in each simulation setup. \label{tab:RFF}} 

\fontsize{9}{11}\selectfont
\begin{tabular}{c|rrrrrrrr}
\toprule
Example & \multicolumn{1}{l}{NN} & \multicolumn{1}{l}{GLMNET} & \multicolumn{1}{l}{CART} & \multicolumn{1}{l}{RF} & \multicolumn{1}{l}{LSVM} & \multicolumn{1}{l}{NLSVM} & \multicolumn{1}{l}{NN-MADD$_{\rm sc}$} \\

\midrule

\ref{example1} & 50.12 \se{0.01} & 49.68 \se{0.12} & 33.08 \se{0.19} & 16.91 \se{0.08} & 47.89 \se{0.14} & 15.45 \se{0.08} &  \best{7.54} \se{0.12} \\ 

\ref{example2} & 30.04  \se{0.16} & 24.68  \se{0.14} & 44.07 \se{0.14} & 22.21 \se{0.14} & 25.27 \se{0.18} & 19.28 \se{0.96} &  \best{4.47} \se{0.08} \\ 

\ref{example3} & 22.10 \se{0.14} &  \best{1.31} \se{0.03} & 41.28 \se{0.15} & 4.22 \se{0.07} & 2.93 \se{0.06} & 1.57 \se{0.04} & 7.65 \se{0.09} \\ 

\ref{example4} & 49.93 \se{0.01} & 50.08 \se{0.10} & 45.10 \se{0.17} & 22.10 \se{0.17} & 50.07 \se{0.14} &  \best{8.12} \se{0.06} & 11.38 \se{0.10} \\

\ref{example5} & 49.95 \se{0.01} & 15.15 \se{0.12} & 41.23 \se{0.18} & 9.25 \se{0.13} & 16.52 \se{0.14} &  \best{4.93} \se{0.05} & 5.08 \se{0.07} \\ 

\ref{example6} & 29.37 \se{0.16} & 50.09 \se{0.15} & 49.46 \se{0.49} & 46.13 \se{0.58} & 49.89 \se{0.15} & 46.55 \se{0.14} &  \best{7.96} \se{0.10} \\

\ref{example7} & 35.63 \se{0.18} & 50.15 \se{0.12} & 48.47 \se{0.18} & 34.03 \se{0.14} & 49.85 \se{0.12} & 23.35 \se{0.10} &  \best{19.97} \se{0.12} \\ 

\bottomrule
\end{tabular}
\end{table}

The results are similar to the ones obtained in Section~\ref{sec:Other classifiers}. In Example~\ref{example1}, where the underlying populations differ only in their shapes, as well as in Examples~\ref{example2}, \ref{example6} and \ref{example7}, where the populations have mixture distributions, the NN-MADD$_{\rm sc}$ classifier has the best performance. For the location problem in Example~\ref{example3}, GLMNET has the best performance, whereas in the scale problem in Example~\ref{example4} and the location-scale problem in Example~\ref{example5}, NLSVM has the best performance. In these three examples, the performance of NN-MADD$_{\rm sc}$ is also comparable, especially in Examples~\ref{example4} and \ref{example5}. Overall, the proposed NN-MADD$_{\rm sc}$ classifier has a superior performance compared to its competitors in the simulations. Also, the similarity of the results in Tables~\ref{tab:Compare} and \ref{tab:RFF} further corroborates that the RFF-based approach produces performances similar to the usual NN-MADD$_{\rm sc}$, while allowing faster implementation in massive datasets.

\begin{remark}
The first step in all our algorithms is to compute and store the matrix of pairwise distances. Although this is feasible with small sample sizes, the storage can become prohibitive when the training sample size is large. To handle such cases, our algorithms can be modified by computing the pairwise distances on the fly, without actually storing them. Such a method reduces the memory-complexity, but at the expense of being more time consuming. In a given problem, one needs to make a trade-off between being either ``time-efficient" or ``memory-efficient", based on the size of the training data and available memory.
\end{remark}

\section{Scalable Versions of Generalized MADD}
\label{sec:gmadd}
So far we have focused on the NN-MADD classifier and proposed a scalable version for it. However, it is well-known that for high-dimensional problems, the NN-MADD classifier may fail if the underlying populations differ beyond their locations and scales \citep{roy2022generalizations}. The reason behind this failure lies in the fact that the high-dimensional behavior of MADD is completely determined by the first two moments of the underlying populations \citep{sarkar2019perfect,roy2022generalizations}. To address this problem, \cite{roy2022generalizations} used a generalized version of MADD (referred to as gMADD) defined as
\begin{align}\label{eq:gmadd}
\rho^{\phi,\gamma}(\vec u,\vec v) = \frac{1}{|\sp X \setminus \{\vec u, \vec v\}|} \sum_{\vec z \in \sp X \setminus \{\vec u,\vec v\}} \Big|h^{\phi,\gamma}(\vec u,\vec z) - h^{\phi,\gamma}(\vec v,\vec z)\Big|,
\end{align}
where, for vectors $\vec x = (x_1,\ldots,x_d)^\top$ and $\vec y = (y_1,\ldots,y_d)^\top$, $h^{\phi,\gamma}(\vec x,\vec y) = \phi\big(d^{-1}\sum_{i=1}^{d} \gamma(|x_i-y_i|^2)\big)$ for non-negative, continuous and monotonically increasing functions $\phi$ and $\gamma$ satisfying $\phi(0)=\gamma(0)=0$. \cite{roy2022generalizations} showed that if $\phi$ is one-to-one and $\gamma$ has non-constant, completely monotone derivative, then NN classifier based on gMADD (henceforth referred to as NN-gMADD) can achieve \emph{perfect classification} in high-dimensions. In particular, they used $\phi(t)=t$ and $\gamma_1(t)=1-e^{-t}$, $\gamma_2(t)=\log(1+t)$, and $\gamma_3(t)=\sqrt{t}/2$, among which the first choice works even when the moments of the underlying distributions do not exist.

Similar to NN-MADD, NN-gMADD also requires heavy computations and may become infeasible to use for large datasets. To address this, we define a scalable version of gMADD (referred to as gMADD$_{\rm sc}$) as
\begin{equation}\label{eq:gMADD_sc}
\rho^{\phi,\gamma}_{\rm sc}(\vec u,\vec v) = \frac{1}{|\sp X^\ast \setminus \{\vec u, \vec v\}|} \sum_{\vec z \in \sp X^\ast \setminus \{\vec u,\vec v\}} \Big|h^{\phi,\gamma}(\vec u,\vec z) - h^{\phi,\gamma}(\vec v,\vec z)\Big|,
\end{equation}
where $\sp X^\ast$ is a representative set of training observations. The challenge here lies in choosing the representative set $\sp X^\ast$ appropriately. Ideally, our selected representative set should be such that it captures the diversity among the pairwise $h^{\phi,\gamma}$-distances. Therefore, we select a representative sample from $\sp X_j$ using $k_j$-DPP with the similarity matrix $\mat L^{(j)}_{\phi,\gamma} = \big(\big(l^{(j)}_{ii^\prime}(\phi,\gamma)\big)\big)$, where $l^{(j)}_{ii^\prime}(\phi,\gamma) = \exp\{-h^{\phi,\gamma}(\vec X_{ji},\vec X_{ji^\prime})/(2\sigma_j^2)\}$. The bandwidth parameter $\sigma_j$ is chosen using the median heuristic, i.e., 
$\sigma_j^2 = \operatorname{median}\{h^{\phi,\gamma}(\vec X_{ji},\vec X_{ji^\prime}): 1 \le i < i^\prime \le n_j\}$. Finally, the union of the representative samples from all the classes is taken to be the representative set $\sp X^\ast$. After choosing the representative set, the classification steps are similar to the ones used in NN-MADD$_{\rm sc}$. For the sake of completeness, the classification algorithm is presented in Appendix~\ref{appendix: Algorithms} (see Algorithm~\ref{alg:SGMNN}). We refer to the resulting classifier as the NN classifier based on gMADD$_{\rm sc}$ or NN-gMADD$_{\rm sc}$. For selection of the number of representative samples $k_1,\ldots,k_J$, we utilize an incremental cross-validation technique similar to the one defined in Section~\ref{sec:cross-validation}.

To demonstrate the advantage of NN-gMADD$_{\rm sc}$, we report its average runtimes in Table~\ref{tab:Time_gMADD_gMADD_sc}, along with those of NN-gMADD, with $d=100$ and varying training sample sizes. For both of these classifiers, we use $\phi(t)=t$ and $\gamma(t)=1-e^{-t}$ as in \cite{roy2022generalizations}. Clearly, the proposed scalable version has significantly reduced computing times compared to NN-gMADD. In particular, for $n=4096$, the computing time of NN-gMADD$_{\rm sc}$ is almost half of that of NN-gMADD.

We also perform some simulations to evaluate the performance of NN-gMADD$_{\rm sc}$. For this, we fix the dimension at $100$, and the training and test sample sizes at $2000$ and $5000$, respectively, with an equal number of observations from each class. We replicate the experiments $25$ times and report the average misclassification rates and the corresponding standard errors in Table~\ref{tab:gMADD}. In all these examples, the difference between the misclassification rates of NN-gMADD and NN-gMADD$_{\rm sc}$ never exceeds $1\%$.

\begin{table}[t!]
\centering
\caption{Average runtimes (in seconds) for classifying $5000$ test observations using NN-gMADD and NN-gMADD$_{\rm sc}$ classifiers with fixed dimension $d=100$ and varying training sample sizes $n$.\label{tab:Time_gMADD_gMADD_sc}}
\begin{tabular}{l|rrrr}\toprule
Training sample size ($n$) &  $512$ & $1024$ & $2048$ & $4096$ \\ \hline
NN-gMADD   & 30.13 & 107.84 & 393.95 & 1302.82 \\
NN-gMADD$_{\rm sc}$  & 20.83 & 82.08 & 260.77 & 672.86\\\bottomrule
\end{tabular}
\end{table}

\begin{ex}\label{example8}
The underlying distributions are $N_d(\vec 0_d, \Lambda_{d})$ and $N_d(\vec 0_d, \Gamma_{d})$, with $\Lambda_{d} = diag(\lambda_{1}, \dots, \lambda_{d})$ and $\Gamma_{d} = diag(\gamma_{1}, \dots, \gamma_{d})$. Here, $\lambda_{1} = \cdots = \lambda_{25} = 0.5$, $\lambda_{26} = \cdots = \lambda_{50} = 2$, and $\lambda_{51} = \cdots = \lambda_{d} = 1$, whereas $\gamma_{1} = \cdots = \gamma_{25} = 2$, $\gamma_{26} = \cdots = \gamma_{50} = 0.5$, and $\gamma_{51} = \cdots = \gamma_{d} = 1$.
\end{ex}

Here, the populations differ only in their covariance structures. Also, only the first $50$ coordinates are informative, while the rest are noise. Here, RF has the minimum misclassification rate, which is closely followed by NN-gMADD$_{\rm sc}$ and NLSVM. Since the distributions have the same locations and scales (in terms of trace of the covariance matrices), NN-MADD$_\mathrm{sc}$ has a significantly higher misclassification rate than NN-gMADD$_{\rm sc}$.

\begin{ex}\label{example9}
The two classes have i.i.d.\ marginals. For Class-$1$, the marginals are standard Cauchy random variables. For Class-$2$, the marginals are Cauchy with location $0.5$ and scale $1$.
\end{ex}

Here, the moments of the underlying distributions do not exist. RF has the best performance, followed by NN-gMADD$_\mathrm{sc}$. All other classifiers misclassify more than $30\%$ of the observations.

\begin{table}[b!]
\centering
\small
\caption{Average misclassification rates (in \%) of different classifiers in Examples~\ref{example8}--\ref{example11} with $n=2000$ and $d=100$. The corresponding standard errors are reported in a smaller font within parentheses. The underlined numbers represent the best result in each simulation setup.\label{tab:gMADD}}
\setlength{\tabcolsep}{3pt}
\begin{tabular}{c|rrrrrrrr}
\toprule
Ex & \multicolumn{1}{l}{NN} & \multicolumn{1}{l}{GLMNET} & \multicolumn{1}{l}{CART} & \multicolumn{1}{l}{RF} & \multicolumn{1}{l}{LSVM} & \multicolumn{1}{l}{NLSVM} & \multicolumn{1}{l}{NN-MADD$_{\rm sc}$}  & \multicolumn{1}{l}{NN-gMADD$_{\rm sc}$} \\
\midrule

\ref{example8} & 18.64 \se{0.22} & 49.97 \se{0.17} & 15.82 \se{0.14} & \best{0.28} \se{0.02} & 50.03 \se{0.17} & 0.95 \se{0.04} & 15.12 \se{0.09} & 1.22 \se{0.03} \\

\ref{example9} & 48.89 \se{0.13} & 44.95 \se{0.36} & 35.95 \se{0.18} & \best{8.64} \se{0.13} & 45.85 \se{0.27} & 42.82 \se{0.16} & 47.83 \se{0.16} & 14.58 \se{0.13} \\

\ref{example10} & 49.26 \se{0.05} & 49.90 \se{0.16} & 37.97 \se{0.14} & 9.04 \se{0.23} & 49.78 \se{0.16} & 33.12 \se{0.15} & 32.02 \se{0.16} & \best{5.36} \se{0.10} \\

\ref{example11} & 49.30 \se{0.12} & 50.14 \se{0.16} & 42.96 \se{0.17} & 20.49 \se{0.21} & 50.15 \se{0.17} & 49.85 \se{0.12} & 46.14 \se{0.16} & \best{19.17} \se{0.16} \\

\bottomrule
\end{tabular}
\end{table}

\begin{ex}\label{example10}
Class-$1$ has $N_d(\vec 0_d,3 \mat I_d)$ distribution. For Class-$2$, the marginal distributions are i.i.d.\ standard $t_3$. Here, the mean and the variance of the two classes are the same.
\end{ex}

Here, NN-gMADD$_{\rm sc}$ has the best result, followed by RF. All other classifiers has more than $30\%$ misclassification rates, with the linear methods having almost $50\%$ misclassifications.

\begin{ex}\label{example11}
The marginal distributions in both the classes are i.i.d. For Class-$1$, the marginals are $N(0,1)$, whereas for Class-$2$, the marginals are Laplace with location $0$ and scale $1/2$. Here also, the populations have the same mean and variance, and they differ only in their shapes.
\end{ex}

Here also, NN-gMADD$_{\rm sc}$ has the lowest misclassification rate, followed by RF. All other classifiers misclassify more than $40\%$ of the observations.

\begin{remark}
The simulation results show the utility of the proposed scalable version of NN-gMADD, particularly when the training sample size is large. However, unlike the NN-MADD$_{\rm sc}$ classifier, we cannot apply the RFF technique to use it with huge training datasets. The main difficulty lies in identifying the ambient distribution $p$ from which the random features need to be generated. Formulation of an RFF-type approach for NN-gMADD$_{\rm sc}$ can be an interesting problem for future research. 
\end{remark}

\section{Real Data Analysis}
\label{sec:Real_Data}
We analyze several benchmark datasets from the UCR Time Series Classification Archive (\url{https://www.cs.ucr.edu/~eamonn/time_series_data_2018/}) to further demonstrate the utility of the proposed classifier. For our analyses, we consider datasets that have at least $500$ samples (training and test combined), $50$ or more dimensions, and no more than $10$ classes. For each of these datasets, we randomly split the observations from each class in a $70:30$ ratio to constitute the training and test sets. The random split is replicated $25$ times, and we report the average misclassification rates of different classifiers across these $25$ replications, along with the corresponding standard errors. Among these datasets, $3$ have more than $5000$ training samples. For brevity, here we provide detailed analyses of $6$ datasets; all the $3$ with more than $5000$ training samples and $3$ with less than $5000$ training samples (see Table~\ref{tab:real}); showcasing the importance of the proposed classifier. The analyses of the remaining datasets are given in Appendix~\ref{appendix:Real}.

\subsection{Moderate-Sized Training Samples}
First, we consider $3$ datasets with less than $5000$ training samples. For these datasets, we also report the performance of NN-gMADD$_{\rm sc}$.
\medskip

The \textbf{Synthetic Control Chart} dataset consists of synthetic time-series generated using the procedures described by \cite{alcock1999time}. The data consists of $6$ different classes: Normal, Cyclic, Increasing Trend, Decreasing Trend, Upward Shift, and Downward Shift. Plots of the observations from the different classes are shown in Figure~\ref{fig:Controlchart}. For the classes Increasing Trend and Upward Shift, the location patterns are similar, but the variabilities are different. Similar traits are visible between the Downward Shift and Decreasing Trend classes. Again, the variability in the Normal class is significantly higher than those for Increasing Trend and Decreasing Trend. Due to the prominence of the scale differences among the classes, the performance of NN-MADD$_{\rm sc}$ classifier ($1.29\%$ misclassification) is much better than the NN classifier ($9.13\%$ misclassification). In this example, NN-MADD$_{\rm sc}$ has the lowest misclassification rate, followed by NLSVM and RF.

\begin{figure}[b!]
	\centering
    \footnotesize
	\begin{tabular}{c c c c c}
	~~Normal && ~~Increasing Trend && ~~Upward Shift \\
	\includegraphics[width=0.25\linewidth]{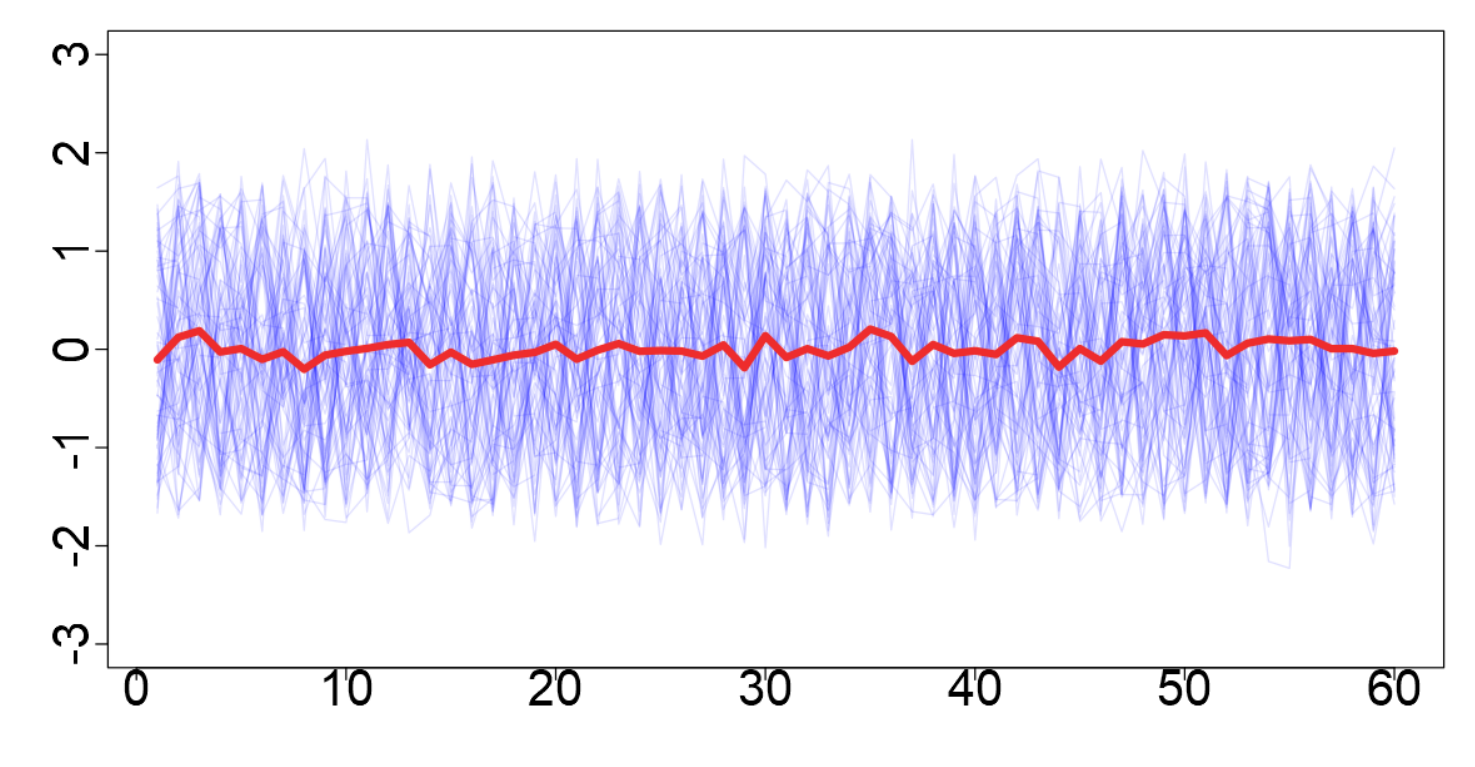} &&
	\includegraphics[width=0.25\linewidth]{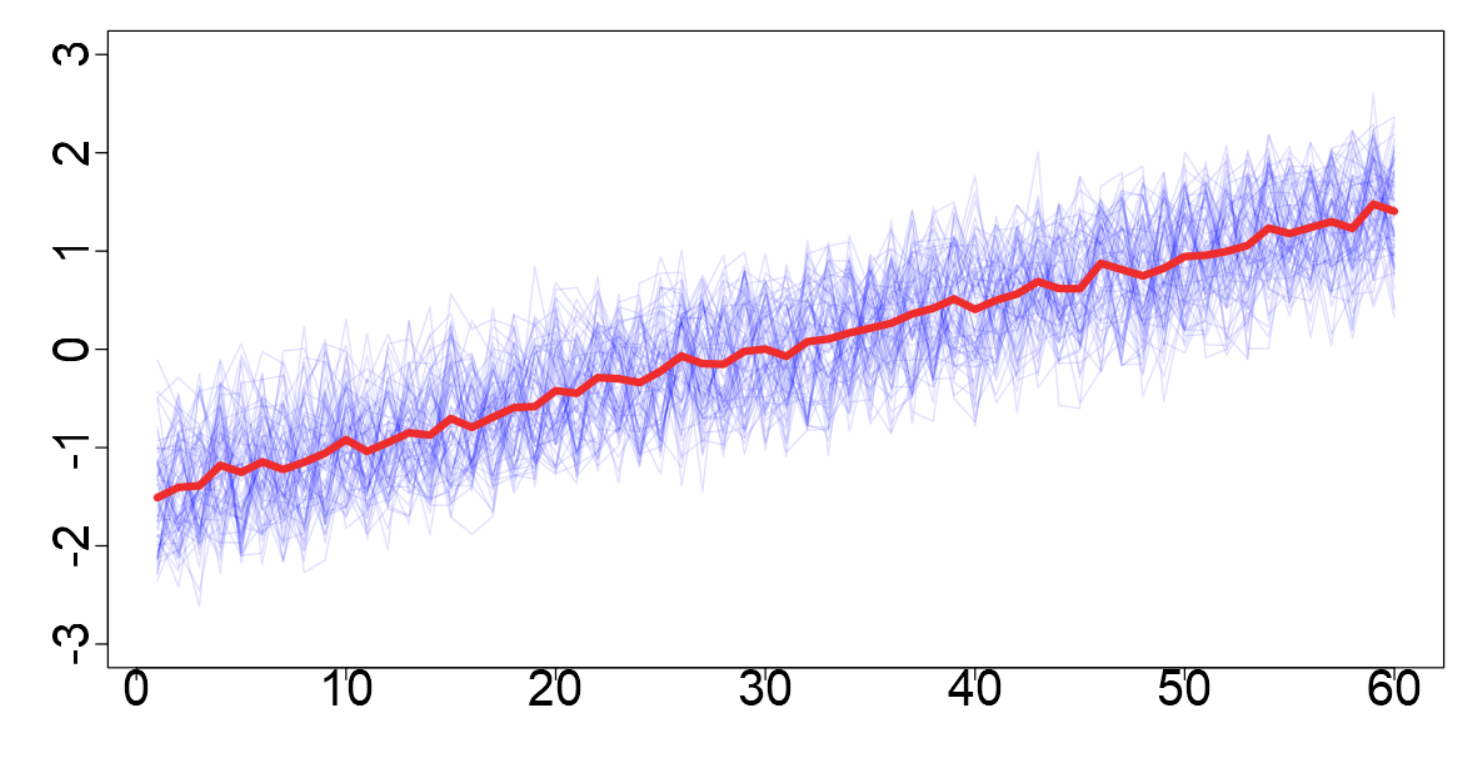} &&
	\includegraphics[width=0.25\linewidth]{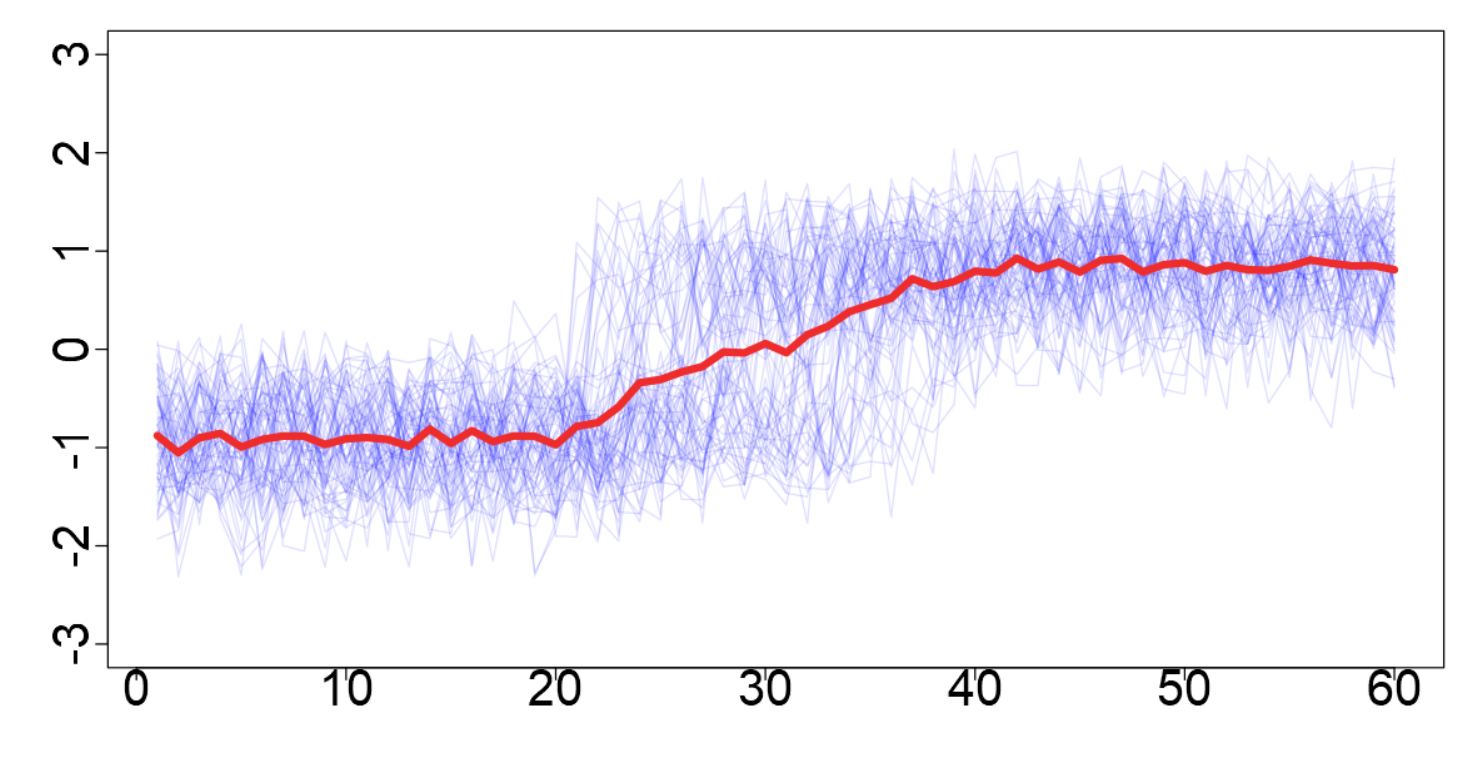} \\ [3pt]
	~~Cyclic && ~~Decreasing Trend && ~~Downward Shift \\
	\includegraphics[width=0.25\linewidth]{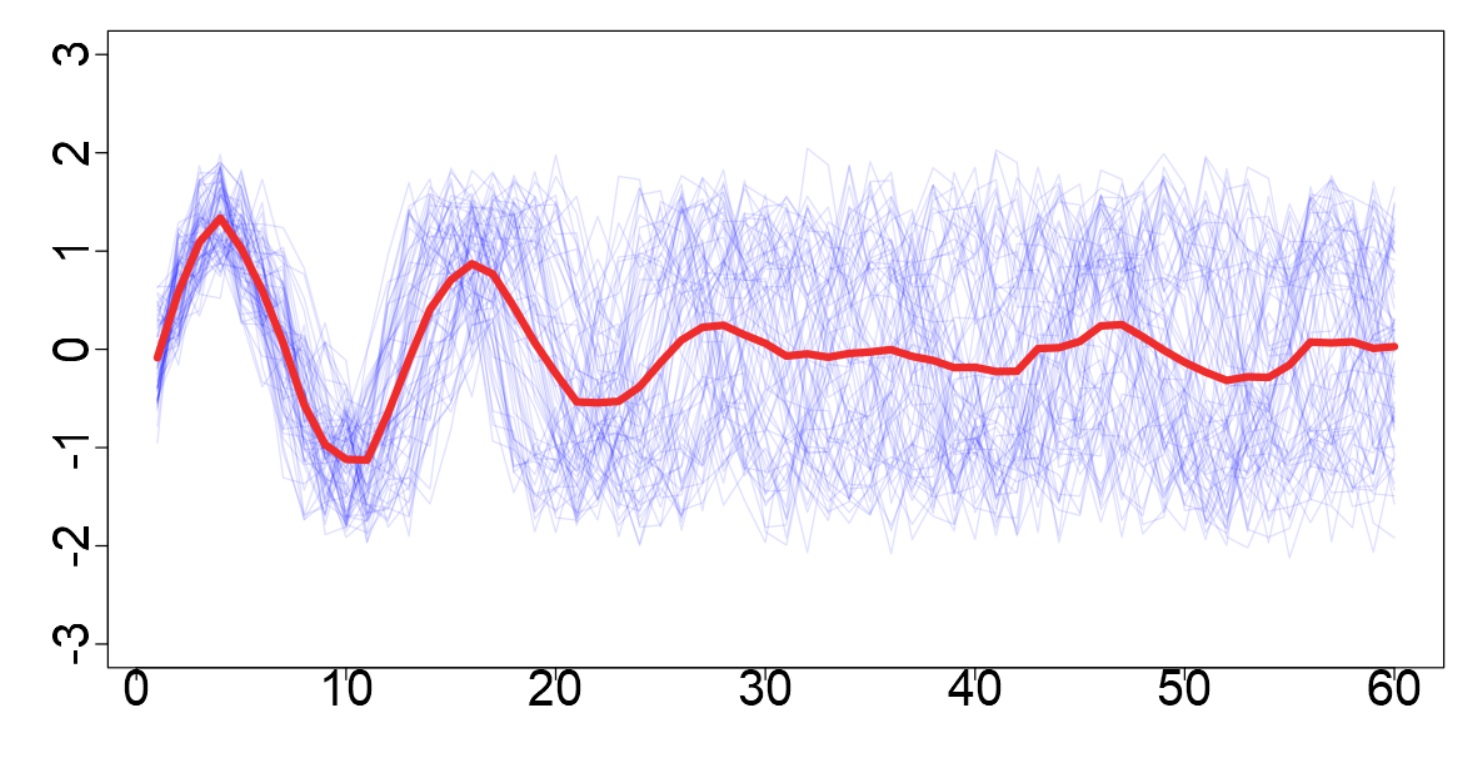} &&
	\includegraphics[width=0.25\linewidth]{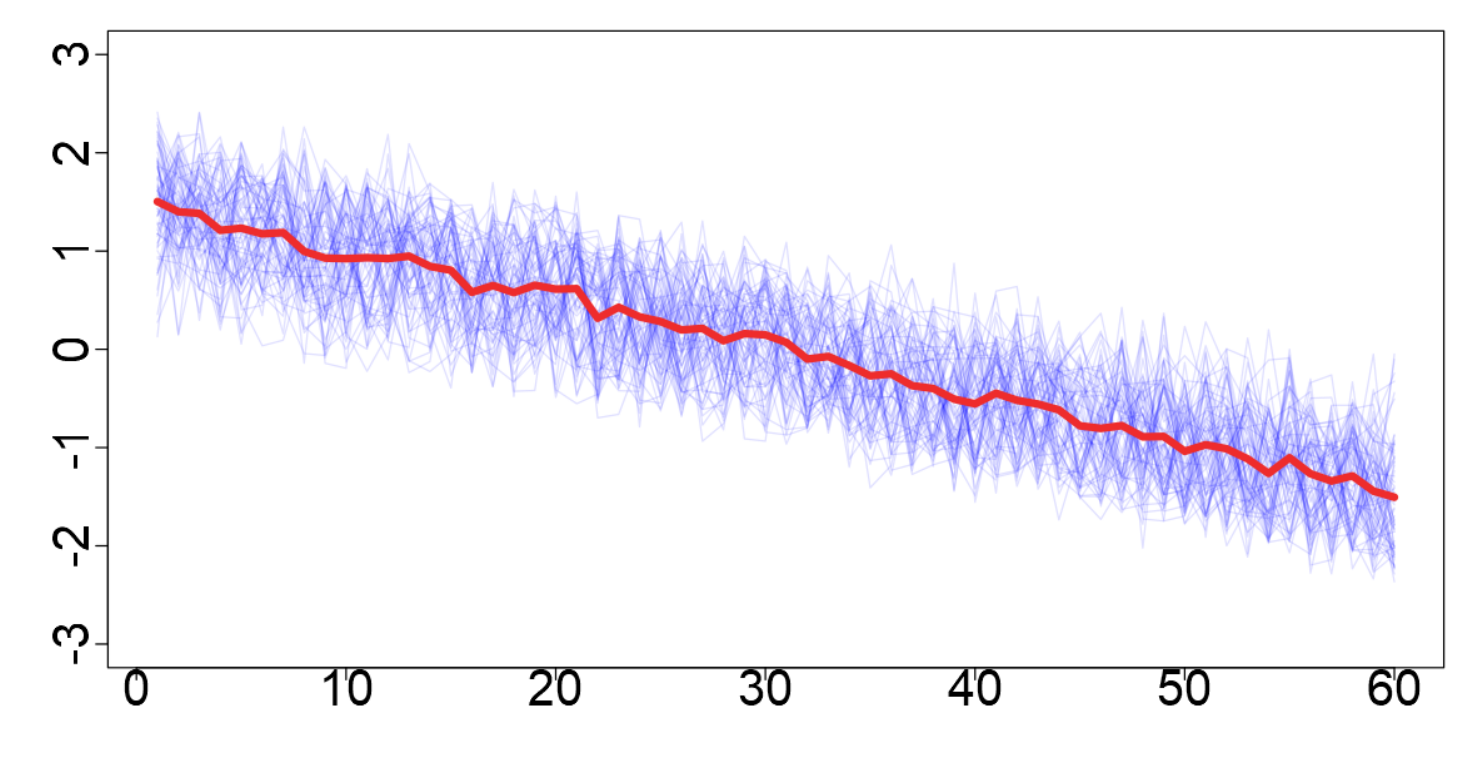} &&	
	\includegraphics[width=0.25\linewidth]{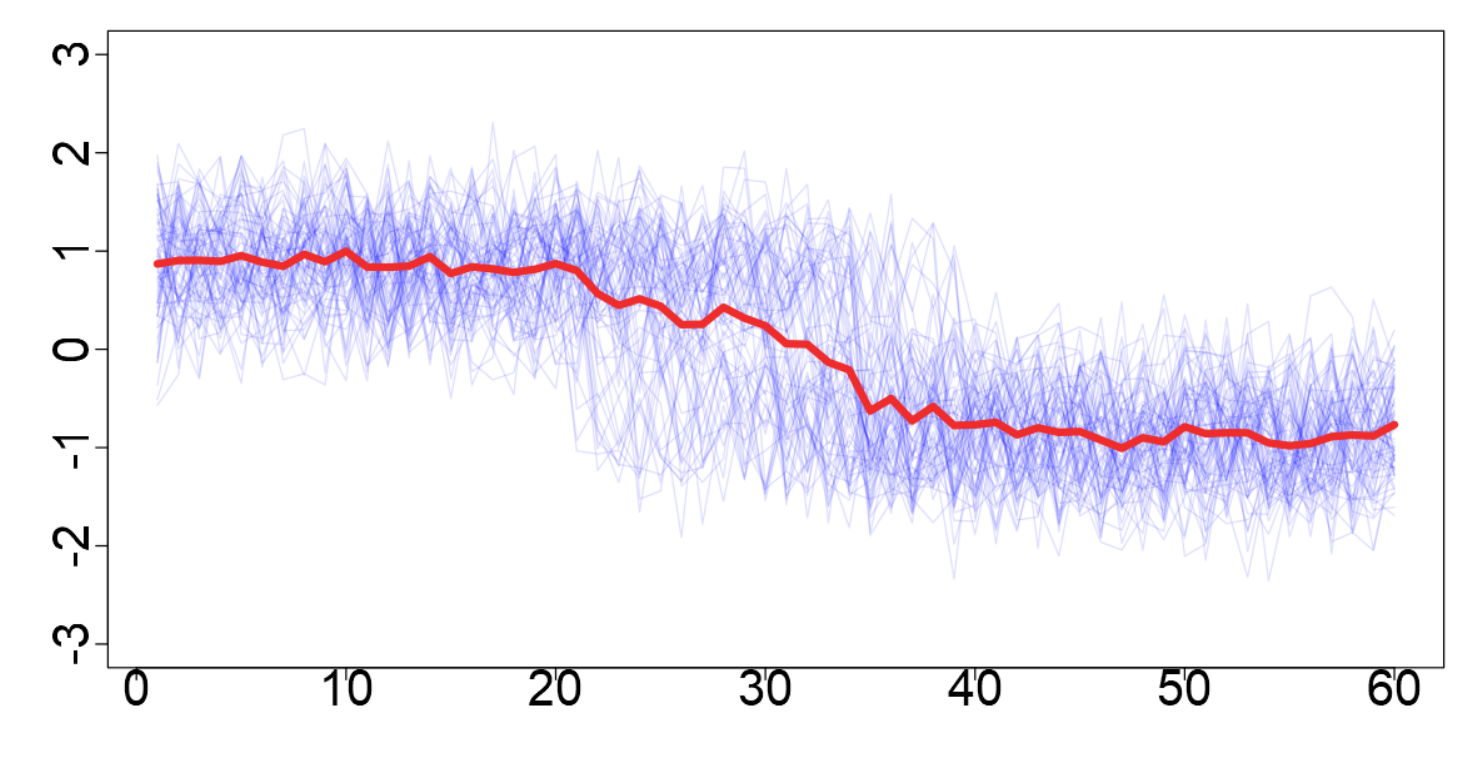}
	\end{tabular}
    \caption{Time series plot of Synthetic Control Chart data.
\label{fig:Controlchart}}
\end{figure}

The \textbf{Mote Strain} dataset is a binary classification data consisting of readings from two types of sensor: Humidity and Temperature. From the time series plot for this data in Figure~\ref{fig:motestrain}, we observe that the locations as well as scales of the two classes are different, but bear resemblance. Consequently, the performance of the NN and NN-MADD$_{\rm sc}$ classifiers are pretty close ($8.57\%$ and $6.78\%$ misclassifications, respectively). There seems to be a difference in the skewness patterns of the two classes. Interestingly, the NN-gMADD$_{\rm sc}$ classifier seems to successfully capture this difference, and produces the minimum misclassification rate ($3.31\%$) in this example.

\begin{table}[t]
    \centering
    \setlength{\tabcolsep}{2pt}
    \small
\caption{Average misclassification rates (in \%) of different classifiers on benchmark datasets. The corresponding standard errors are reported in a smaller font within parentheses. The underlined numbers represent the best result in each example. \label{tab:real}}
    \begin{tabular}{l|rrrr|rrrrrrrrr}
        \toprule
        Dataset &  $J$ & $d$ & $n$ & $n_{\rm test}$ &  \multicolumn{1}{c}{NN} &   \multicolumn{1}{c}{GLMNET} & \multicolumn{1}{c}{CART} &
        \multicolumn{1}{c}{RF} &   \multicolumn{1}{c}{LSVM}  &  \multicolumn{1}{c}{NLSVM} &
        \multicolumn{1}{c}{NN-gMADD$_{\rm sc}$} &  \multicolumn{1}{c}{NN-MADD$_{\rm sc}$} \\
        \midrule
        Synthetic & 6 & 60 & 426 & 174 & 9.13 & 6.21 & 26.16 & 2.53 & 5.72 & 1.68 & 2.62 & \best{1.29} \\
        Control Chart & & &  & &  \se{0.31} &  \se{0.26} &  \se{0.64} &  \se{0.26} &  \se{0.19} &  \se{0.17} & \se{0.21} & \se{0.12}\\
       \addlinespace[2pt]
    
        Mote Strain & 2 & 84 & 889 & 383 & 8.57 & 9.58 & 8.91 & 3.92 & 10.09 & 5.43 & \best{3.31} & 6.78 \\
          &  &  &  &  & \se{0.18} & \se{0.27} & \se{0.28} & \se{0.24} & \se{0.22} & \se{0.22} & \se{0.21} & \se{0.24} \\
      \addlinespace[2pt]
      
        Two Patterns & 4 & 128 & 3500 & 1500 & 1.83 & 12.29 & 28.22 & 1.93 & 12.01 & 2.34 & \best{0.54} & 1.68 \\
          &  &  &  &  & \se{0.07} & \se{0.13} & \se{0.33} & \se{0.08} & \se{0.15} & \se{0.07} & \se{0.04} & \se{0.06} \\
      \addlinespace[6pt]
        Wafer & 2 & 152 & 5013 & 2151 & 0.19 & 5.77  & 1.47 &  \best{0.18} & 3.63  &  0.23  & xxx & 0.30   \\
          & & & & &  \se{0.02} &  \se{0.07} &  \se{0.06} &  \se{0.02} &  \se{0.05} &  \se{0.02} & xxx &  \se{0.03} \\
      \addlinespace[2pt]
      
        Star Light & 3 & 1024 & 6464 & 2772  & 11.44 & 6.79 & 6.97 & 3.91 & 12.40 &  \best{2.83} & xxx & 6.21 \\
        Curves  & & & & &  \se{0.11} &  \se{0.07} &  \se{0.07} &  \se{0.05} &  \se{0.41} &  \se{0.06} & xxx & \se{0.08} \\
     \addlinespace[2pt]

        Electric & 7 & 96 & 11758 & 4879 & 28.31 & 46.43 & 34.40 &  \best{20.98} & 48.33 & 25.09 & xxx & 26.41 \\
        Devices & & & & &  \se{0.10} &  \se{0.10} &  \se{0.15} &  \se{0.12} &  \se{0.12} &  \se{0.15} & xxx & \se{0.11} \\
     \bottomrule
    \end{tabular}
    
    xxx The NN-gMADD$_{\rm sc}$ classifier was not used for datasets with more than $5000$ training samples.
\end{table}

\begin{figure}[h]
\centering
\footnotesize
\begin{tabular}{c c c}
~~Sensor~1 && ~~Sensor~2 \\
\includegraphics[width=0.25\linewidth]{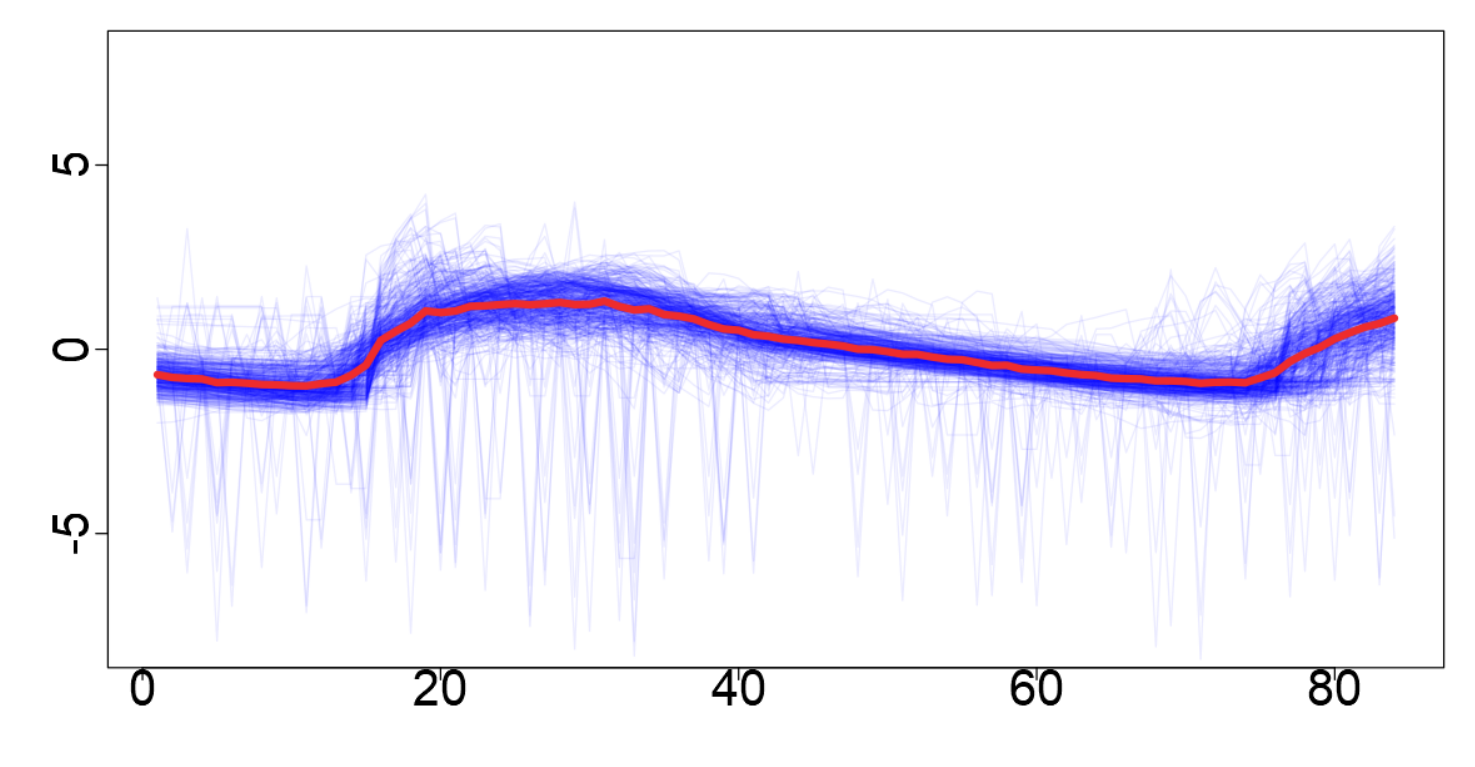} &&
\includegraphics[width=0.25\linewidth]{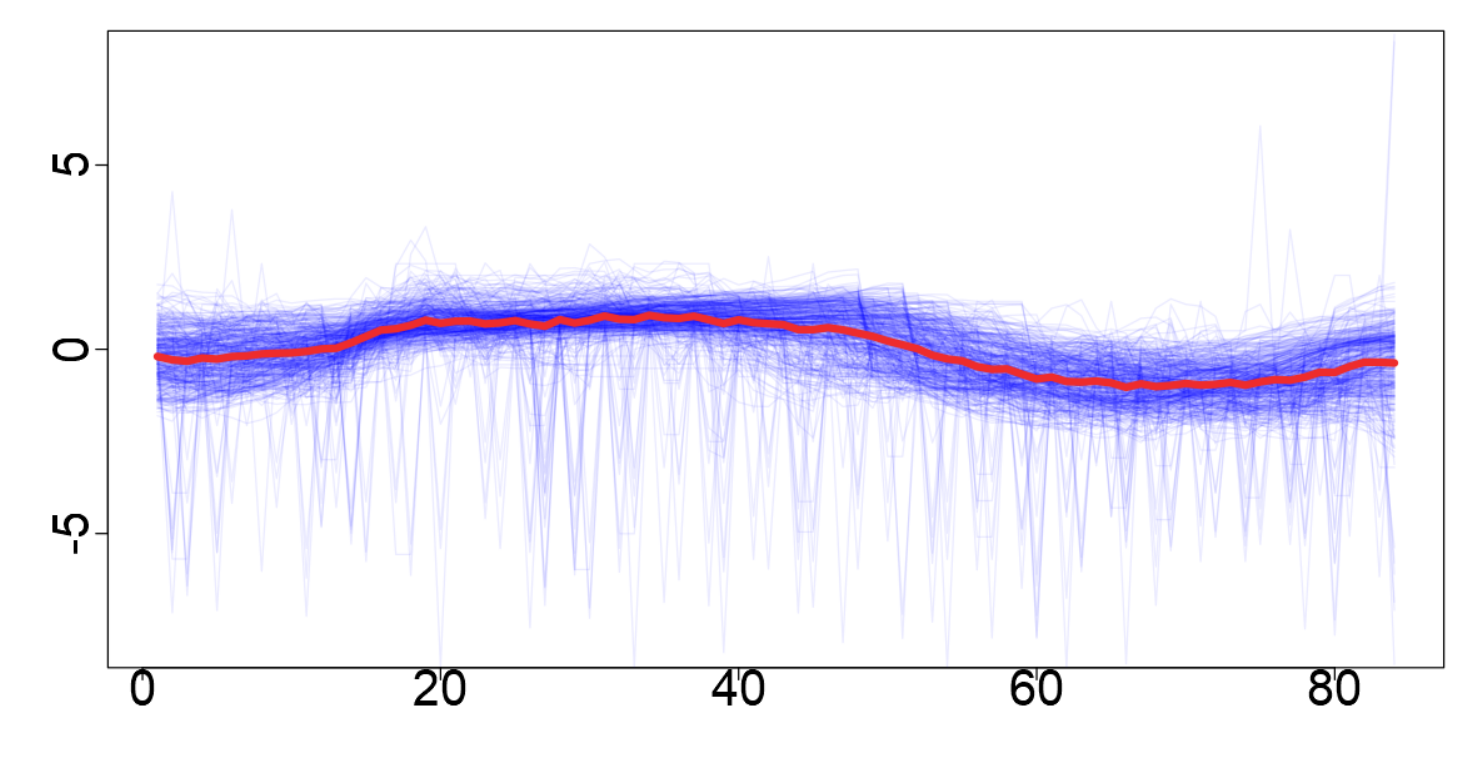} 
\end{tabular}
\caption{Time series plot of Mote Strain data.}
\label{fig:motestrain}
\end{figure}

The \textbf{Two Patterns} dataset contains artificially generated time series patterns by \cite{geurts2002contributions}. The $4$ classes are defined based on the occurrence of up and down patterns in a specific order: DD (two successive down movements), UD (up followed by down), DU (down followed by up), and UU (two successive up movements). The observations are plotted in Figure~\ref{fig:2patt}. A visual inspection shows that the location as well as scale patterns among the four classes are quite similar, although a careful examination shows that there are significant location differences between time points $80$ and $100$. Both NN and NN-MADD$_{\rm sc}$ perform well in this example, achieving misclassification rates of $1.83\%$ and $1.68\%$, respectively. However, the minimum misclassification rate is achieved by the NN-gMADD$_{\rm sc}$ classifier ($0.54\%$), which is able to extract information beyond the first two moments. The linear classifiers have significantly poor performance compared to the others.

\begin{figure}[b!]
\centering
\footnotesize

\begin{tabular}{c c c c}
DD & DU & UD & UU \\
\includegraphics[width=0.22\linewidth]{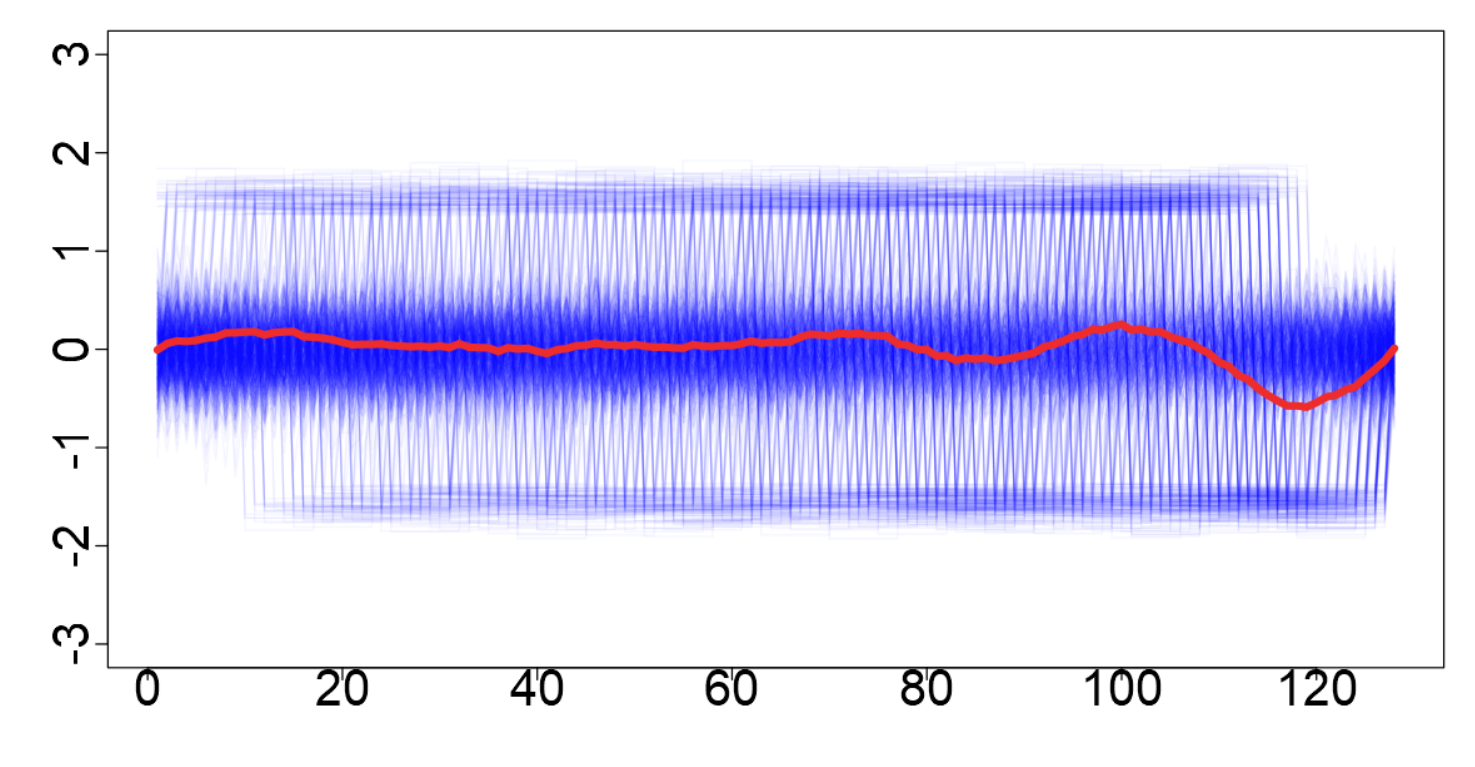} & 
\includegraphics[width=0.22\linewidth]{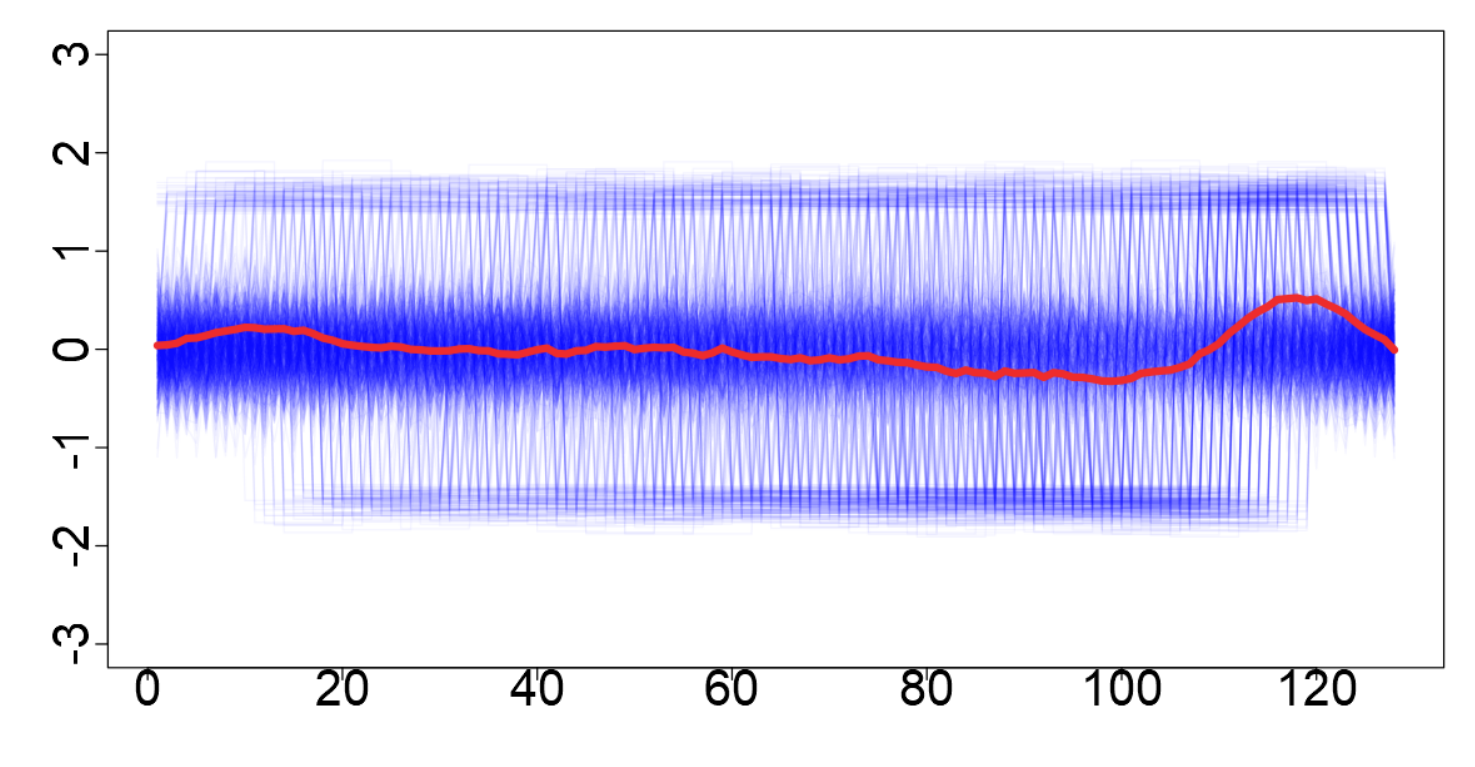} & 
\includegraphics[width=0.22\linewidth]{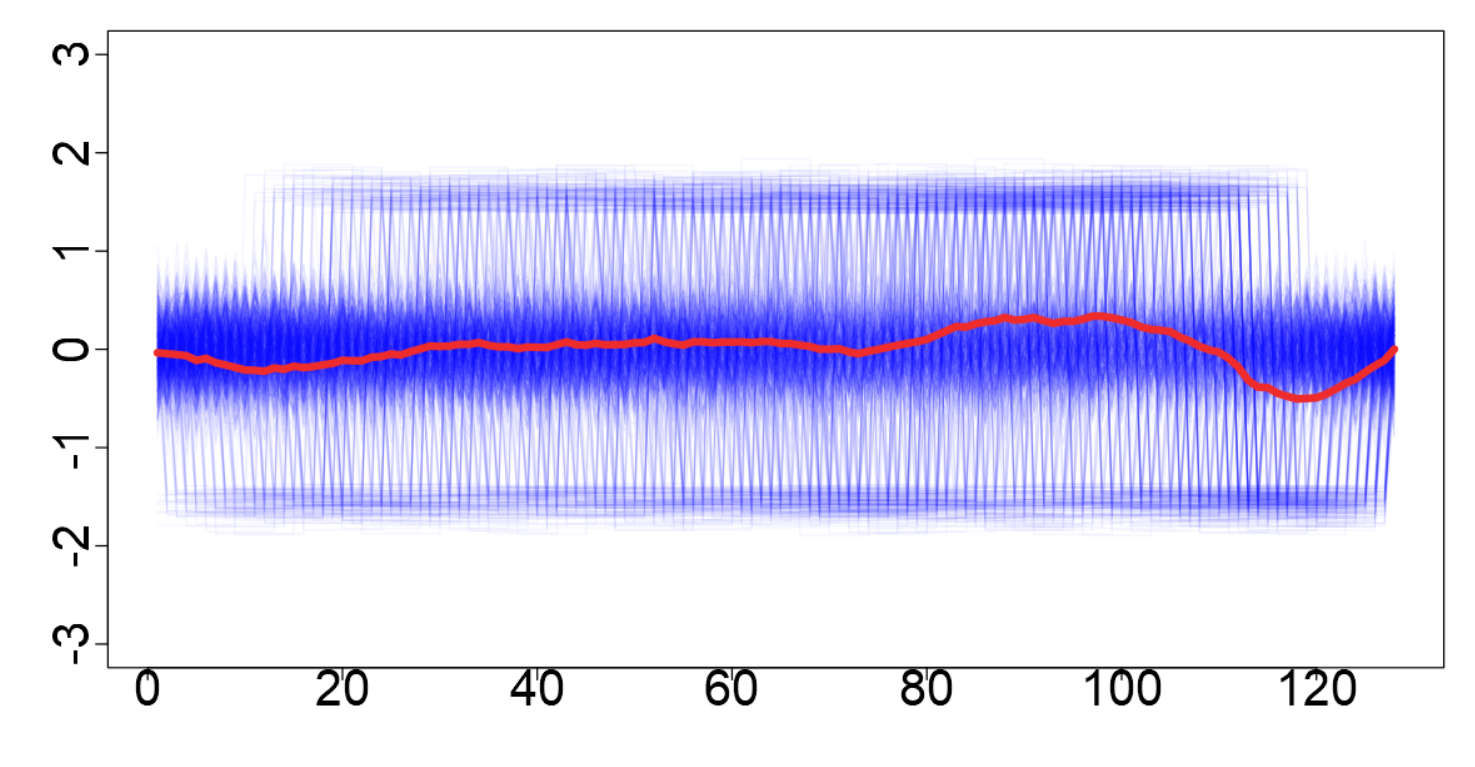} & 
\includegraphics[width=0.22\linewidth]{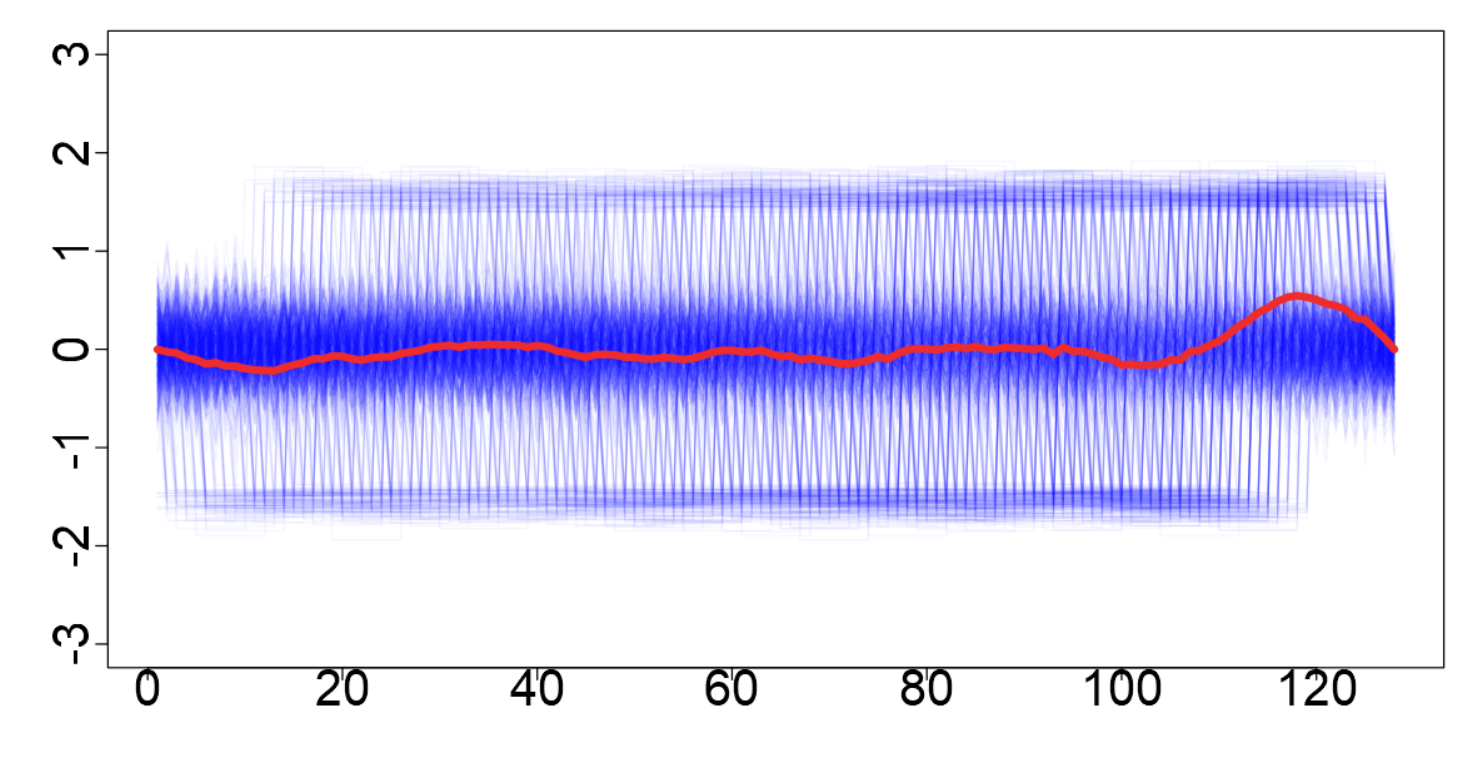}
\end{tabular}
\caption{Time series plot of Two Patterns data.} 
\label{fig:2patt} 
\end{figure}

\subsection{Large-Sized Training Samples}
Next, we consider the $3$ datasets with more than $5000$ training samples. For these examples, we do not apply the NN-gMADD$_{\rm sc}$ classifier, as it becomes computationally demanding.
\medskip

\begin{figure}[t]
\centering
\footnotesize
\begin{tabular}{c c c}
~~Normal && ~~Abormal \\
\includegraphics[width=0.25\linewidth]{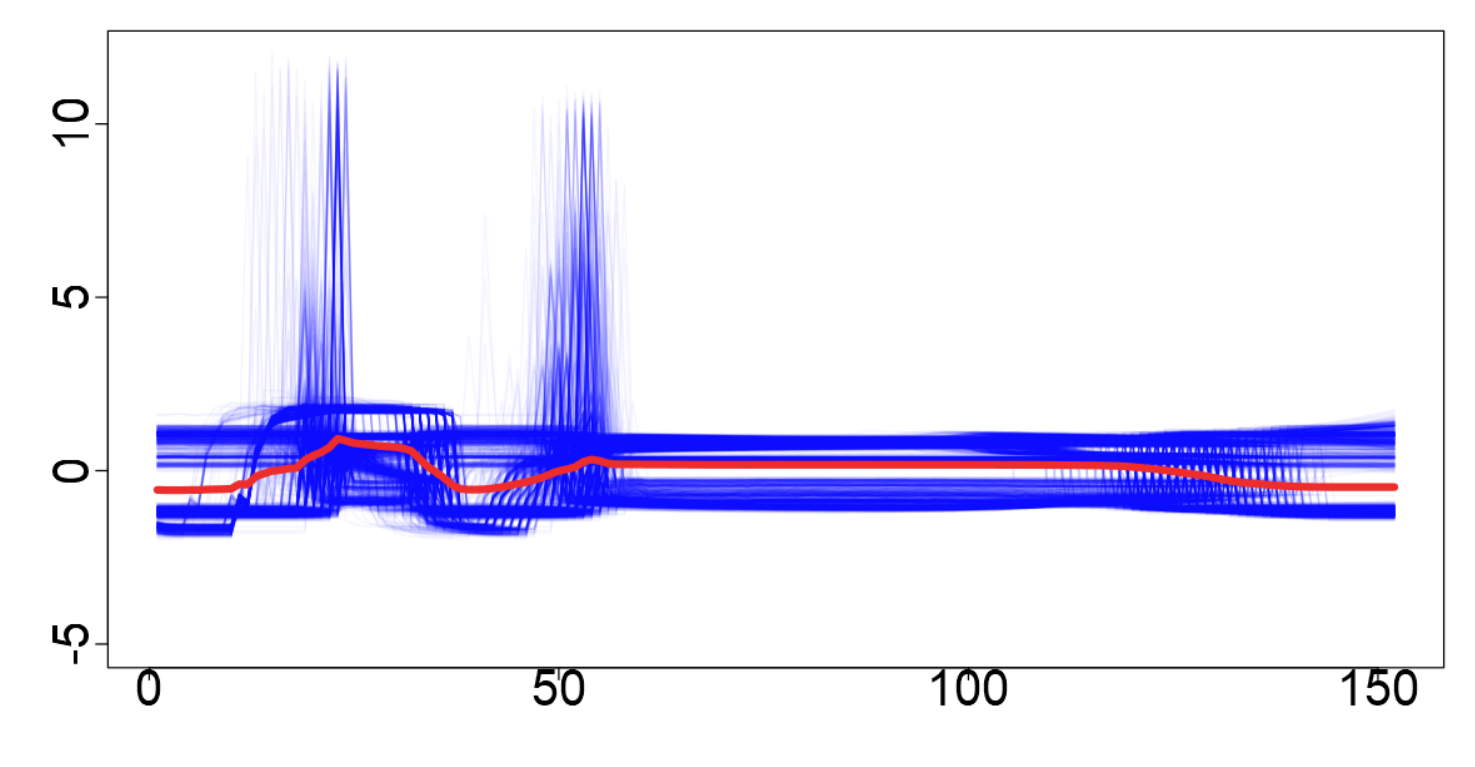} &&
\includegraphics[width=0.25\linewidth]{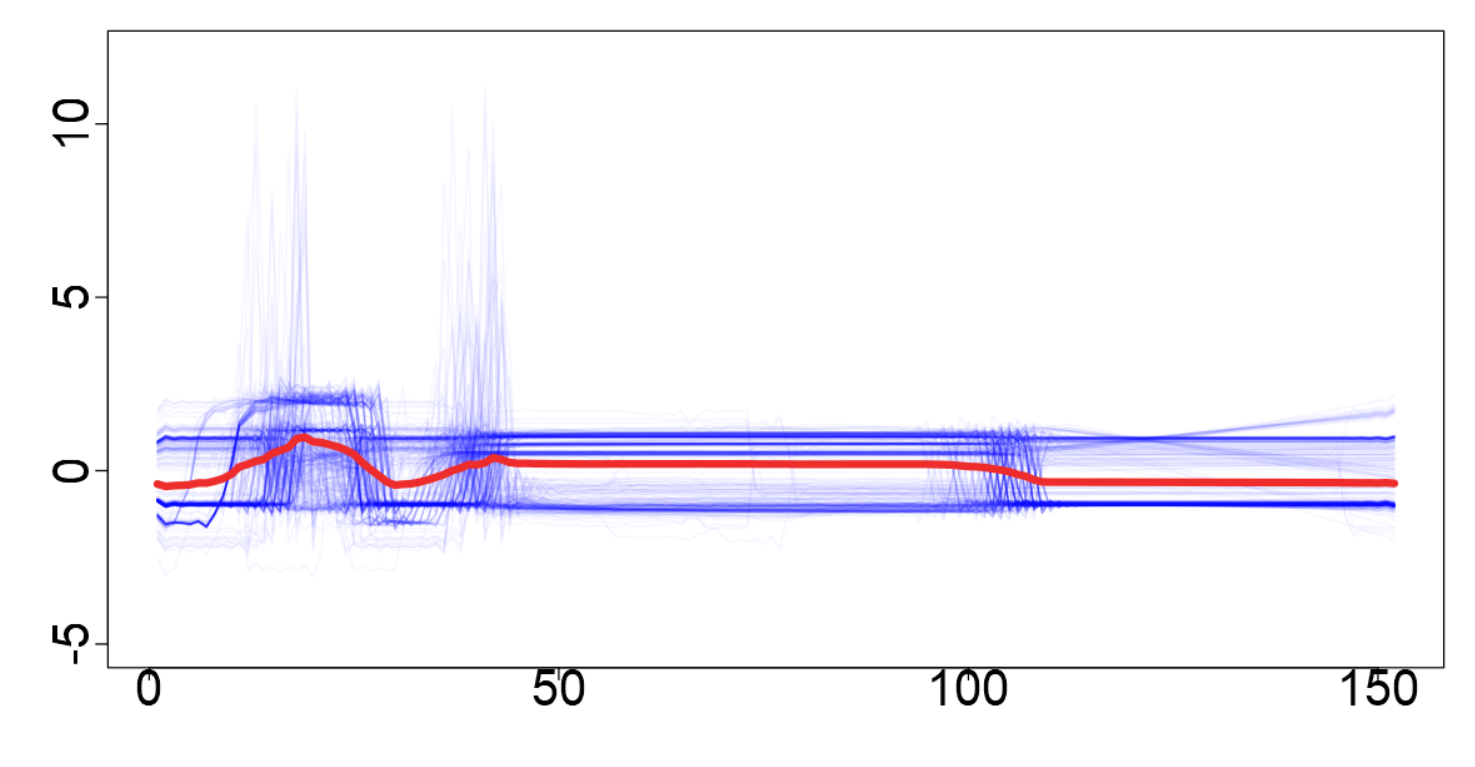} 
\end{tabular}
\caption{Time series plot of the Wafer data.}
\label{fig:Wafer}
\end{figure}

The \textbf{Wafer} dataset is a collection of measurements recorded by various sensors during the processing of silicon wafers for semiconductor fabrication \citep{olszewski2001generalized}. Each time series was recorded by a single sensor while one wafer was processed by one tool. Here, we have two underlying classes: Normal and Abnormal. The plot of the data in Figure~\ref{fig:Wafer} shows that the Abnormal class observations resemble a left-shifted version of the Normal class observations. The resulting location difference helps the NN classifier to correctly distinguish between the two classes, with a misclassification rate of only $0.19\%$. The NN-MADD$_{\rm sc}$ classifier has a slightly higher misclassification rate of $0.30\%$. Here, RF has the best performance with a misclassification rate of $0.18\%$. Overall, all the classifiers have satisfactory performance on this dataset.

The \textbf{Star Light Curves} dataset contains brightness measurements of three categories of variable stars recorded over $1024$ time points. Cepheids and RR Lyrae are both pulsating stars that change in brightness as they physically expand and contract, though they differ in their evolutionary stages and cycle lengths. In contrast, Eclipsing Binaries exhibit brightness variations because two stars orbit each other, with one periodically blocking the light of the other. The plot in Figure~\ref{fig:SLC} shows that the locations as well as scales for Cepheid and RR Lyrae are quite similar, whereas Eclipling Binaries exhibit a significantly different shape. Due to the similarity in the locations of the two categories, the NN classifier has a relatively higher misclassification rate ($11.44\%$) in this example. Comparatively, NN-MADD$\rm_{sc}$ has a much lower misclassification error ($6.21\%$), which is the third best overall.

\begin{figure}[h!]
\centering
\footnotesize
\begin{tabular}{c c c c}
Cepheid & Eclipsing Binary & RR Lyrae\\
\includegraphics[width=0.25\linewidth]{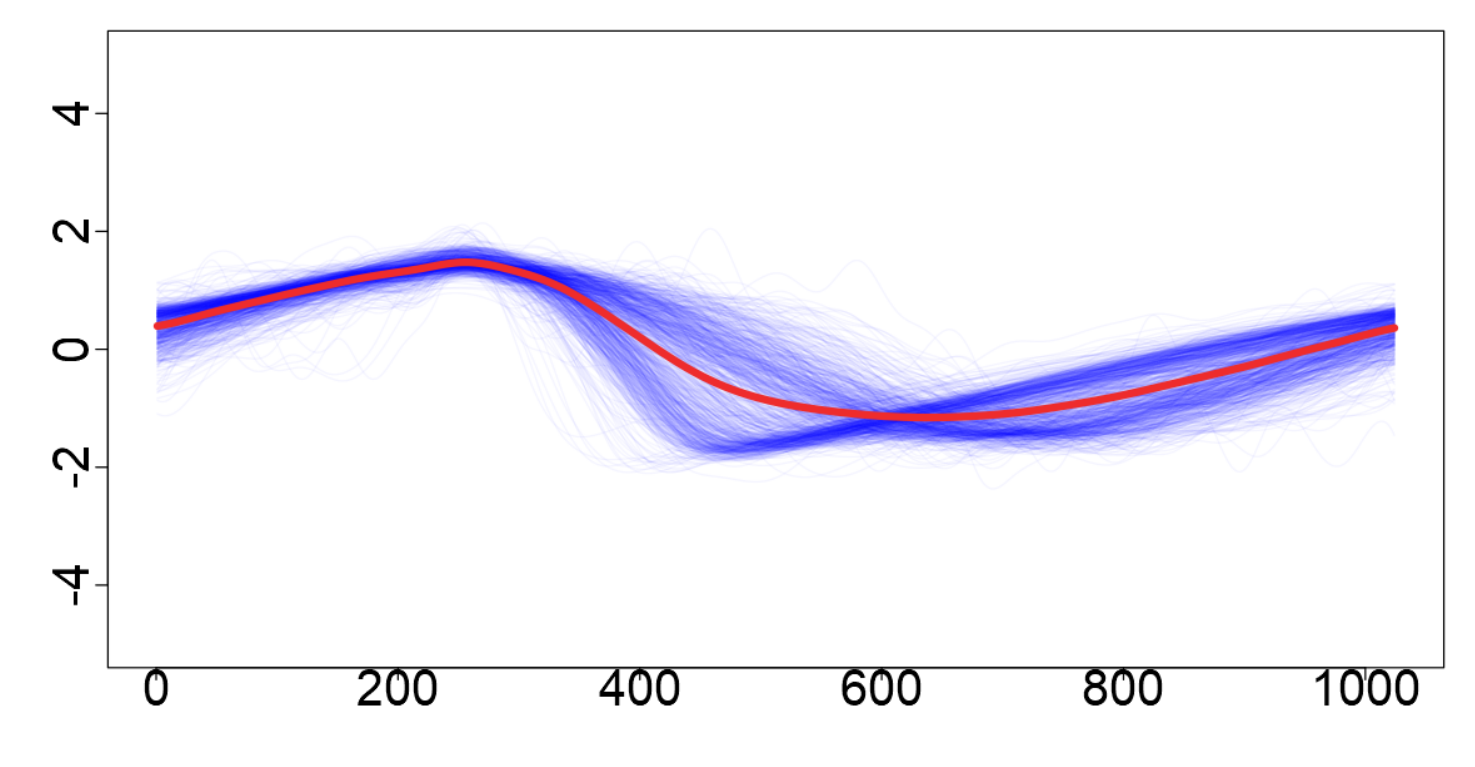} & 
\includegraphics[width=0.25\linewidth]{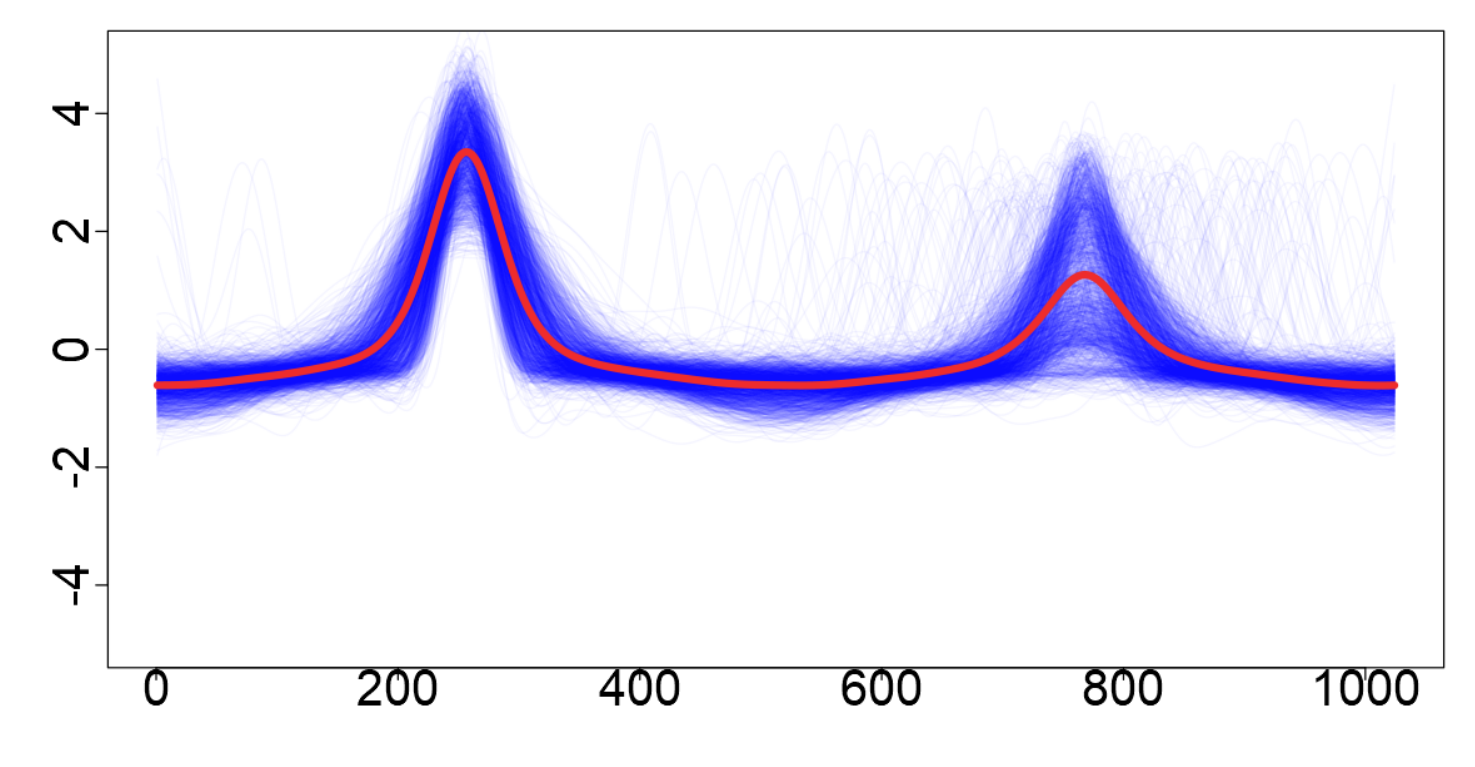} & 
\includegraphics[width=0.25\linewidth]{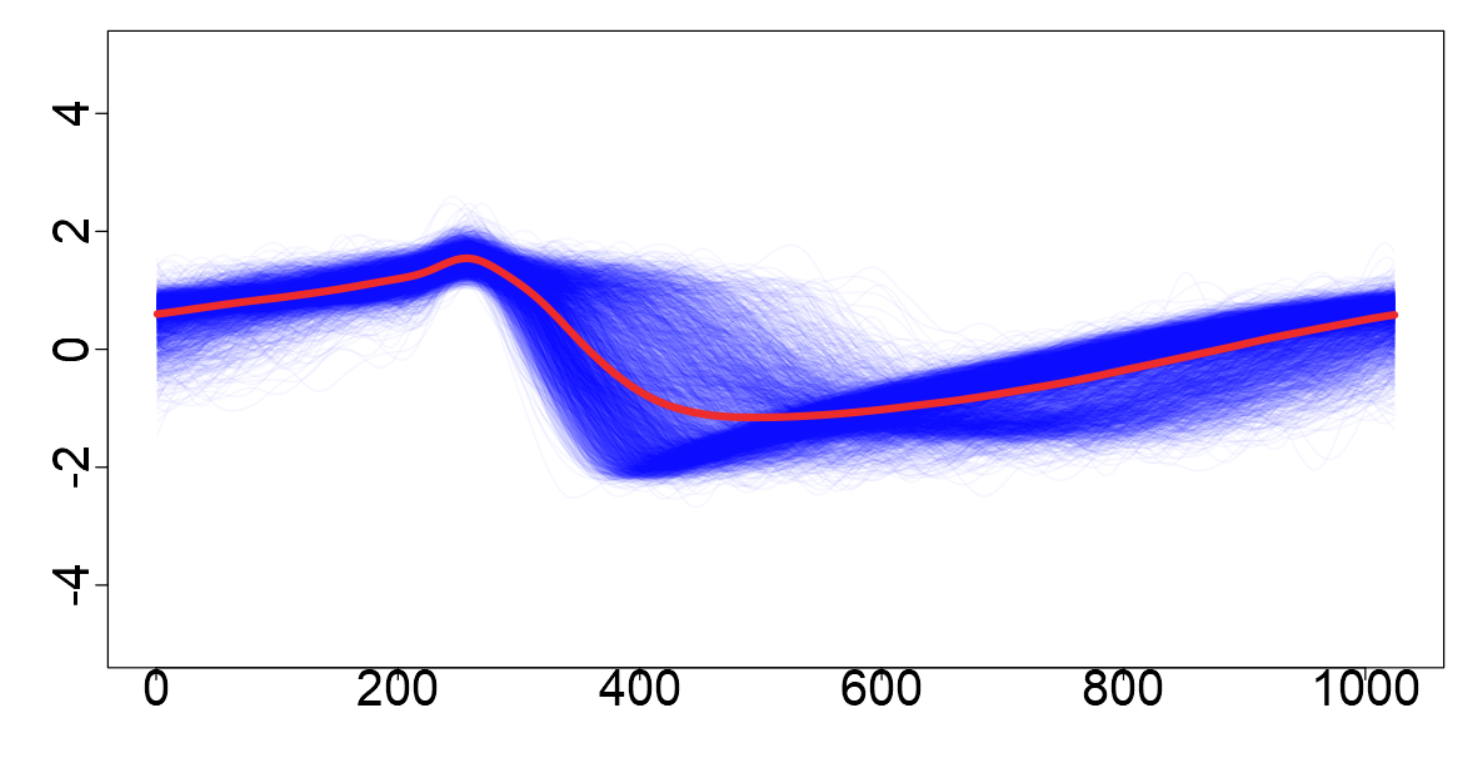} & 
\end{tabular}
\caption{Time series plot of the Star Light Curves data.} 
\label{fig:SLC} 
\end{figure}

\begin{figure}[b!]
\centering
\footnotesize
\begin{tabular}{c c c c}
Screen/TV group & Dishwasher & Cold group & Immersion heater \\
\includegraphics[width=0.23\linewidth]{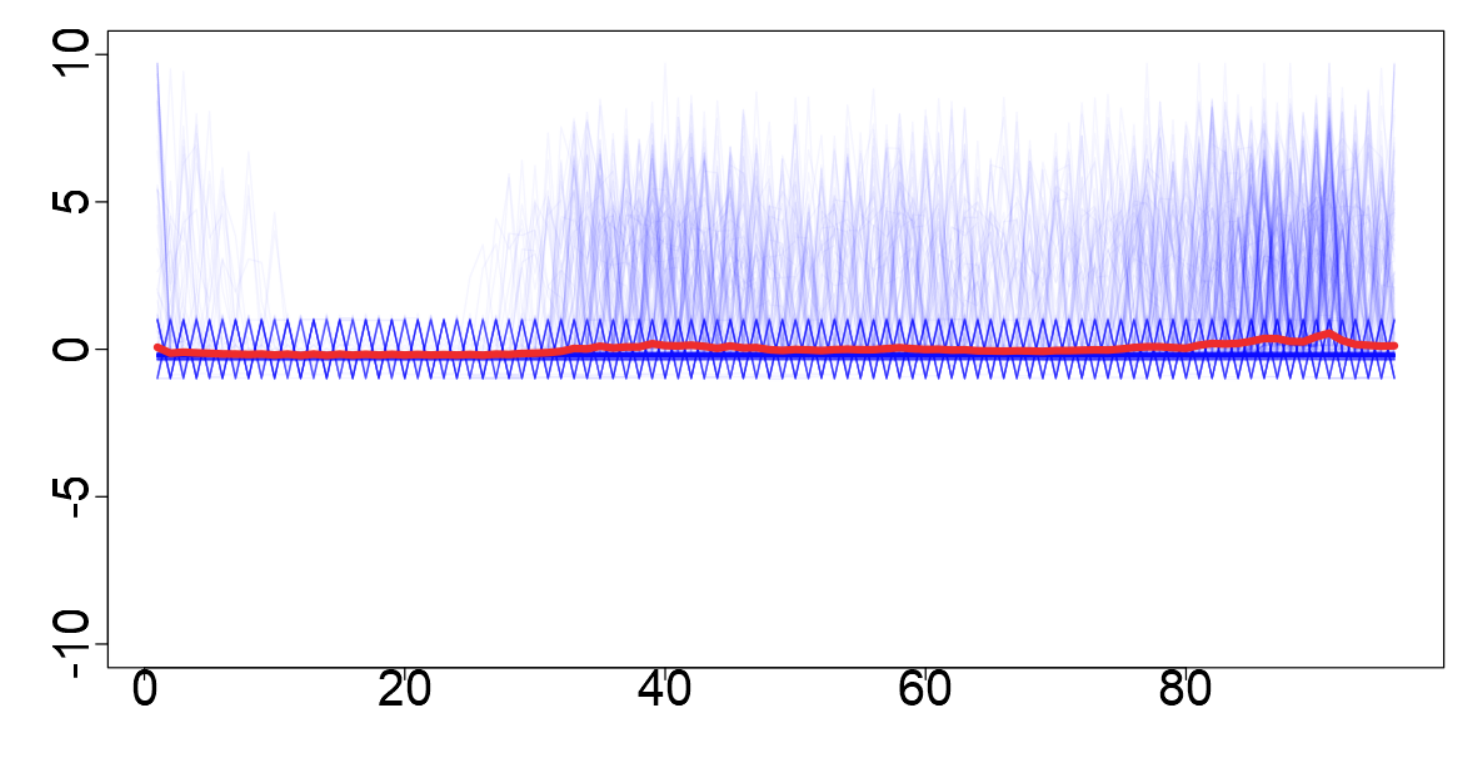} &
\includegraphics[width=0.23\linewidth]{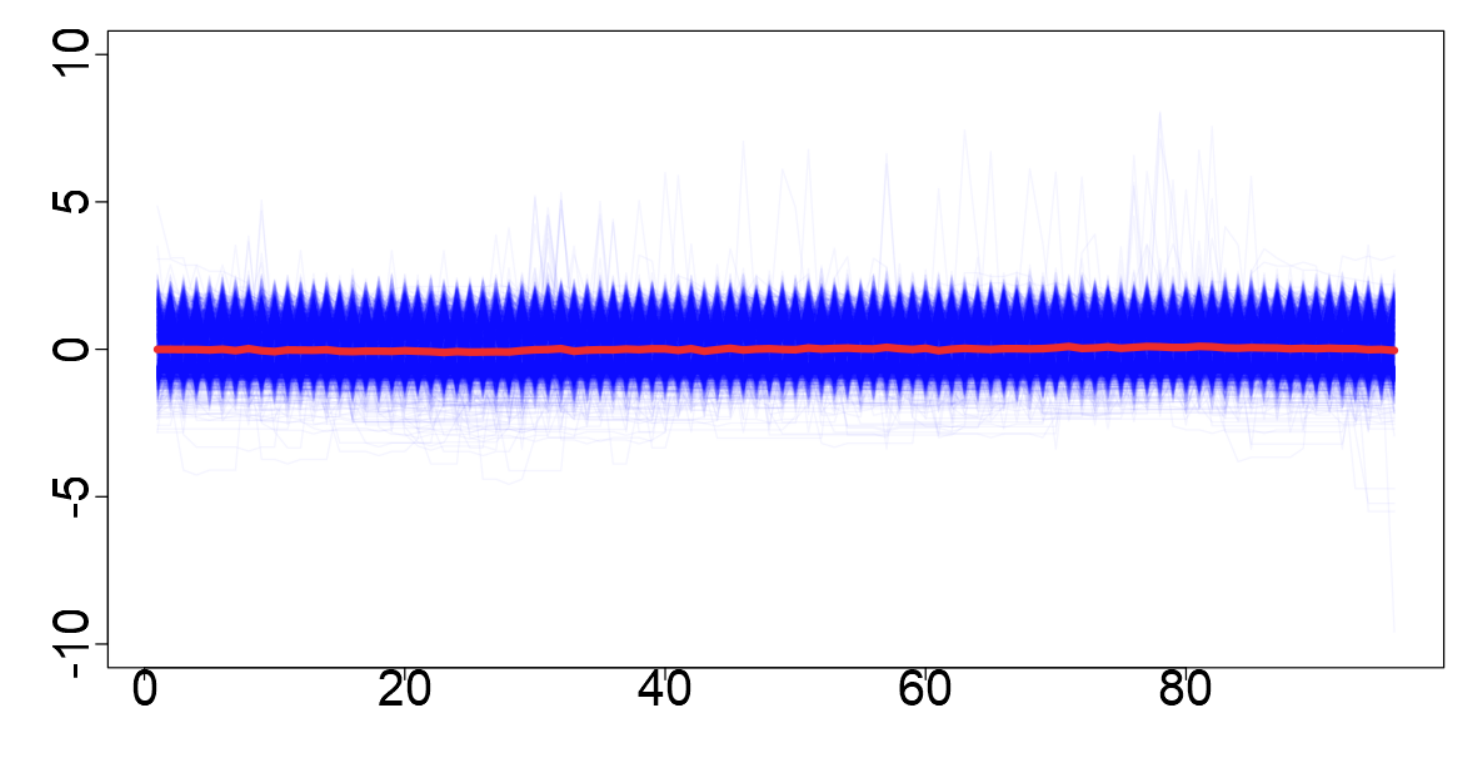} &
\includegraphics[width=0.23\linewidth]{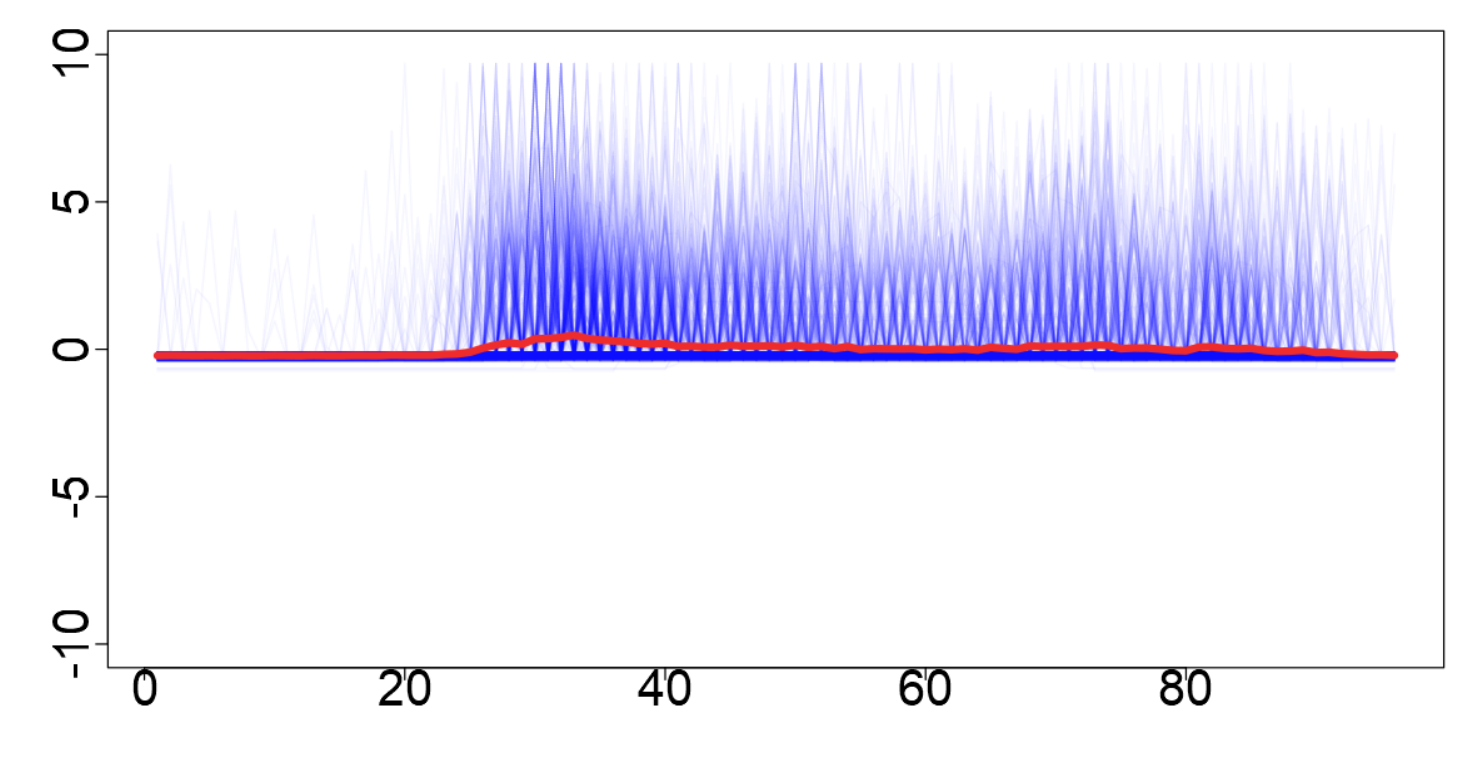} &
\includegraphics[width=0.23\linewidth]{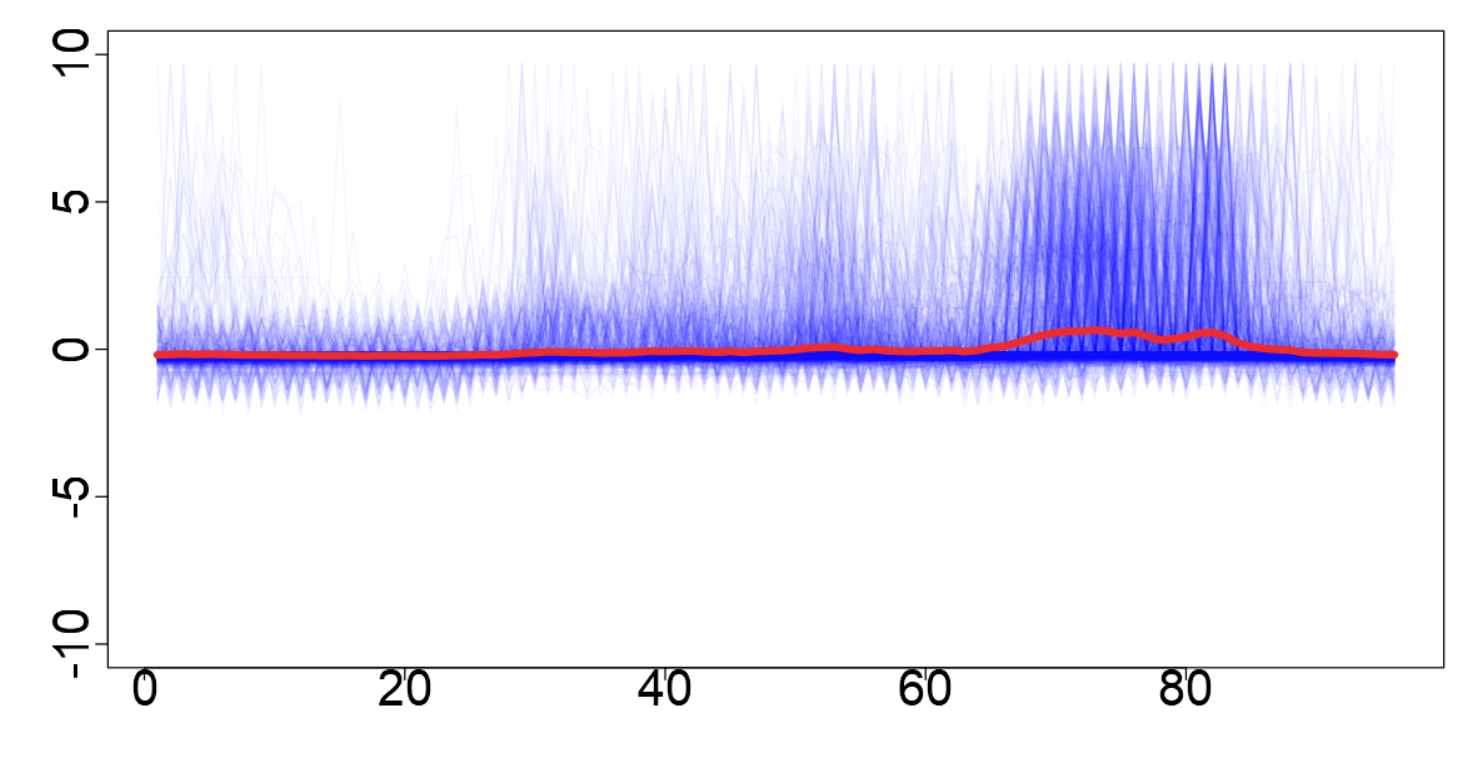}
\end{tabular}

\vspace{0.25cm}

\begin{tabular}{c c c}
Kettle & Oven/cooker & Washing machine \\
\includegraphics[width=0.23\linewidth]{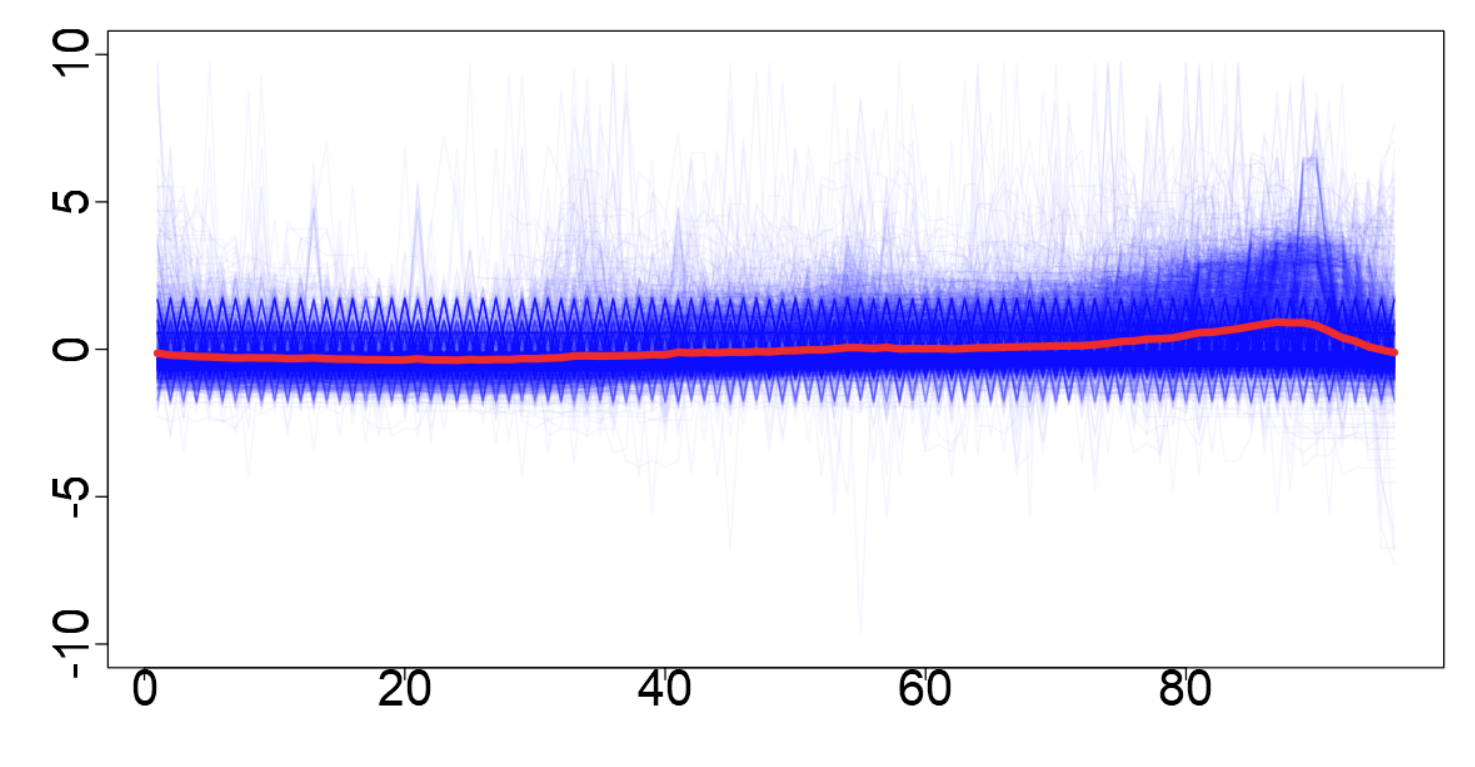} &
\includegraphics[width=0.23\linewidth]{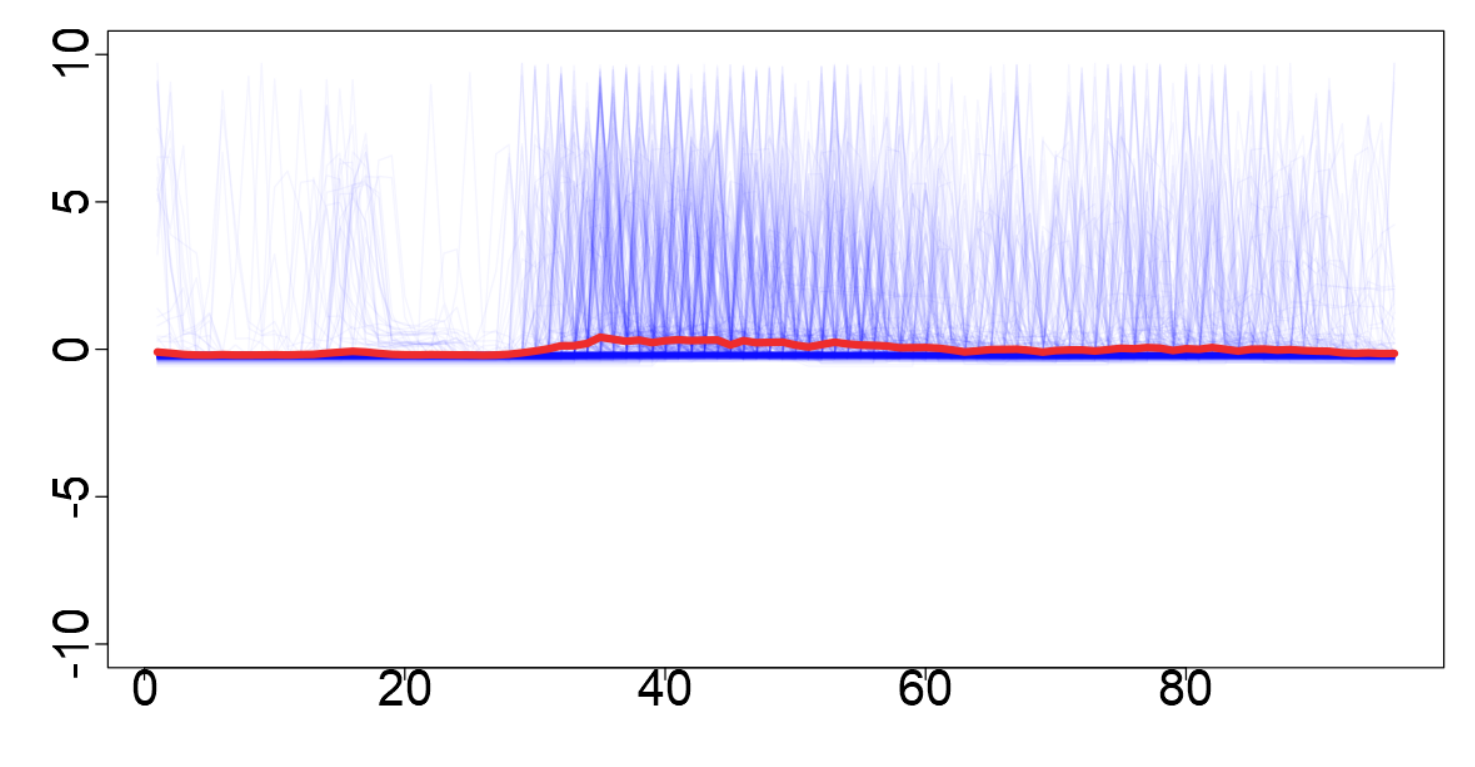} &
\includegraphics[width=0.23\linewidth]{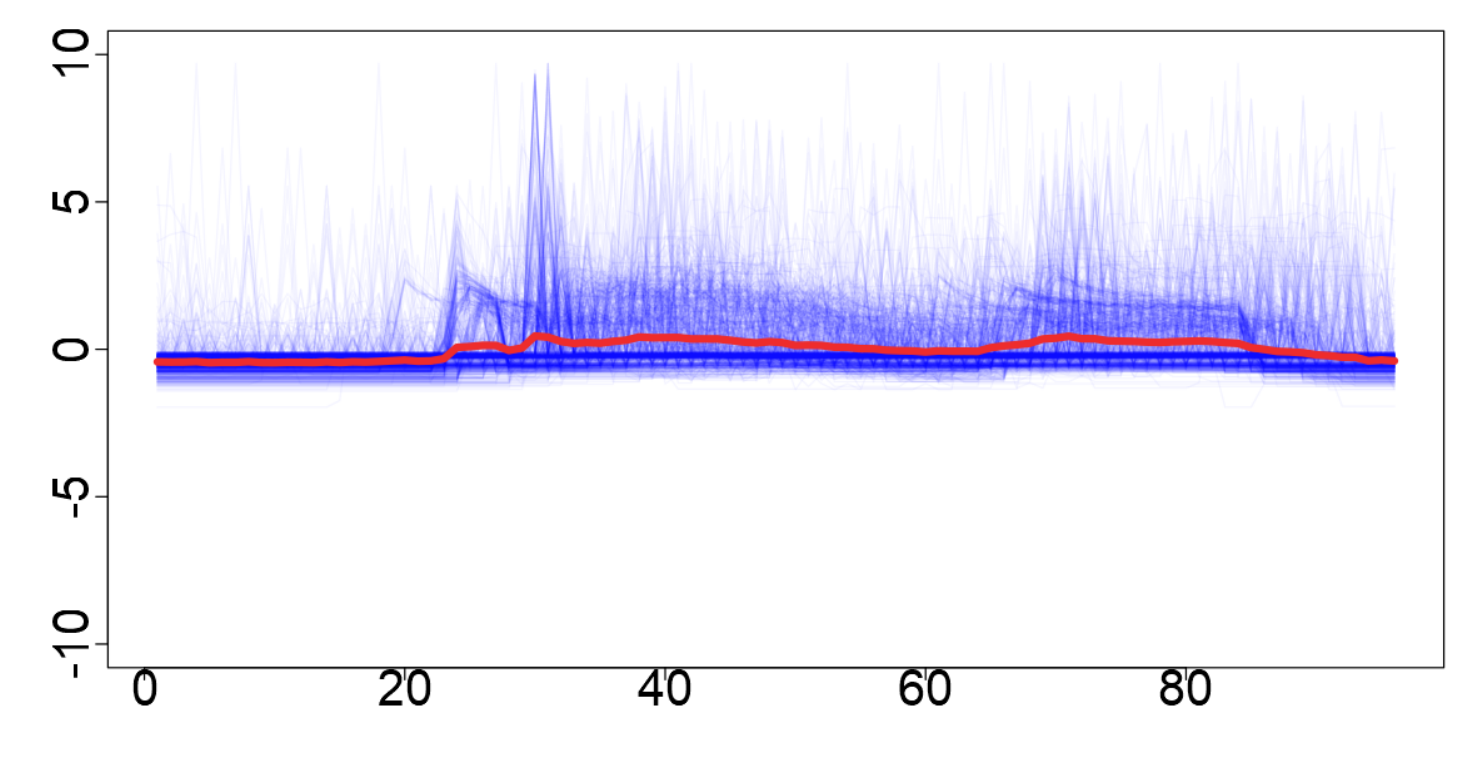}
\end{tabular}
\caption{Time series plots of the Electric Devices data.}
\label{fig:ED}
\end{figure}

The \textbf{Electric Devices} dataset is derived from one month of electricity readings from $251$ households in the UK. Each observation represents a $24$-hour usage profile of an appliance at $15$ minute intervals. Depending on the type of appliance, the observations are classified into seven categories: Screen/TV Group, Dishwasher, Cold Group, Immersion Heater, Kettle, Oven/Cooker, and Washing Machine. The plot of the data in Figure~\ref{fig:ED} shows that the patterns of the observations vary significantly across classes, but the location differences are not particularly prominent. For this dataset, RF achieves the best performance with an error rate of $20.98\%$. It is followed by NLSVM ($25.09\%$) and NN-MADD$_{\rm sc}$ ($26.41\%$), respectively. The performance of linear classifiers is noticeably worse than the others, suggesting that the class boundaries are likely to be nonlinear.

\section{Theoretical Behavior of the Proposed Classifier}
\label{sec:Theorem}

In this section, we study the theoretical behavior of the proposed NN-MADD$_{\rm sc}$ classifier. We consider the high-dimension, high-sample-size (HDHSS) regime, where both the training sample size $n$ and the dimension $d (= d_n)$ go to infinity. We also assume $n_j/n \to \alpha_j \in (0,1)$ for $j=1,\ldots,J$, which ensures that the training sample sizes from the different classes are asymptotically comparable. In the following, we show that for any test observation, the NN-MADD$_{\rm sc}$ and NN-MADD classifiers agree with probability tending to $1$. Consequently, the difference between the misclassification probabilities of the two classifiers converges to $0$.

Recall that for a test observation $\vec Z$, NN-MADD$_{\rm sc}$ classifies it to $\delta_{\rm sc}(\vec Z) = \argmin_{j=1,\ldots,J} \rho_{\rm sc}(\vec Z,\sp X_j)$, where $\rho_{\rm sc}(\vec Z,\sp X_j) = \min_{i=1,\ldots,n_j} \rho_{\rm sc}(\vec Z, \vec X_{ji})$ is the MADD$_{\rm sc}$ distance between $\vec Z$ and the training sample from class-$j$. Similarly, NN-MADD classifies $\vec Z$ into $\delta(\vec Z) = \argmin_{j=1,\ldots,J} \rho(\vec Z,\sp X_j)$, where $\rho(\vec Z,\sp X_j) = \min_{i=1,\ldots,n_j} \rho(\vec Z, \vec X_{ji})$. In the following, we work with the scaled versions of $\rho$ and $\rho_{\rm sc}$, viz., $\widetilde\rho = d_n^{-1/2}\rho$ and $\widetilde\rho_{\rm sc} = d_n^{-1/2}\rho_{\rm sc}$. This is motivated by the high-dimensional behavior of Euclidean distance, which scales as the square-root of the dimension \citep{hall2005geometric, sarkar2019perfect, roy2022generalizations}. Note that the scaling does not affect the classification rules.
\medskip

For our theoretical results, we define two quantities:
\begin{align}\label{eq:beta_n}
    \widetilde\beta_n(\vec Z) =\sup_{1\le j\le J}\ \sup_{1\le i\le n_j} \big| \widetilde\rho_{\mathrm{sc}}(\vec Z,\vec X_{ji}) - \widetilde\rho(\vec Z,\vec X_{ji})\big|,
\end{align}
which denotes the \emph{maximum difference} between $\widetilde\rho$ and $\widetilde\rho_{\rm sc}$ for a particular test observation $\vec Z$, and 
\begin{align}\label{eq:gamma_n}
    \widetilde\gamma_n(\vec Z) = \min_{j\neq j^\ast} \Big\{\widetilde \rho(\vec Z,\sp X_j)-\widetilde\rho(\vec Z,\sp X_{j^\ast})
\Big\}, \text{ where } j^\ast = \argmin_{j=1,\ldots,J} \widetilde\rho(\vec Z, \sp X_{j^\ast}),
\end{align}
which denotes the \emph{minimum margin} achieved by the full data based MADD distance. With these two quantities defined, we get the following result connecting the NN-MADD and NN-MADD$_{\rm sc}$ classifiers.

\begin{proposition}
Let $\delta,\delta_{\rm sc}: \mathbb R^{d_n} \to \{1,\ldots,J\}$ denote the NN-MADD and NN-MADD$_{\rm sc}$ classifiers, respectively. For a test observation $\vec Z$, let $\widetilde\beta_n(\vec Z)$ and $\widetilde\gamma_n(\vec Z)$ be defined as in \eqref{eq:beta_n} and \eqref{eq:gamma_n}, respectively. If $\widetilde\beta_n(\vec Z) \overset{P}{\to} 0$ and $\widetilde\gamma_n(\vec Z) \overset{P}{\to} \gamma>0$ as $n \to \infty$, then $\pr\big(\delta_{\rm sc}(\vec Z) \neq \delta(\vec Z)\big) \to 0$ as $n \to \infty$. If the two conditions hold for all test observations from all the classes, and $R_n(\delta)$, $R_n(\delta_{\rm sc})$ denote the misclassification probabilities of NN-MADD and NN-MADD$_{\rm sc}$, respectively, then $|R_n(\delta_{\rm sc}) - R_n(\delta)| \overset{P}{\to} 0$ as $n \to \infty$.
\label{theorem}
\end{proposition}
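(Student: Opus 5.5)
The plan is a deterministic perturbation argument on the event where the full-data margin is large and the approximation error is small, followed by a probability bound on that event.

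\emph{Step 1: a deterministic implication.} For every $j$ we have
\[
\bigl|\widetilde\rho_{\rm sc}(\vec Z,\sp X_j) - \widetilde\rho(\vec Z,\sp X_j)\bigr| \le \max_{i}\bigl|\widetilde\rho_{\rm sc}(\vec Z,\vec X_{ji}) - \widetilde\rho(\vec Z,\vec X_{ji})\bigr| \le \widetilde\beta_n(\vec Z).
\]
This holds because $|\min_i a_i - \min_i b_i| \le \max_i |a_i - b_i|$.

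Now fix $j^\ast = \delta(\vec Z)$ and any $j \ne j^\ast$. Then
\[
\widetilde\rho_{\rm sc}(\vec Z,\sp X_j) - \widetilde\rho_{\rm sc}(\vec Z,\sp X_{j^\ast}) \ge \widetilde\rho(\vec Z,\sp X_j) - \widetilde\rho(\vec Z,\sp X_{j^\ast}) - 2\widetilde\beta_n(\vec Z) \ge \widetilde\gamma_n(\vec Z) - 2\widetilde\beta_n(\vec Z).
\]
So if $2\widetilde\beta_n(\vec Z) < \widetilde\gamma_n(\vec Z)$, then $j^\ast$ is the unique minimizer of $\widetilde\rho_{\rm sc}(\vec Z,\sp X_j)$ over $j$. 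Hence $\delta_{\rm sc}(\vec Z) = \delta(\vec Z)$, and ties are not an issue on this event.

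\emph{Step 2: the probability bound.} Split on the event $\{\widetilde\beta_n(\vec Z) < \gamma/4\} \cap \{\widetilde\gamma_n(\vec Z) > \gamma/2\}$. On this event $2\widetilde\beta_n < \gamma/2 < \widetilde\gamma_n$. Therefore
\[
\pr(\delta_{\rm sc}(\vec Z)\neq\delta(\vec Z)) \le \pr(\widetilde\beta_n(\vec Z) \ge \gamma/4) + \pr(|\widetilde\gamma_n(\vec Z)-\gamma| \ge \gamma/2).
\]
Both terms tend to $0$ by the two assumed convergences in probability. The probability here is joint over $\vec Z$, the training sample and the DPP randomness, and this is the sense in which the hypotheses are read.

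\emph{Step 3: the risks.} Let $\vec Z$ come from class $c$ with prior $\pi_c$. Then
\[
R_n(\delta_{\rm sc}) - R_n(\delta) = \sum_c \pi_c \bigl[\pr(\delta_{\rm sc}(\vec Z)\neq c \mid \cdot) - \pr(\delta(\vec Z)\ne c \mid \cdot)\bigr],
\]
where the conditioning is on the training data and the representative set. Since $|1\{a\neq c\}-1\{b\neq c\}| \le 1\{a\ne b\}$,
\[
|R_n(\delta_{\rm sc}) - R_n(\delta)| \le \sum_c \pi_c\, \pr(\delta_{\rm sc}(\vec Z)\ne\delta(\vec Z)\mid \text{training data}) =: W_n .
\]
$W_n$ is a nonnegative random variable, and $\E W_n = \sum_c \pi_c\, \pr(\delta_{\rm sc}(\vec Z)\ne\delta(\vec Z)) \to 0$ by Step 2 applied to each class. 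Markov's inequality then gives $W_n \overset{P}{\to} 0$, and hence $|R_n(\delta_{\rm sc})-R_n(\delta)| \overset{P}{\to} 0$.

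\emph{Main obstacle.} The only delicate point is Step 3. One has to be precise that $R_n$ is a conditional (training-data-dependent) misclassification probability, which is what makes convergence in probability the right mode. The move from unconditional probabilities to conditional ones goes through Markov's inequality as above. If $R_n$ is instead read as unconditional, the bound becomes deterministic and the convergence holds trivially.
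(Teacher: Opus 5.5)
Your proposal is correct and follows the paper's proof. Like the paper, you use the three-term decomposition to get $\widetilde\rho_{\rm sc}(\vec Z,\sp X_j)-\widetilde\rho_{\rm sc}(\vec Z,\sp X_{j^\ast})\ge\widetilde\gamma_n(\vec Z)-2\widetilde\beta_n(\vec Z)$, and then conclude from $\pr\big(\delta_{\rm sc}(\vec Z)\ne\delta(\vec Z)\big)\le\pr\big(\widetilde\gamma_n(\vec Z)\le 2\widetilde\beta_n(\vec Z)\big)\to 0$. You also make explicit two points the paper leaves implicit: the inequality $|\min_i a_i-\min_i b_i|\le\max_i|a_i-b_i|$ behind $|\widetilde\rho_{\rm sc}(\vec Z,\sp X_j)-\widetilde\rho(\vec Z,\sp X_j)|\le\widetilde\beta_n(\vec Z)$, and the Markov-inequality step for the risks, which the paper only calls a direct consequence.
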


\begin{proof}
Let $j^\ast = \argmin_{j=1,\ldots,J} \widetilde\rho(\vec Z, \sp X_j)$ denote the class to which NN-MADD classifies $\vec Z$. Let $j\neq j^\ast$ denote any other class. Then, 
\begin{align*}
\widetilde \rho_{\mathrm{sc}}(\vec Z,\sp X_j)-\widetilde \rho_{\mathrm{sc}}(\vec Z,\sp X_{j^\ast}) = \big(\widetilde \rho_{\mathrm{sc}}(\vec Z,\sp X_j)-\widetilde \rho(\vec Z,\sp X_j)\big) + \big(\widetilde \rho(\vec Z,\sp X_j)-\widetilde \rho(\vec Z,\sp X_{j^\ast})\big) + \big(\widetilde \rho(\vec Z,\sp X_{j^\ast})-\widetilde \rho_{\mathrm{sc}}(\vec Z,\sp X_{j^\ast})\big).
\end{align*}
From the definition of $\widetilde \beta_n(\vec Z)$, we have $\bigl|\widetilde \rho(\vec Z,\sp X_j)-\widetilde \rho_{\mathrm{sc}}(\vec Z,\sp X_j)\bigr| \le \widetilde \beta_n(\vec Z)$ for every $j=1,\ldots,J$. Hence, 
\[
\widetilde \rho_{\mathrm{sc}}(\vec Z,\sp X_j)-\widetilde \rho_{\mathrm{sc}}(\vec Z,\sp X_{j^\ast}) 
\ge -2\widetilde \beta_n(\vec Z)
+\bigl(\widetilde \rho(\vec Z,\sp X_j)-\widetilde \rho(\vec Z,\sp X_{j^\ast})\bigr).
\]
Taking minimum over all $j\neq j^\ast$, we obtain
\[
\min_{j\neq j^\ast} \Bigl\{\widetilde \rho_{\mathrm{sc}}(\vec Z,\sp X_j)-\widetilde \rho_{\mathrm{sc}}(\vec Z,\sp X_{j^\ast}) \Bigr\} \ge - 2\widetilde \beta_n(\vec Z) + \min_{j\neq j^\ast} \Bigl\{\widetilde \rho(\vec Z,\sp X_j)-\widetilde \rho(\vec Z,\sp X_{j^\ast})\Bigr\} = -2\widetilde\beta_n(\vec Z) + \widetilde\gamma_n(\vec Z).
\]
Now, NN-MADD$_{\rm sc}$ would classify $\vec Z$ to class $j^\ast$ if the left side is positive. Thus, $\pr\big(\delta_{\rm sc}(\vec Z) \ne j^\ast\big) \le \pr\big(\widetilde\gamma_n(\vec Z) \le 2\widetilde\beta_n(\vec Z)\big)$. Since $\widetilde\gamma_n(\vec Z) \overset{P}{\to} \gamma >0$ and $\widetilde\beta_n(\vec Z) \overset{P}{\to} 0$ as $n \to \infty$, it follows that $\pr\big(\widetilde\gamma_n(\vec Z) \le 2\widetilde\beta_n(\vec Z)\big) \to 0$ as $n \to \infty$. Since there are finitely many classes, this completes the proof of the first part of the result. The second part of the result is a direct consequence of the first part.
\end{proof}

While $\widetilde\beta_n$ quantifies how well the scalable version $\widetilde\rho_{\rm sc}$ approximates $\widetilde\rho$, $\widetilde\gamma_n$ quantifies the ability of $\widetilde\rho$ to distinguish between classes. In this sense, the two terms can be viewed as the \emph{noise} of the scalable version and \emph{signal} of the full version, respectively. Our result shows that when the signal-to-noise ratio is large enough, then the behavior of the scalable version matches the full version. This type of result is not surprising. Rather, the utility of the result lies in identifying the signal and noise terms themselves, giving a precise understanding of the method.

For our result, we have assumed that the signal $\widetilde\gamma_n(\vec Z)$ converges to a positive quantity. This type of convergence is well known for $\widetilde\rho$ in the HDLSS regime, which is a ramification of the class separation property of MADD \citep{pal2016high, sarkar2019perfect}. However, a closer look into the proof reveals that the result continues to hold even when $\widetilde\gamma_n(\vec Z)$ converges to $0$, provided that the rate is slower than that of $\widetilde\beta_n(\vec Z)$. In fact, our result continues to hold as long as $\pr(\widetilde\gamma_n(\vec Z) > 2 \widetilde\beta_n(\vec Z)) \to 1$ as $n \to \infty$. This precisely quantifies the allowable signal-to-noise ratio for our proposed method.

The assumption $\widetilde\beta_n(\vec Z) \overset{P}{\to} 0$ as $n \to \infty$, requiring the maximum difference between $\widetilde\rho$ and $\widetilde\rho_{\rm sc}$ to be negligible, is admittedly strong. Note that, for a fixed set of observations, $\widetilde\rho_{\rm sc}$ can be viewed as the \emph{sample analogue} of $\widetilde\rho$, where \emph{sampling} refers to the selection of the representative observations. In fact, $\widetilde\rho_{\rm sc}$ can be viewed as the sample mean, whereas $\widetilde\rho$ serves as the population mean. Therefore, techniques from survey sampling can be used to tackle $\widetilde\beta_n(\vec Z)$. Indeed, in Appendix~\ref{appendix:maths} we show that
\begin{align}\label{eq:beta_n_bound}
\widetilde\beta_n(\vec Z) \le \sup_{1\le j \le J} \sup_{1 \le i \le n_j} \big|\widetilde{\mu}_{ji}^p(\vec Z) - \widetilde{\rho}(\vec Z,\vec X_{ji})\big| + \bigO_{P}\left(\sqrt{\frac{\log n}{k^\ast}}\right),
\end{align}
where $\widetilde\mu^p_{ji}(\vec Z) = {k^\ast}^{-1} \sum_{j'=1}^J\sum_{i'=1}^{n_{j'}} p_{j'i'} \big|\|\vec Z-\vec X_{j'i'}\| - \|\vec X_{ji} - \vec X_{j'i'}\|\big|$ with $p_{j'i'} = \pr(\vec X_{j'i'} \in \sp X^\ast | \sp X)$ denoting the inclusion probability of $\vec X_{j'i'}$ in the representative set $\sp X^\ast$. Here, $k^\ast = |\sp X^\ast| = \sum_{j=1}^J k_j$ is the size of the representative set $\sp X^\ast$. We also derive conditions under which the first part converges to $0$ in probability. But, those conditions are admittedly strong, so we do not report them here. Instead, we numerically probe the validity of the condition by calculating $\widetilde\beta_n(\vec Z)$ in all our simulation examples under varying sample sizes and dimensions. For each setting, we generate the training observations, and compute $\widetilde\beta_n(\vec Z)$ for $5000$ randomly generated test observations $\vec Z$. The entire process is replicated $5$ times, giving us $25000$ values of $\widetilde\beta_n(\vec Z)$. In Figure~\ref{fig:assumption}, we show boxplots of these $25000$ $\widetilde\beta_n(\vec Z)$ values for Example~\ref{example2} (where the underlying populations are mixtures of Gaussian distributions) and Example~\ref{example5} (where the underlying populations differ in their locations and scales). The results for the other examples are given in Appendix~\ref{appendix: Simulation}. From the plots, it can be seen that the $\widetilde\beta_n(\vec Z)$ values decrease as $n$ and $d$ increase simultaneously. The same pattern is observed across all our simulation examples. The plots also hint at the allowable rates of growth of $n$ and $d$. However, the precise mathematical relationship is, admittedly, not clear to us at the moment.

\begin{figure}[t]
    \small
    \centering
    \begin{tabular}{c c}
    (a) Example~\ref{example2} (mixture of Gaussians) & (b) Example~\ref{example5} (location-scale problem)\\
    \includegraphics[width=0.48\textwidth]{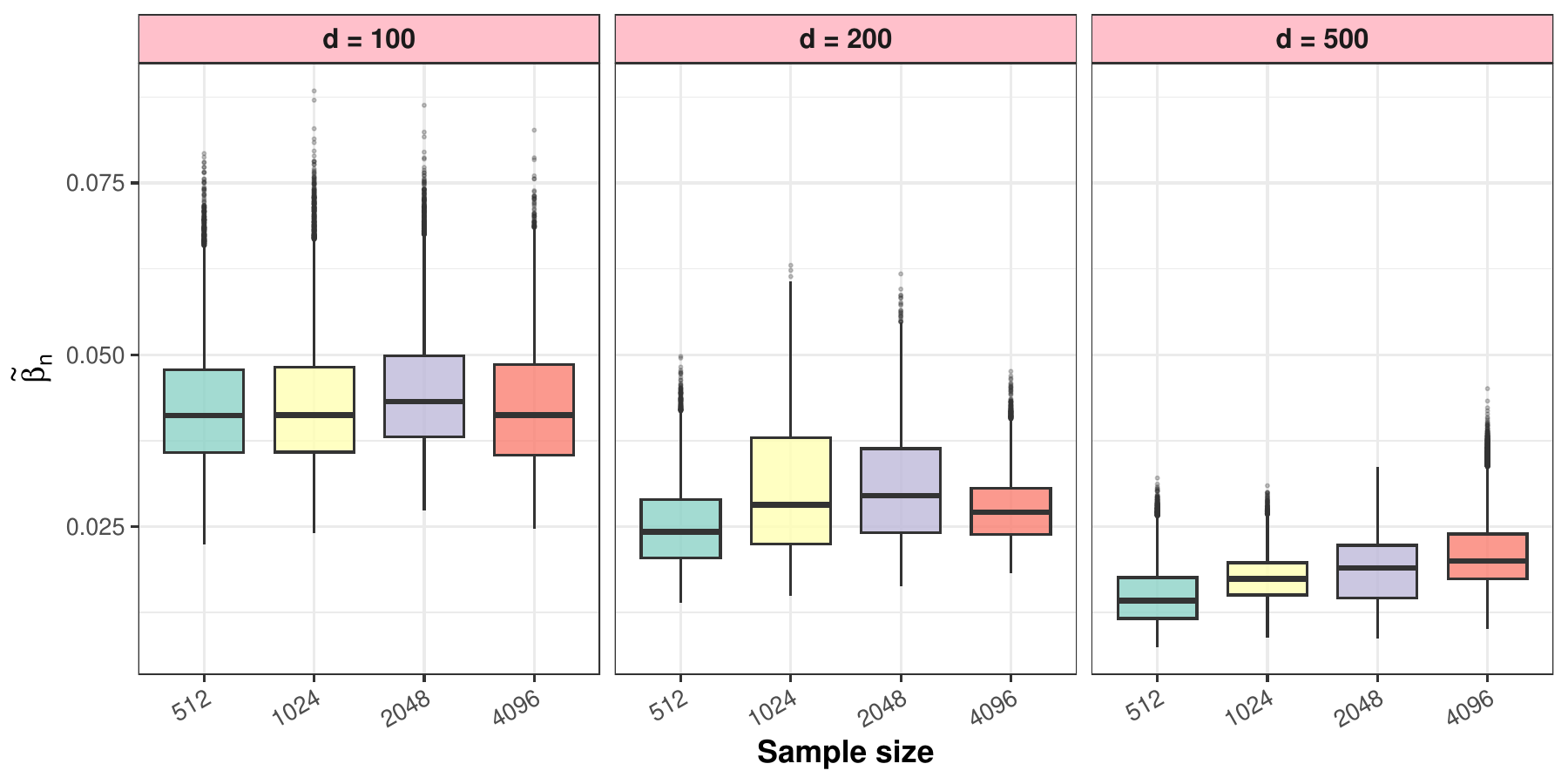} &
    \includegraphics[width=0.48\textwidth]{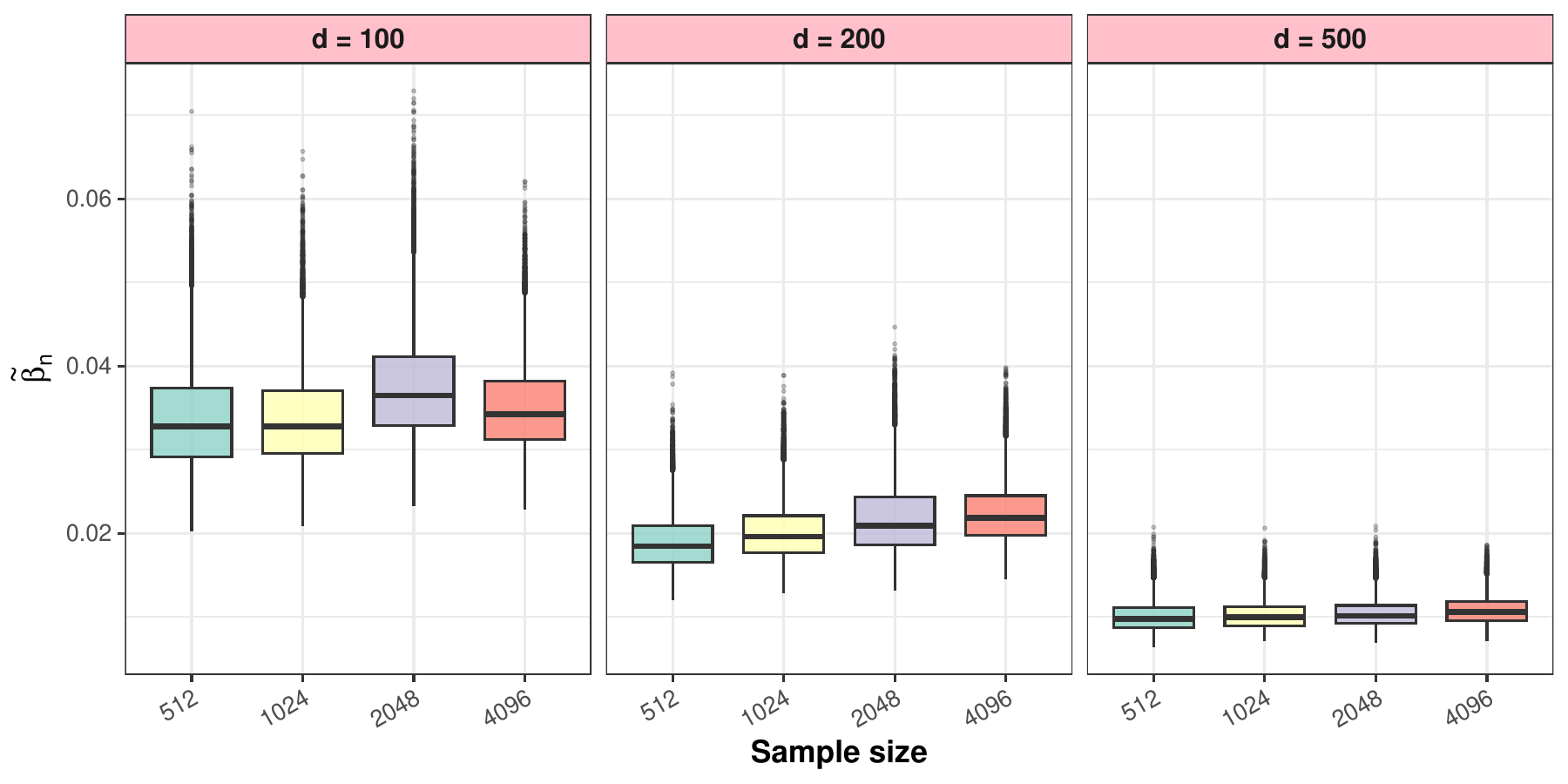}
    \end{tabular}
    \caption{Boxplots of $\widetilde\beta(\vec Z)$ values in Examples~\ref{example2} and \ref{example5} for varying values of $n$ and $d$.\label{fig:assumption}}
\end{figure}

\begin{remark}\label{remark:representative_sample_size}
    Eq.~\eqref{eq:beta_n_bound} gives a bound on the representative sample size $k^\ast$, viz., $k^\ast \gtrsim \log n$. Also, in Appendix~\ref{appendix:maths}, we show that the first term in Eq.~\ref{eq:beta_n_bound} converges to $0$ if $k_j/k^\ast - n_j/n \to 0$ as $n \to \infty$. Our empirical choice of $k_j \propto \sqrt{d} (n_j/n) \log n_j$, along with $n_j/n \to \alpha_j \in (0,1)$, ensures that $k^\ast \asymp \sqrt{d}\log n$ as $n \to \infty$. This choice also ensures that $k_j/k^\ast \to \alpha_j$, and thus $k_j/k^\ast - n_j/n \to 0$ as $n \to \infty$. The factor $\sqrt{d}$ is used as we particularly focus on situations where $n$ is much larger than $d$. In situations where $d$ is also very large, other factors like $\log(d)$ can be used as well. In our simulations, the different factors of $d$ did not make significant differences in the results.
\end{remark}

Here, we only study the theoretical behavior of NN-MADD$_{\rm sc}$. Similar results can be established for NN-gMADD$_{\rm sc}$ as well. However, for the sake of brevity, we do not pursue those results here.

\section{Concluding Remarks}
\label{sec:Conclusion}
In this paper, we propose a scalable version of the semi-metric mean absolute difference of distances (MADD) and its generlized version gMADD. Both MADD and gMADD are useful in high-dimensional settings where the usual Euclidean distance often suffers from distance concentration. However, their heavy computational costs for large sample sizes restrict their applicability to relatively small-sized training datasets. To address this issue, we use determinantal point processes (DPP) to select a diverse set of representative points from the training data, which are then used to construct scalable versions of MADD and gMADD. A limitation of the DPP approach is that it requires the eigen-decomposition of the similarity matrix, which can be computationally expensive for very large datasets. We employ random Fourier features (RFF) to obtain an efficient approximation of the similarity matrix, which significantly reduces the computational burden of the scalable version of MADD. However, due to technical reasons, the RFF-based approach could not be generalized to reduce the computational complexity of gMADD. Reducing the computational complexity of the scalable version of gMADD for large training datasets, therefore, remains an open direction for future research.

In this work, we study the utility of the proposed scalable version of MADD for classification problems only. However, MADD is shown to perform well in other high-dimensional problems as well, like clustering, two-sample testing. A systematic study of scalable versions of MADD for these problems may also be a possible direction of future research. The utility of the proposed method is demonstrated via extensive simulation studies. We also provide some theoretical justification behind the behavior of the proposed method. But, our theoretical analysis is admittedly incomplete. A more rigorous theoretical study of the proposed method may be considered in future research.

\section*{Acknowledgment}
The research of Soham Sarkar was partially supported by the INSPIRE Faculty Fellowship from the Department of Science and Technology, Government of India.

\appendix
\section*{Appendices}
The appendices contain algorithms referenced in the main text, additional mathematical details, and additional empirical results. The algorithms are presented in Appendix~\ref{appendix: Algorithms} and the mathematical details are in Appendix~\ref{appendix:maths}. The additional numerical results are shown in two parts. In Appendix~\ref {appendix: Simulation}, we  report results from additional simulation studies. In Appendix~\ref{appendix:Real}, we show the results from the analyses of the benchmarks datasets omitted in the main text.

\section {Algorithms}
\label{appendix: Algorithms}

Algorithm~\ref{alg:DPP} describes how to draw a sample from a standard DPP, which we refer to as the naive DPP in the main paper. This algorithm is taken from \cite{kulesza2012determinantal}, which we state here for completeness.

\noindent\rule{\linewidth}{0.4pt}
\refstepcounter{algorithm}
\noindent\textbf{Algorithm~\thealgorithm} Naive DPP algorithm\par\vspace{-0.25cm}
\noindent\rule{\linewidth}{0.4pt}\vspace{-0.5cm}
\label{alg:DPP}
\begin{algorithmic}[1]
\small

\Require $\mat L$ -- Similarity matrix of size $N \times N$.
\smallskip
\Ensure $\boldsymbol{\pi}$ -- Selected subset from $\{1,\ldots,N\}$.
\medskip

\Procedure{Eigen-decomposition of $\mat L$}{}
\State Find eigen-pairs $\{(\lambda_1,\vec v_1),\ldots,(\lambda_N,\vec v_N)\}$, where $\lambda_1,\ldots,\lambda_N$ are the eigenvalues and $\vec v_1,\ldots,\vec v_N$ are the corresponding orthonormal eigenvectors. 
\EndProcedure
\medskip

\Procedure{Sample eigenvectors}{}
\State \textbf{Initialize:} $J \gets \emptyset$
\For{$n = 1,2,\ldots,N$}
    \State $J \gets J \cup \{n\}$ with probability $\dfrac{\lambda_n}{\lambda_n + 1}$
\EndFor
\State $V \gets \{\vec v_n\}_{n \in J}$.
\EndProcedure
\medskip

\Procedure{Sample items}{}
\State $\boldsymbol{\pi} \gets \emptyset$.
\While{$|V| > 0$}
    \State Select $i \in \{1,\dots,N\}$ with probability $p(i) = \dfrac{1}{|V|} \sum_{\vec v \in V} (\vec v^\top \vec e_i)^2$, where $\vec e_i$ is the $i$-th canonical basis vector in $\mathbb R^N$ with $1$ in the $i$-th position, and $0$ elsewhere.
    \State $\boldsymbol{\pi} \gets \boldsymbol{\pi} \cup \{i\}$
    \State Replace $V$ with an orthonormal basis for the subspace of $V$ orthogonal to $\vec e_i$.
\EndWhile
\EndProcedure
\medskip

\State \Return $\boldsymbol{\pi}$
\end{algorithmic}
\noindent\rule{\linewidth}{0.4pt}

Algorithm~\ref{alg:k-DPP} describes the procedure for sampling from a $k$-DPP. The algorithm is similar to the one given in \cite{kulesza2012determinantal}. However, we also use Random Fourier Features (RFFs) to further reduce the computational complexity for larger datasets. Therefore, we modify the algorithm to take either the full similarity matrix $\mat L$ or an RFF matrix $\mat R$ as input. The rest of the algorithm is modified accordingly.

\noindent\rule{\linewidth}{0.4pt}
\refstepcounter{algorithm}
\noindent\textbf{Algorithm~\thealgorithm} Sampling from a $k$-DPP \par\vspace{-0.25cm}
\noindent\rule{\linewidth}{0.4pt}\vspace{-0.5cm}
\label{alg:k-DPP}
\begin{algorithmic}[1]
\small

\Require $\mat L$ -- A similarity matrix of size $N\times N$, or\newline
        $\mat R$ -- An RFF matrix of size $N\times M$ such that $\mat L \approx \mat R\mat R^\top$.\newline
         $k$ -- Number of samples to be selected.
\smallskip
\Ensure $\boldsymbol{\pi}=(\pi_1,\ldots,\pi_k)$ -- Selected subset of size $k$ from $\{1,\ldots,N\}$.
\smallskip

\If{the input is the similarity matrix $\mat L$}
    \State Perform eigen-decomposition of $\mat L$. Find eigen-pairs $\{(\lambda_1,\vec v_1),\ldots,(\lambda_N,\vec v_N)\}$, where $\lambda_1,\ldots,\lambda_N$ are the eigenvalues and $\vec v_1,\ldots,\vec v_N$ are the corresponding orthonormal eigenvectors.
    \State Set $M \gets N$.
\ElsIf{the input is the RFF matrix $\mat R$}
    \State Compute the singular value decomposition of $\mat R$. Let $\sigma_1,\ldots,\sigma_M$ be the singular values and $\vec v_1,\ldots,\vec v_M$ be the corresponding orthonormal left singular vectors. Obtain the eigen-pairs $\{(\lambda_1,\vec v_1),\ldots,(\lambda_M,\vec v_M)\}$, where $\lambda_i = \sigma_i^2$.
\EndIf
\medskip

\Procedure{Compute elementary symmetric polynomials}{}
    \State \textbf{Initialize:} $E$ -- an $(M+1)\times(k+1)$ matrix of zeros.
    \State $E[n,1] \gets 1$, for $n=1,2,\ldots,M+1$.
    \For{$\ell = 2,3,\ldots,k+1$}
        \For{$n = 2,3,\ldots,M+1$}
            \State $E[n,\ell] \gets E[n-1,\ell] + \lambda_{n-1} E[n-1,\ell-1]$.
        \EndFor
    \EndFor
\EndProcedure
\medskip

\Procedure{Sample $k$ eigenvectors}{}
    \State \textbf{Initialize:} $J \gets \emptyset$, $\ell \gets k$.
    \For{$n = M,M-1,\ldots,1$}
        \If{$\ell = 0$}
            \State break
        \EndIf
        \State Draw $u \sim \mathrm{Uniform}(0,1)$.
        \If{$u < \dfrac{\lambda_n E[n,\ell]}{E[n+1,\ell+1]}$}
            \State $J \gets J \cup \{n\}$.
            \State $\ell \gets \ell - 1$.
        \EndIf
    \EndFor
    \State $V \gets \{\vec v_n\}_{n\in J}$.
\EndProcedure
\medskip

\Procedure{Sample $k$ items}{}
    \State $\boldsymbol{\pi} \gets \emptyset$.
    \While{$|V| > 0$}
        \State Select $i \in \{1,\ldots,N\}$ with probability $p(i)=\dfrac{1}{|V|}\sum_{\vec v\in V}(\vec v^\top \vec e_i)^2$, where $\vec e_i$ is the $i$-th canonical basis vector in $\mathbb R^N$.
        \State $\boldsymbol{\pi} \gets \boldsymbol{\pi}\cup\{i\}$.
        \State Replace $V$ by an orthonormal basis for the subspace of $V$ orthogonal to $\vec e_i$.
    \EndWhile
\EndProcedure
\smallskip

\State \Return $\boldsymbol{\pi}$.

\end{algorithmic}
\noindent\rule{\linewidth}{0.4pt}

\bigskip

Algorithm~\ref{alg:k-DPP_modified} presents a deterministic version of the $k$-DPP algorithm, which is used in our proposed incremental cross-validation technique (see Section~\ref{sec:cross-validation} in the main paper).

\noindent\rule{\linewidth}{0.4pt}
\refstepcounter{algorithm}
\noindent\textbf{Algorithm~\thealgorithm} Sampling from a Deterministic $k$-DPP \par\vspace{-0.25cm}
\noindent\rule{\linewidth}{0.4pt}\vspace{-0.5cm}
\label{alg:k-DPP_modified}
\begin{algorithmic}[1]
\small

\Require $\mat L$ -- A similarity matrix of size $N\times N$, or\newline
        $\mat R$ -- An RFF matrix of size $N\times M$ such that $\mat L \approx \mat R\mat R^\top$.\newline
         $k$ -- Number of samples to be selected.
\medskip
\Ensure $\boldsymbol{\pi} = (\pi_1,\ldots,\pi_k)$ -- Selected subset of size $k$ from $\{1,\ldots,N\}$.
\medskip

\If{the input is the similarity matrix $\mat L$}
    \State Perform eigen-decomposition of $\mat L$. Find eigen-pairs $\{(\lambda_1,\vec v_1),\ldots,(\lambda_N,\vec v_N)\}$, where $\lambda_1,\ldots,\lambda_N$ are the eigenvalues and $\vec v_1,\ldots,\vec v_N$ are the corresponding orthonormal eigenvectors.
    \State Set $M \gets N$.
\ElsIf{the input is the RFF matrix $\mat R$}
    \State Compute the singular value decomposition of $\mat R$. Let $\sigma_1,\ldots,\sigma_M$ be the singular values and $\vec v_1,\ldots,\vec v_M$ be the corresponding orthonormal right singular vectors. Obtain the eigen-pairs $\{(\lambda_1,\vec v_1),\ldots,(\lambda_M,\vec v_M)\}$, where $\lambda_i = \sigma_i^2$.
\EndIf
\medskip

\Procedure{Select $k$ eigenvectors}{}
\State Let $\lambda_{(1)} \ge \lambda_{(2)} \ge \cdots \ge \lambda_{(M)}$ be the eigenvalues sorted in decreasing order
\State $J \gets \{ i : \lambda_i \in \{\lambda_{(1)}, \dots, \lambda_{(k)}\} \}$
\State $V \gets \{\vec v_n\}_{n \in J}$.
\EndProcedure
\medskip

\Procedure{Sample $k$ items}{}
\State $\boldsymbol{\pi} \gets \emptyset$
\While{$|V| > 0$}
    \State Compute $p(i) = \dfrac{1}{|V|} \sum_{\vec v \in V} (\vec v^\top \vec e_i)^2$, for $i=1,2,\dots,N$,  where $\vec e_i$ is a vector with $1$ in the $i$-th position, and $0$ elsewhere.   
    \State Select $i^\ast = \arg\max_{1 \le i \le N} p(i)$
    \State $\boldsymbol{\pi} \gets \boldsymbol{\pi} \cup \{i^\ast\}$
    \State Replace $V$ with an orthonormal basis for the subspace of $V$ orthogonal to $\vec e_{i^\ast}$
\EndWhile
\EndProcedure
\medskip

\State \Return $\boldsymbol{\pi}$
\end{algorithmic}
\noindent\rule{\linewidth}{0.4pt}

\bigskip
In the next algorithm (Algorithm~\ref{alg:SGMNN}), we present nearest neighbor classification technique based on the scalable version of gMADD. The details are in Section~\ref{sec:gmadd} of the main text.

\noindent\rule{\linewidth}{0.4pt}
\refstepcounter{algorithm}
\noindent\textbf{Algorithm~\thealgorithm} Nearest Neighbor Classification Using Scalable gMADD \par\vspace{-0.25cm}
\noindent\rule{\linewidth}{0.4pt}\vspace{-0.5cm}
\label{alg:SGMNN}
\begin{algorithmic}[1]
\small

	\Require $\sp X_1,\ldots,\sp X_J$ -- Training sample, where $\sp X_j = \{\vec X_{j1},\ldots,\vec X_{jn_j}\}$.\newline
	$k_1,\ldots,k_J$ -- Number of representative samples to be selected.\newline
	$\phi,\gamma$ -- Non-negative, continuous and monotonically increasing functions with $\phi(0)=\gamma(0)=0$.\newline
	$\vec Z$ -- Test observation.
	\medskip
	\Ensure $\delta^{h,\psi}_{\rm sc}(\vec Z)$ -- The predicted class label for $\vec Z$.
	\medskip

	\For {$j \in \{1,\dots, J\}$}
	\State Compute $h^{\phi,\gamma}(\vec X_{ji},\vec X_{ji^\prime})$, $i,i^\prime=1,\ldots,n_j$.
	\State Find $\sigma_j^2 = \operatorname{median}\{h^{\phi,\gamma}(\vec X_{ji},\vec X_{ji^\prime}): 1 \le i < i^\prime \le n_j\}$.
	\State Compute $l^{(j)}_{ii^\prime}(\phi,\gamma) =	\exp\{-h^{\phi,\gamma}(\vec X_{ji},\vec X_{ji^\prime})/(2\sigma_j^2)\},$ $i,i^\prime = 1,\ldots,n_j.$ 
    Construct $\mat L^{(j)}_{\phi,\gamma} = ((l^{(j)}_{ii^\prime}(\phi,\gamma)))$.
	\State Select a subset $\{\pi_1,\ldots,\pi_{k_j}\}$ of $\{1,\ldots,n_j\}$ using $k_j$-DPP with the matrix $\mat L^{(j)}_{\phi,\gamma}$.
	\State The representative observations from class-$j$ are $\sp X_j^\ast = \{\vec X_{j\pi_1},\ldots,\vec X_{j\pi_{k_j}}\}$.
	\EndFor

	\State Define $\sp X^\ast = \sp X_1^\ast \cup \cdots \cup \sp X_J^\ast$ to be the representative set.
	\State Compute $\rho^{\phi,\gamma}_{\rm sc}(\vec Z,\vec X_{ji})$, $j=1,\ldots,J$, $i=1,\ldots,n_j$, using \eqref{eq:gMADD_sc}.
	\State For $j=1,\ldots,J$, compute $\rho^{\phi,\gamma}_{\rm sc}(\vec Z,\sp X_j) = \min_{i=1,\ldots,n_j}
	\rho^{\phi,\gamma}_{\rm sc}(\vec Z,\vec X_{ji}).$
	\smallskip
	\State Return $\delta^{\phi,\gamma}_{\rm sc}(\vec Z) = \argmin_{j=1,\ldots,J} \rho^{\phi,\gamma}_{\rm sc}(\vec Z,\sp X_j).$
	\end{algorithmic}

\noindent\rule{\linewidth}{0.4pt}

\FloatBarrier

\section{Additional Mathematical Details}
\label{appendix:maths}

In the theoretical analysis of our proposed method (Section~\ref{sec:Theorem} of the main paper), we assumed that for any test observation $\vec Z$, $\widetilde\beta_n(\vec Z) := \sup_{1\le j\le J}\ \sup_{1\le i\le n_j} \big| \widetilde\rho_{\mathrm{sc}}(\vec Z,\vec X_{ji}) - \widetilde\rho(\vec Z,\vec X_{ji})\big|$ converges to $0$ in probability. Here, we provide some justification for this assumption. Throughout our discussion, we assume that there exists a constant $C>0$ such that for any two independent training observations $\vec X_{ji}$ and $\vec X_{j^\prime i^\prime}$, and any test observation $\vec Z$, $\|\vec X_{ji} - \vec X_{j^\prime i^\prime}\| \le C\sqrt{d_n}$ and $\|\vec Z - \vec X_{ji}\| \le C\sqrt{d_n}$. This holds if, for example, all the components of the underlying populations are uniformly bounded. As a consequence, we get that $\|\vec X_{ji}-\vec X_{j'i'}\| = \bigO_{P}(\sqrt{d_n})$ and $\|\vec Z-\vec X_{ji}\| = \bigO_{P}(\sqrt{d_n})$. This is a common phenomena in high-dimensional regimes, where the Euclidean distance scales as the square-root of the dimension \citep{hall2005geometric, sarkar2019perfect, roy2022generalizations}. 
\medskip
For ease of exposition, we define
\[
A_{j'i'}(\vec Z, ji) =
\begin{cases}
\big|\|\vec Z - \vec X_{j'i'}\| -  \|\vec X_{ji} - \vec X_{j' i'}\|\big|  & \text{if } (j',i') \neq (j,i), \\
0   & \text{if } (j',i') = (j,i).
\end{cases}
\]
Note that $0\le A_{j'i'}(\vec Z, ji)\le \|\vec Z - \vec X_{ji}\|$, thus $A_{j'i'}(\vec Z, ji) \le C\sqrt{d_n}$. So, the scaled quantity $\widetilde{A}_{j' i'}(\vec Z, ji) = d_n^{-1/2} A_{j' i'}(\vec Z, ji)$ is bounded by $C$, uniformly over $j,j'$, $i,i'$ and $\vec Z$. We also define the indicator random variables $\xi_{ji} = \mathbf{1}\{\vec X_{ji} \in \sp X^\ast\}$, indicating whether $\vec X_{ji}$ is included in the representative set $\sp X^\ast$. Notice that
\[
\widetilde\rho_{\rm sc}(\vec Z,\vec X_{ji}) = \frac{1}{k^\ast} \sum_{j'=1}^J \sum_{i'=1}^{n_{j'}} \widetilde{A}_{j' i'}(\vec Z,ji)\,\xi_{j' i'} \quad\text{and}\quad \widetilde\rho(\vec Z,\vec X_{ji}) = \frac{1}{n-1} \sum_{j'=1}^{J}\sum_{i'=1}^{n_{j'}} \widetilde A_{j'i'}(\vec Z,ji).
\]
Let $p_{ji}=\mathbb{E}[\xi_{ji}\mid \sp X] = \pr(\vec X_{ji} \in \sp X^\ast \mid \sp X)$ denote the \emph{inclusion probability} of $\vec X_{ji}$ in $\sp X^\ast$. Define
\begin{equation*}
\widetilde{\mu}_{ji}^{p}(\vec Z) = \dfrac{1}{k^\ast}\sum_{j'=1}^J \sum_{i'=1}^{n_{j'}} p_{j' i'}\widetilde{A}_{j' i'}(\vec Z, ji).
\end{equation*}
With the above defined quantities, we get the following upper bound on $\widetilde\beta_n(\vec Z)$.

\begin{lemma}
\label{lemma:1}
Consider the above setup, and let $\widetilde\beta_n(\vec Z)$ and $\widetilde\mu_{ji}^p(\vec Z)$ be as defined above. Then,
\[
\widetilde{\beta}_n(\vec Z) \le \sup_{1\le j\le J}\ \sup_{1\le i\le n_j} \bigl|\widetilde{\mu}_{ji}^p(\vec Z)-\widetilde{\rho}(\vec Z,\vec X_{ji})\bigr| + \bigO_{P}\left(\sqrt{\frac{\log n}{k^\ast}}\right).
\]
\end{lemma}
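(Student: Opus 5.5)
Split by the triangle inequality, for every $(j,i)$,
\[
\bigl|\widetilde\rho_{\rm sc}(\vec Z,\vec X_{ji})-\widetilde\rho(\vec Z,\vec X_{ji})\bigr| \le \bigl|\widetilde\rho_{\rm sc}(\vec Z,\vec X_{ji})-\widetilde\mu^p_{ji}(\vec Z)\bigr| + \bigl|\widetilde\mu^p_{ji}(\vec Z)-\widetilde\rho(\vec Z,\vec X_{ji})\bigr|,
\]
then take the supremum over $j,i$. The second term gives the first summand of the lemma directly. It remains to show that
\[
M_n:=\sup_{j,i}\bigl|\widetilde\rho_{\rm sc}(\vec Z,\vec X_{ji})-\widetilde\mu^p_{ji}(\vec Z)\bigr| = \bigO_P\bigl(\sqrt{\log n/k^\ast}\bigr).
\]
This is a sampling-fluctuation term. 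Conditionally on $\sp X$ and $\vec Z$, the weights $\widetilde A_{j'i'}(\vec Z,ji)\in[0,C]$ are fixed, and $\widetilde\rho_{\rm sc}(\vec Z,\vec X_{ji})-\widetilde\mu^p_{ji}(\vec Z)=(k^\ast)^{-1}\sum_{j',i'}\widetilde A_{j'i'}(\xi_{j'i'}-p_{j'i'})$. (The representation of $\widetilde\rho_{\rm sc}$ with denominator $k^\ast$ ignores the possible removal of $\vec X_{ji}$ from $\sp X^\ast$; that changes the normalization by a factor $k^\ast/(k^\ast-1)$ and contributes $\bigO(1/k^\ast)$, which is negligible.)

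The plan is a conditional concentration inequality for linear statistics of the selection indicators, followed by a union bound over the $n$ indices $(j,i)$. The sampling is done independently across classes, so the statistic decomposes as a sum over $j'$ of $J$ independent pieces. Within class $j'$, $\{\xi_{j'i'}\}_{i'}$ is a $k_{j'}$-DPP, i.e.\ a projection-type DPP with fixed size. Determinantal (and $k$-DPP) measures are strongly Rayleigh, and by Pemantle--Peres, Lipschitz functions of strongly Rayleigh indicators with fixed size concentrate. The map $\xi\mapsto\sum_{i'}\widetilde A_{j'i'}\xi_{j'i'}$ is $C$-Lipschitz with respect to the Hamming metric, so with $k_{j'}$ points selected,
\[
\pr\Bigl(\Bigl|\sum_{i'}\widetilde A_{j'i'}(\xi_{j'i'}-p_{j'i'})\Bigr|\ge t \,\Big|\, \sp X\Bigr)\le 2\exp\{-t^2/(c\,C^2 k_{j'})\}.
\]
Summing over the $J$ classes by independence and dividing by $k^\ast$, each fixed $(j,i)$ obeys
\[
\pr\bigl(|\widetilde\rho_{\rm sc}-\widetilde\mu^p_{ji}|\ge s\mid\sp X,\vec Z\bigr)\le 2J\exp\{-c' k^\ast s^2/C^2\}.
\]

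Next take the union bound over the $n$ pairs $(j,i)$ and set $s=K\sqrt{\log n/k^\ast}$. This gives $\pr(M_n\ge s\mid\sp X,\vec Z)\le 2Jn^{1-c'K^2/C^2}$, which tends to $0$ for large $K$. The bound is uniform in the conditioning, so integrating out $\sp X$ and $\vec Z$ yields $M_n=\bigO_P(\sqrt{\log n/k^\ast})$, which proves the lemma.

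The main obstacle is justifying the concentration step for the actual selection mechanism. For the random $k$-DPP (Algorithm~\ref{alg:k-DPP}) the strongly Rayleigh property applies. If one prefers an elementary route, negative association of DPP indicators (which holds for strongly Rayleigh measures) gives Chernoff--Hoeffding bounds for nonnegative weighted sums. I would apply these separately to the upper and lower tails, using that $\widetilde A\ge0$, and that suffices. For the deterministic variant (Algorithm~\ref{alg:k-DPP_modified}) there is no randomness, so the lemma should be stated for the random $k$-DPP. Fixed size is used only to control the normalization by $k^\ast$.
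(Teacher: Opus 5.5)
Your proposal is correct and follows the paper's proof essentially step by step. The shared steps are the same triangle-inequality split, Pemantle--Peres concentration for the $k_{j'}$-homogeneous strongly Rayleigh indicator vectors $\parvec\xi_{j'}$ (applied conditionally on $\sp X$ with a bound free of the conditioning), a combination over the $J$ classes, and a union bound over the $n$ indices with threshold of order $\sqrt{\log n/k^\ast}$. The only differences are minor: the paper combines the classes by a union bound with $a/J$ rather than by independence, and it silently uses the denominator $k^\ast$ even when $\vec X_{ji}\in\sp X^\ast$, so your $\bigO(1/k^\ast)$ correction for that normalization is a small gain in rigor.
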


\begin{proof}
By the triangle inequality,
\begin{align*}
\widetilde{\beta}_n(\vec Z) &= \sup_{1\le j\le J}\ \sup_{1\le i\le n_j} \big| \widetilde\rho_{\mathrm{sc}}(\vec Z,\vec X_{ji}) - \widetilde\rho(\vec Z,\vec X_{ji})\big|\\
&\le \sup_{1\le j\le J}\ \sup_{1\le i\le n_j} \bigl|\widetilde{\rho}_{\mathrm{sc}}(\vec Z,\vec X_{ji})-\widetilde{\mu}_{ji}^p(\vec Z)\bigr| + \sup_{1\le j\le J}\ \sup_{1\le i\le n_j} \bigl|\widetilde{\mu}_{ji}^p(\vec Z)-\widetilde{\rho}(\vec Z,\vec X_{ji})\bigr|.
\end{align*}
So, the result follows if we show $\sup_{1\le j\le J}\ \sup_{1\le i\le n_j} \bigl|\widetilde{\rho}_{\mathrm{sc}}(\vec Z,\vec X_{ji})-\widetilde{\mu}_{ji}^p(\vec Z)\bigr| = \bigO_{P}\big(\sqrt{\log n/k^\ast}\big)$. By definition of $\widetilde\rho_{\rm sc}$, since $k^\ast$ is non-random, it follows that $\E[\widetilde\rho_{\rm sc}(\vec Z,\vec X_{ji}) \mid \sp X] = \widetilde\mu_{ji}^p(\vec Z)$ almost surely.

For $j'=1,\ldots,J$, define $\parvec\xi_{j'} = (\xi_{j'i'})_{i'=1,\ldots,n_{j'}}$ to be the indicator vector associated with the $j'$-th training sample. Recall that the reference observations from class-$j'$ are selected using $k_{j'}$-DPP. Hence, the associated indicator vector $\parvec\xi_{j'}$ is strongly Rayleigh \citep[see][]{BBL2009}. Also, since $\sum_{i'=1}^{n_{j'}} \xi_{j'i'} = k_{j'}$ almost surely, the random vector $\parvec\xi_{j'}$ is $k_{j'}$-homogeneous.
\bigskip

Let $w_{j'i'} = \widetilde{A}_{j' i'}(\vec Z, ji)/C \in [0,1]$, and define
\[
g(\parvec\xi_{j'}) := \frac{1}{2C} \sum_{i'=1}^{n_{j'}} \widetilde{A}_{j'i'}(\vec Z,ji)\,\xi_{j'i'} = \frac{1}{2} \sum_{i'=1}^{n_{j'}} w_{j'i'}\,\xi_{j'i'}. 
\]
Now, consider two vectors $\parvec\xi_{j'} = (\xi_{j'i'})_{i'=1,\ldots,n_{j'}}$ and $\parvec\xi_{j'}' = (\xi'_{j'i'})_{i'=1,\ldots,n_{j'}}$. Then, 
\begin{align*}
|g(\parvec\xi_{j'})-g(\parvec\xi_{j'}')| \le \frac{1}{2} \sum_{i'=1}^{n_{j'}} w_{j' i'} |\xi_{j' i'}-\xi'_{j'i'}| \le \frac{1}{2} \sum_{i'=1}^{n_{j'}} |\xi_{j'i'}-\xi'_{j'i'}| = \|\parvec\xi_{j'}-\parvec\xi'_{j'}\|_{\mathrm{TV}},
\end{align*}
where $\|\cdot\|_{\rm TV}$ denotes the total variation distance with respect to counting measure \citep[cf.][]{PemantlePeres2014}. Thus, $g$ is $1$-Lipschitz with respect to $\|\cdot\|_{\rm TV}$. Hence, using concentration results from \cite{PemantlePeres2014}, we get for any $a>0$,
\[
\pr\left(\left|g(\parvec\xi_{j'})-\E\left[g(\parvec\xi_{j'})\mid \sp X\right]\right| \ge a \mid \sp X \right) \le 2 \exp\left(-\frac{a^2}{8k_{j'}}\right) \text{ almost surely}.
\]
Since $g(\parvec\xi_{j'}) = (2C)^{-1} \sum_{i'=1}^{n_{j'}} \widetilde A_{j'i'}(\vec Z,ji)\,\xi_{j'i'}$, it follows that
\[
\pr\left(\left|\sum_{i'=1}^{n_{j'}} \widetilde A_{j'i'}(\vec Z,ji)\,\xi_{j'i'} - \sum_{i'=1}^{n_{j'}} \widetilde A_{j'i'}(\vec Z,ji) p_{j'i'}\right| > a \mid \sp X \right) \le 2 \exp\left(-\frac{a^2}{32C^2 k_{j'}}\right) \text{ almost surely}.
\]
Consequently,
\begin{align*}
    &\pr\left(\left|\sum_{j'=1}^J\sum_{i'=1}^{n_{j'}} \widetilde A_{j'i'}(\vec Z,ji)\,\xi_{j'i'} - \sum_{j'=1}^J\sum_{i'=1}^{n_{j'}} \widetilde A_{j'i'}(\vec Z,ji) p_{j'i'}\right| > a \mid \sp X \right) \\
    &\le \sum_{j'=1}^J \pr\left(\left|\sum_{i'=1}^{n_{j'}} \widetilde A_{j'i'}(\vec Z,ji)\,\xi_{j'i'} - \sum_{i'=1}^{n_{j'}} \widetilde A_{j'i'}(\vec Z,ji) p_{j'i'}\right| > \frac{a}{J} \mid \sp X \right) \\
    &\le 2\sum_{j'=1}^J \exp\left(-\frac{a^2}{32C^2J^2k_{j'}}\right).
\end{align*}
Since the right side is free of $\sp X$, it follows that the unconditional probability also has the same upper bounded. Now,
\begin{align*}
    \pr\left(\left|\widetilde\rho_{\rm sc}(\vec Z,\vec X_{ji}) - \widetilde\mu_{ji}^p(\vec Z)\right| > a\right) &= \pr\left(\left|\sum_{j'=1}^J\sum_{i'=1}^{n_{j'}} \widetilde A_{j'i'}(\vec Z,ji)\,\xi_{j'i'} - \sum_{j'=1}^J\sum_{i'=1}^{n_{j'}} \widetilde A_{j'i'}(\vec Z,ji) p_{j'i'}\right| > k^\ast a\right) \\
    &\le 2\sum_{j'=1}^J \exp\left(-\frac{{k^\ast}^2 a^2}{32C^2J^2k_{j'}}\right) \le 2J \exp\left(-\frac{k^\ast a^2}{32C^2J^2}\right),
\end{align*}
where the last inequality follows since $k_{j'} < k^\ast$. Using the union bound over $j=1,\ldots,J$ and $i=1,\ldots,n_j$, we get
\[
\pr\left(\sup_{1\le j \le J} \sup_{1 \le i \le n_j} \left|\widetilde\rho_{\rm sc}(\vec Z,\vec X_{ji}) - \widetilde\mu_{ji}^p(\vec Z)\right| > a\right) \le 2nJ \exp\left(-\frac{k^\ast a^2}{32C^2J^2}\right).
\]
Taking $a = 8CJ\sqrt{\log n/k^\ast}$, we get
\[
\pr\left(\sup_{1\le j \le J} \sup_{1 \le i \le n_j} \left|\widetilde\rho_{\rm sc}(\vec Z,\vec X_{ji}) - \widetilde\mu_{ji}^p(\vec Z)\right| > 8CJ \sqrt{\frac{\log n}{k^\ast}}\right) \le \frac{2J}{n} \to 0 \text{ as } n \to \infty.
\]
This shows that $\sup_{1\le j \le J} \sup_{1 \le i \le n_j} \left|\widetilde\rho_{\rm sc}(\vec Z,\vec X_{ji}) - \widetilde\mu_{ji}^p(\vec Z)\right| = \bigO_P\big(\sqrt{\log n/k^\ast}\big)$, as intended.
\end{proof}

\begin{remark}
We have proved the above result assuming that the pairwise distances are bounded by $C\sqrt{d_n}$ for some $C>0$. More generally, the result can be obtained with $C_n$ in place of $C$, satisfying $\sup_{j=1,\ldots,J}\sup_{i=1,\ldots,n_j} \|\vec X_{ji}-\vec Z\| = \bigO_P(C_n\sqrt{d_n})$. Under some regularity conditions, e.g., uniformly bounded fourth moments, we have $C_n = \bigO(1)$. However, under sparsity in the underlying populations, we may also have $C_n = \smallO(1)$, leading to finer bounds than in Lemma~\ref{lemma:1}.
\end{remark}

Next, we study $\sup_{1\le j\le J}\ \sup_{1\le i\le n_j} \bigl|\widetilde{\mu}_{ji}^p(\vec Z)-\widetilde{\rho}(\vec Z,\vec X_{ji})\bigr|$. Instead of giving a rigorous derivation, we provide some heuristic justification for this term. In the high-dimensional literature, it has been established that the pairwise scaled distances $d_n^{-1/2}\|\vec X_{ji} - \vec X_{j'i'}\|$ converge in probability to constants $a_{jj'}$ depending only on the locations and scales of the underlying classes \citep{hall2005geometric,sarkar2019perfect,roy2022generalizations}. Consequently, for large $d_n$, the scaled terms $\widetilde A_{j'i'}(\vec Z,ji)$ are close to $a_{jj'}^{\vec Z}$, where the constant $a_{jj'}^{\vec Z}$ depends on $j,j'$, as well as the location and scale of $\vec Z$. Therefore, for large $n$, we get the following approximation
\begin{align*}
    \bigl|{\widetilde\mu}_{ji}^p(\vec Z)-\widetilde{\rho}(\vec Z,\vec X_{ji})\bigr| 
    &= \left|\dfrac{1}{k^\ast}\sum_{j'=1}^J \sum_{i'=1}^{n_{j'}} p_{j' i'}\widetilde{A}_{j' i'}(\vec Z, ji) - \dfrac{1}{n-1}\sum_{j'=1}^J \sum_{i'=1}^{n_{j'}} \widetilde{A}_{j' i'}(\vec Z, ji)\right| \\
    &= \left|\sum_{j'=1}^J \sum_{i'=1}^{n_{j'}} \widetilde{A}_{j' i'}(\vec Z, ji)\left(\dfrac{p_{j' i'}}{k^\ast}- \dfrac{1}{n-1}\right)\right|\approx \left|\sum_{j'=1}^J a_{j j'}^{\vec Z} \sum_{i'=1}^{n_{j'}} \left(\dfrac{p_{j' i'}}{k^\ast}- \dfrac{1}{n-1}\right)\right|\\
    &= \left|\sum_{j'=1}^J a_{jj'}^{\vec Z} \left(\dfrac{k_{j'}}{k^\ast}- \dfrac{n_{j'}}{n-1}\right)\right| \leq \sqrt{\sum_{j'=1}^J (a_{jj'}^{\vec Z})^2} \sqrt{\sum_{j'=1}^J\left(\dfrac{k_{j'}}{k^\ast}- \dfrac{n_{j'}}{n-1}\right)^2}.
\end{align*}
Now, note that, the second term is free from $i,j$. Therefore,
\[
\sup_{1\le j \le J} \sup_{1 \le i \le n_j} \bigl|{\widetilde\mu}_{ji}^p(\vec Z)-\widetilde{\rho}(\vec Z,\vec X_{ji})\bigr| \leq \sqrt{\sum_{j=1}^J\left(\dfrac{k_{j}}{k^\ast}- \dfrac{n_{j}}{n-1}\right)^2} \sup_{1 \le j \le J} \sqrt{\sum_{j'=1}^J (a_{jj'}^{\vec Z})^2}.
\]
Since the constants $a_{jj'}^{\vec Z}$ are finite and $J$ is fixed, it follows that $\sup_{1 \le j \le J} \sqrt{\sum_{j'=1}^J (a_{jj'}^{\vec Z})^2}$ is a finite quantity. Also, by our assumption, $n_j/n \to \alpha_j \in (0,1)$. Therefore, if $k_j/k^\ast \to \alpha_j$ for $j=1,\ldots,J$, then $\sup_{1\le j \le J} \sup_{1 \le i \le n_j} \bigl|{\widetilde\mu}_{ji}^p(\vec Z)-\widetilde{\rho}(\vec Z,\vec X_{ji})\bigr| \overset{P}{\to} 0$ as $n \to \infty$. In our practical implementations, we use $k_j \propto \sqrt{d} (n_j/n)\log n_j$. This choice, along with $n_j/n \to \alpha_j \in (0,1)$ ensures that $k_j/k^\ast \to \alpha_j$ as $n \to \infty$.

\begin{table}[b!]
\centering
\caption{Average misclassification rates (in $\%$) of NN-MADD and NN-MADD$_{\rm sc}$ in Examples~\ref{example1}--\ref{example7} with varying training sample sizes. The standard errors of the misclassification rates are reported in a smaller font within parentheses.\label{tab:appendix_simulation_MADD_sc}}
\small
\begin{tabular}{c|rr|rr|rr}
\toprule

& \multicolumn{2}{c|}{$n=1000$} 
& \multicolumn{2}{c|}{$n=2000$} 
& \multicolumn{2}{c}{$n=4000$} \\
\cline{2-7}
& NN-MADD & NN-MADD$_{\rm sc}$  
& NN-MADD & NN-MADD$_{\rm sc}$  
& NN-MADD & NN-MADD$_{\rm sc}$  \\
\midrule

Example~\ref{example1}
& 20.23 \se{0.17} & 21.27 \se{0.20} 
& 19.28 \se{0.13} & 20.88 \se{0.17} 
& 18.44 \se{0.12} & 19.55 \se{0.11} \\

Example~\ref{example2}
& 14.18 \se{0.17} & 14.39 \se{0.21} 
& 14.82 \se{0.10} & 14.82 \se{0.14} 
& 14.14 \se{0.13} & 14.28 \se{0.10} \\

Example~\ref{example3}
& 5.34 \se{0.08} & 5.68 \se{0.09} 
& 5.06 \se{0.07} & 5.39 \se{0.07} 
& 4.60 \se{0.06} & 4.96 \se{0.07} \\

Example~\ref{example4}
& 11.88 \se{0.14} & 12.00 \se{0.19} 
& 11.78 \se{0.13} & 11.84 \se{0.17} 
& 11.57 \se{0.10} & 11.91 \se{0.14} \\

Example~\ref{example5}
& 11.21 \se{0.13} & 11.39 \se{0.15} 
& 11.05 \se{0.11} & 11.43 \se{0.15} 
& 10.83 \se{0.11} & 11.19 \se{0.11} \\

Example~\ref{example6}
& 27.33 \se{0.21} & 28.05 \se{0.19} 
& 25.88 \se{0.14} & 26.74 \se{0.18} 
& 25.21 \se{0.12} & 26.23 \se{0.16} \\

Example~\ref{example7}
& 12.12 \se{0.14} & 12.39 \se{0.15} 
& 11.91 \se{0.09} & 12.36 \se{0.17} 
& 11.50 \se{0.10} & 11.70 \se{0.11} \\

\bottomrule
\end{tabular} 
\end{table}

\section{Additional Simulation Studies}
\label{appendix: Simulation}

In this section, we report some additional simulation results which are omitted from the main text. We start with a comparative study of the performances of the NN-MADD and NN-MADD$_\mathrm{sc}$ classifiers in Examples~\ref{example1}--\ref{example7} for varying training sample sizes.
The misclassification rates are reported in Table~\ref{tab:appendix_simulation_MADD_sc} along with the corresponding standard errors. We observe that, in all the examples, the differences between the misclassification rates of NN-MADD and NN-MADD$_{\rm sc}$ are very small. Except for a few instances, the difference never exceeds $1\%$. 

Next, in Table~\ref {tab:Compare_d}, we compare the performance of the proposed classifier with some state-of-the-art classifiers in Examples~\ref{example1}--\ref{example7} for fixed training sample size $n=1000$ and varying dimensions $d$. A similar comparison with fixed dimension $d=100$ and varying training sample sizes $n$ is made in the main text in Table ~\ref{tab:Compare}. The results are similar to what we observe in Table~\ref{tab:Compare}. NN-MADD$_{\rm sc}$ achieves performance close to that of NN-MADD, and often outperforms the existing classifiers. The superiority is especially noticeable in large dimensions.

\begin{table}[t!]
\centering
\caption{Average misclassification rates (in \%) of different classifiers in Examples~\ref{example1}--\ref{example7} with $n=1000$ and varying $d$. The corresponding standard errors are reported in a smaller font within parentheses.\label{tab:Compare_d}}

\small
\setlength{\tabcolsep}{3pt}
\begin{adjustbox}{max width=\textwidth}
\begin{tabular}{l|r|rrrrrrrr}
\toprule
Example & \multicolumn{1}{r|}{$d$}
& \multicolumn{1}{l}{NN} & \multicolumn{1}{l}{GLMNET} & \multicolumn{1}{l}{CART} & \multicolumn{1}{l}{RF} & \multicolumn{1}{l}{LSVM} & \multicolumn{1}{l}{NLSVM} & 
\multicolumn{1}{l}{NN-MADD}&
\multicolumn{1}{l}{NN-MADD$_{\rm sc}$} \\
\midrule

\multirow{3}{*}{Example~\ref{example1}} 
& 50 & 51.56 \se{0.07} & 49.99 \se{0.14} & 34.83 \se{0.20} & 19.68 \se{0.12} & 48.49 \se{0.16} & 18.43 \se{0.10} & 18.61 \se{0.11} & 18.79 \se{0.11} \\
& 150 & 50.58 \se{0.02} & 49.86 \se{0.13} & 35.66 \se{0.23} & 18.10 \se{0.11} & 47.74 \se{0.19} & 15.86 \se{0.08} & 12.11 \se{0.12} & 12.46 \se{0.14}\\
& 250 & 50.34 \se{0.01} & 49.35 \se{0.12} & 35.64 \se{0.29} & 18.05 \se{0.12} & 47.85 \se{0.15} & 15.51 \se{0.09} & 9.95 \se{0.09} & 9.99 \se{0.15}\\
\midrule

\multirow{3}{*}{Example~\ref{example2}} 
& 50 & 39.87 \se{0.18} & 31.13 \se{0.17} & 44.15 \se{0.13} & 31.59 \se{0.15} & 31.76 \se{0.17} & 29.61 \se{0.19} & 30.92 \se{0.15} & 31.65 \se{0.22} \\ 
& 150 & 34.11 \se{0.17} & 25.67 \se{0.14} & 43.78 \se{0.15} & 25.64 \se{0.12} & 28.15 \se{0.17} & 21.55 \se{0.34} & 12.33 \se{0.12} & 13.30 \se{0.13} \\
& 250 & 30.79 \se{0.16} & 24.47 \se{0.13} & 43.58 \se{0.14} & 23.51 \se{0.12} & 28.20 \se{0.18} & 19.85 \se{0.60} & 5.25 \se{0.08} & 5.87 \se{0.10}\\
\midrule

\multirow{3}{*}{Example~\ref{example3}} 
& 50 & 13.72 \se{0.14} & 4.06 \se{0.06} & 31.04 \se{0.14} & 5.35 \se{0.07} & 4.41 \se{0.06} & 4.22 \se{0.06} & 7.23 \se{0.09} & 7.44 \se{0.10} \\ 

& 150 & 27.50 \se{0.19} & 6.64 \se{0.07} & 40.24 \se{0.14} & 10.26 \se{0.13} & 7.98 \se{0.07} & 6.99 \se{0.08} & 15.60 \se{0.13} & 16.51 \se{0.16} \\ 

& 250 & 22.18 \se{0.19} & 2.64 \se{0.04} & 40.24 \se{0.16} & 5.68 \se{0.09} & 4.38 \se{0.08} & 2.87 \se{0.06} & 7.11 \se{0.08} & 7.74 \se{0.12} \\ 
\midrule

\multirow{3}{*}{Example~\ref{example4}} 
& 50 & 47.37 \se{0.08} & 49.91 \se{0.14} & 39.72 \se{0.21} & 23.04 \se{0.19} & 49.40 \se{0.16} & 16.73 \se{0.12} & 22.90 \se{0.15} & 23.32 \se{0.23} \\
& 150 & 49.76 \se{0.02} & 49.80 \se{0.16} & 39.65 \se{0.21} & 12.94 \se{0.22} & 49.37 \se{0.13} & 4.59 \se{0.06} & 6.16 \se{0.08} & 6.44 \se{0.09} \\
& 250 & 49.98 \se{0.01} & 49.91 \se{0.16} & 39.59 \se{0.24} & 9.26 \se{0.25} & 49.52 \se{0.13} & 1.42 \se{0.04} & 1.84 \se{0.06} & 1.94 \se{0.07} \\ 

\midrule

\multirow{3}{*}{Example~\ref{example5}} 
& 50 & 46.74 \se{0.09} & 38.93 \se{0.15} & 38.65 \se{0.20} & 21.80 \se{0.16} & 38.85 \se{0.12} & 15.60 \se{0.12} & 22.07 \se{0.15} & 22.48 \se{0.17} \\ 
& 150 & 49.69 \se{0.02} & 31.28 \se{0.19} & 37.82 \se{0.24} & 11.64 \se{0.19} & 31.75 \se{0.16} & 3.73 \se{0.04} & 5.78 \se{0.08} & 6.04 \se{0.11} \\
& 250 & 49.97 \se{0.01} & 26.01 \se{0.14} & 37.79 \se{0.19} & 6.99 \se{0.23} & 27.77 \se{0.16} & 1.10 \se{0.03} & 1.69 \se{0.05} & 1.81 \se{0.07} \\ 

\midrule

\multirow{3}{*}{Example~\ref{example6}} 
& 50 & 33.04 \se{0.16} & 50.12 \se{0.12} & 43.38 \se{1.99} & 29.93 \se{1.40} & 50.27 \se{0.15} & 30.33 \se{0.20} & 22.89 \se{0.10} & 23.30 \se{0.13} \\
& 150 & 41.30 \se{0.13} & 50.16 \se{0.16} & 48.95 \se{0.59} & 45.26 \se{0.43} & 50.07 \se{0.14} & 44.65 \se{0.21} & 28.27 \se{0.15} & 29.62 \se{0.18} \\ 
& 250 & 44.30 \se{0.15} & 50.14 \se{0.11} & 49.18 \se{0.64} & 47.83 \se{0.30} & 49.87 \se{0.17} & 48.83 \se{0.19} & 32.09 \se{0.16} & 33.76 \se{0.25} \\
\midrule

\multirow{3}{*}{Example~\ref{example7}} 

& 50 & 30.54 \se{0.23} & 49.97 \se{0.16} & 41.09 \se{0.19} & 24.41 \se{0.15} & 50.12 \se{0.15} & 23.12 \se{0.17} & 22.50 \se{0.13} & 22.69 \se{0.17} \\

& 150 & 20.51 \se{0.15} & 50.13 \se{0.15} & 40.46 \se{0.12} & 21.72 \se{0.13} & 50.14 \se{0.14} & 14.47 \se{0.16} & 6.11 \se{0.09} & 6.45 \se{0.13} \\

& 250 & 15.37 \se{0.14} & 50.17 \se{0.15} & 40.24 \se{0.20} & 21.16 \se{0.11} & 49.82 \se{0.14} & 11.77 \se{0.12} & 1.66 \se{0.04} & 1.96 \se{0.06} \\  

\bottomrule
\end{tabular}
\end{adjustbox}
\end{table}

Finally, we numerically probe the validity of the assumption made in Proposition~\ref{theorem} in our simulation examples. We already numerically studied the behavior of $\widetilde{\beta}_n$ in Examples~\ref{example2} and \ref{example5} in the main text (see Figure~\ref{fig:assumption}) by showing boxplots of $\widetilde\beta(\vec Z)$ values with varying $n$ and $d$. Here, we repeat the same exercise for the remaining examples. The boxplots are shown in Figure~\ref{fig:assumption_appendix}. In all the examples, the values of $\widetilde\beta(\vec Z)$ concentrate, and decrease as a whole, as $n$ and $d$ increase at an appropriate rate. These results give empirical justification to our assumption in Proposition~\ref{theorem}.

\begin{figure}[t]
    \small
    \centering
    \begin{tabular}{c c}
    (a) Example~\ref{example1} (Gaussian vs multivariate $t$) & (b) Example~\ref{example3} (location problem)\\
    \includegraphics[width=0.48\textwidth]{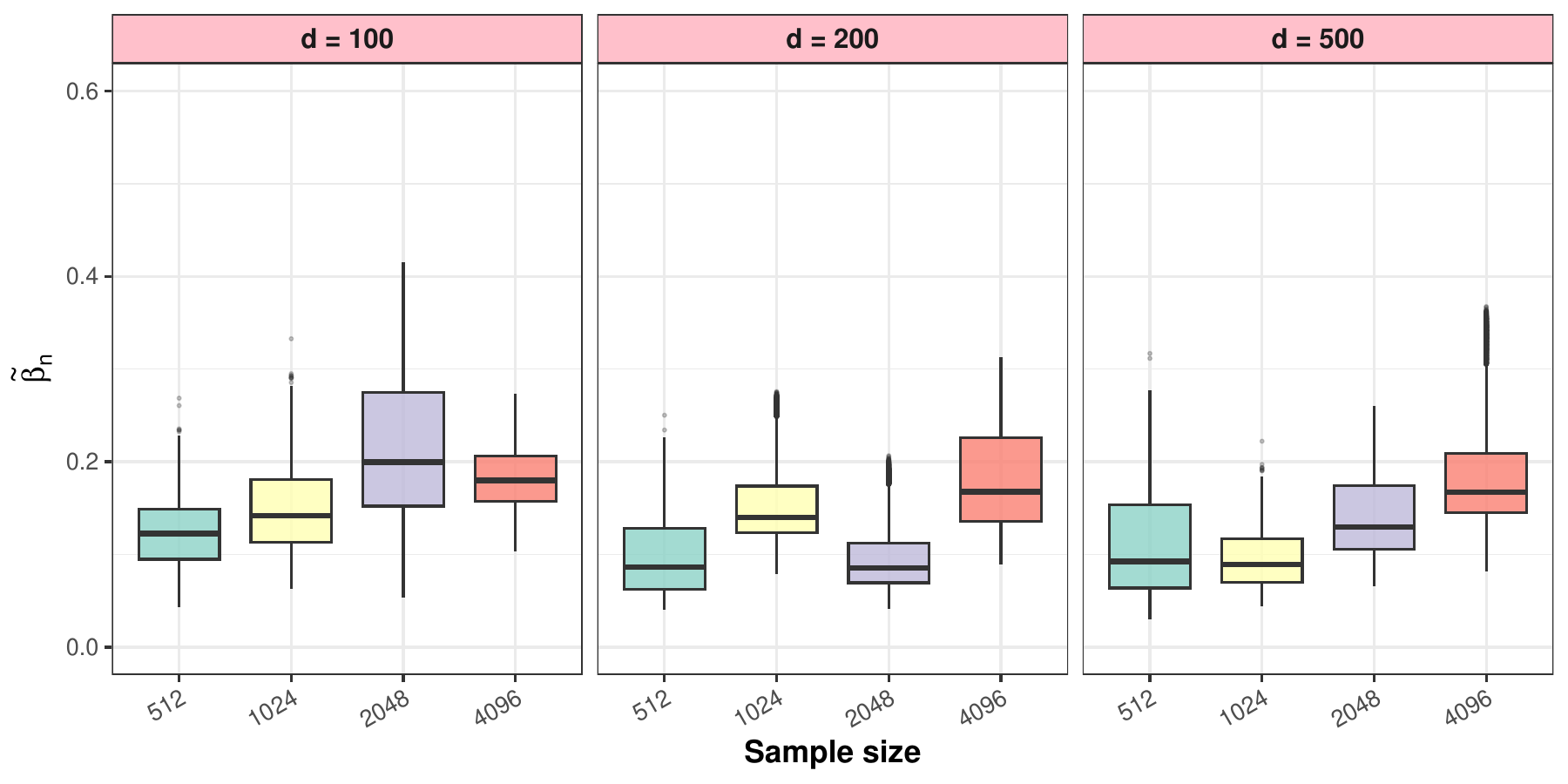} &
    \includegraphics[width=0.48\textwidth]{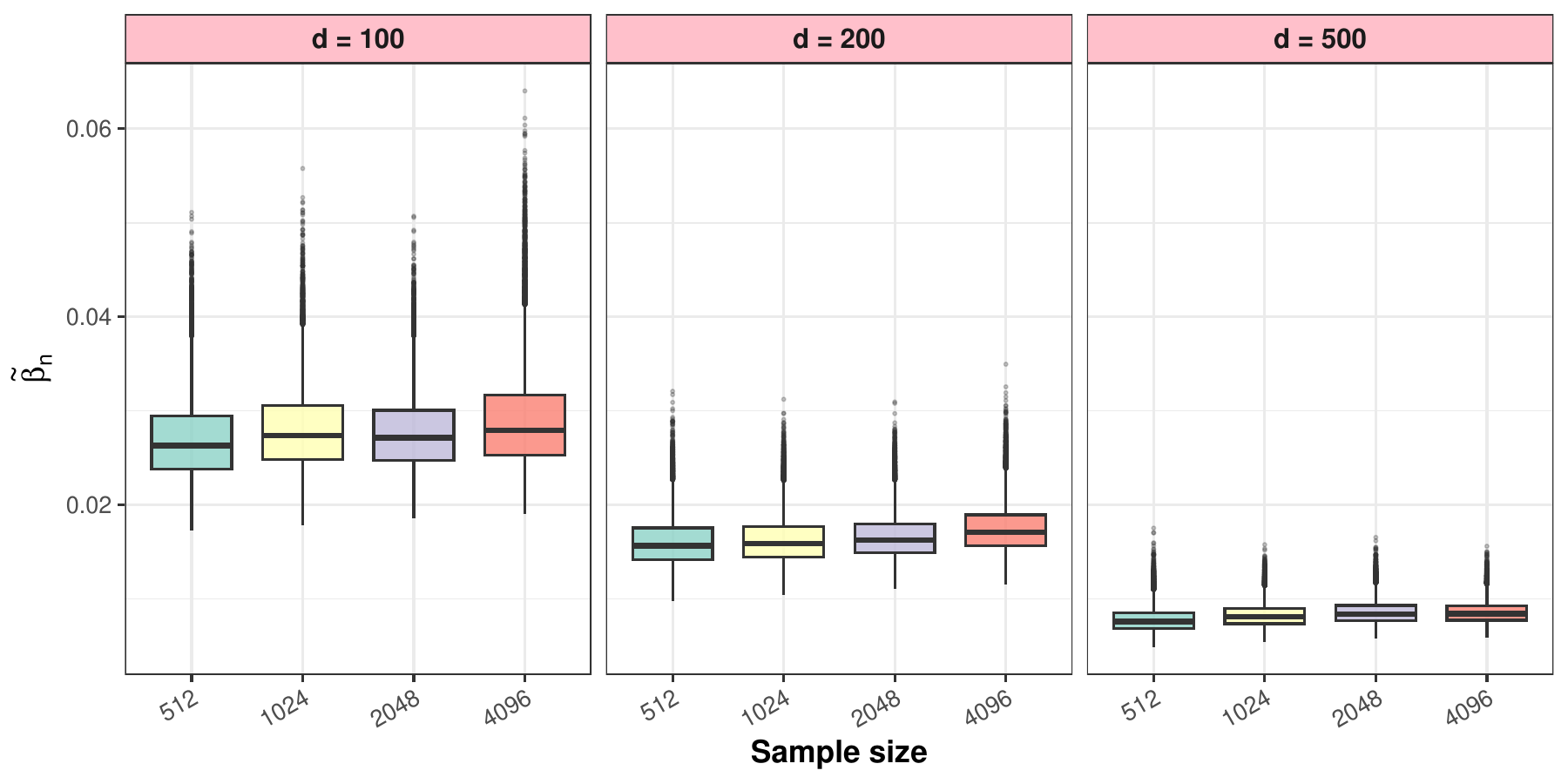} \\ [5pt]
    (c) Example~\ref{example4} (scale problem) & (d) Example~\ref{example6} (mixture of Gaussians)\\
    \includegraphics[width=0.48\textwidth]{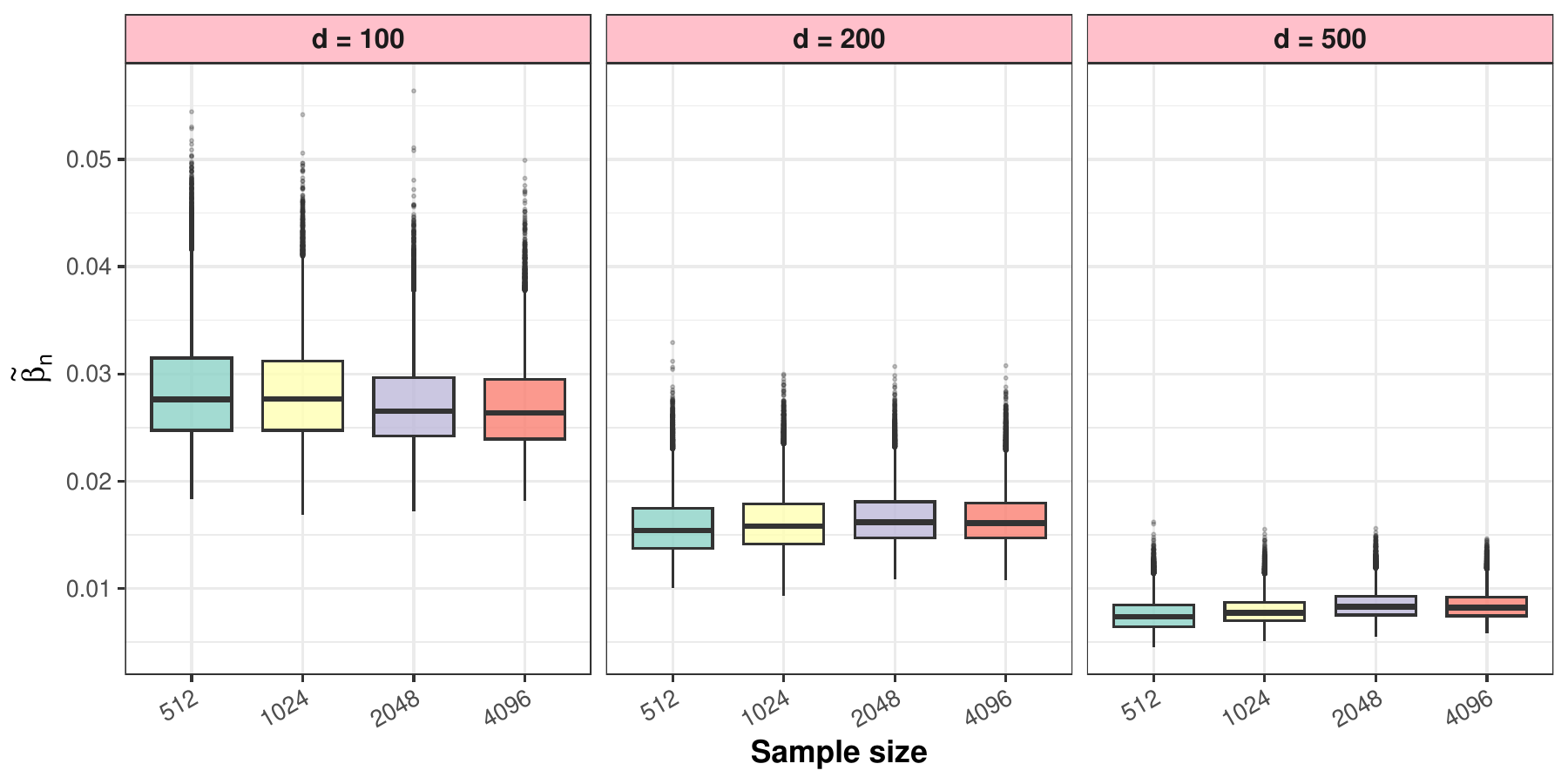} &
    \includegraphics[width=0.48\textwidth]{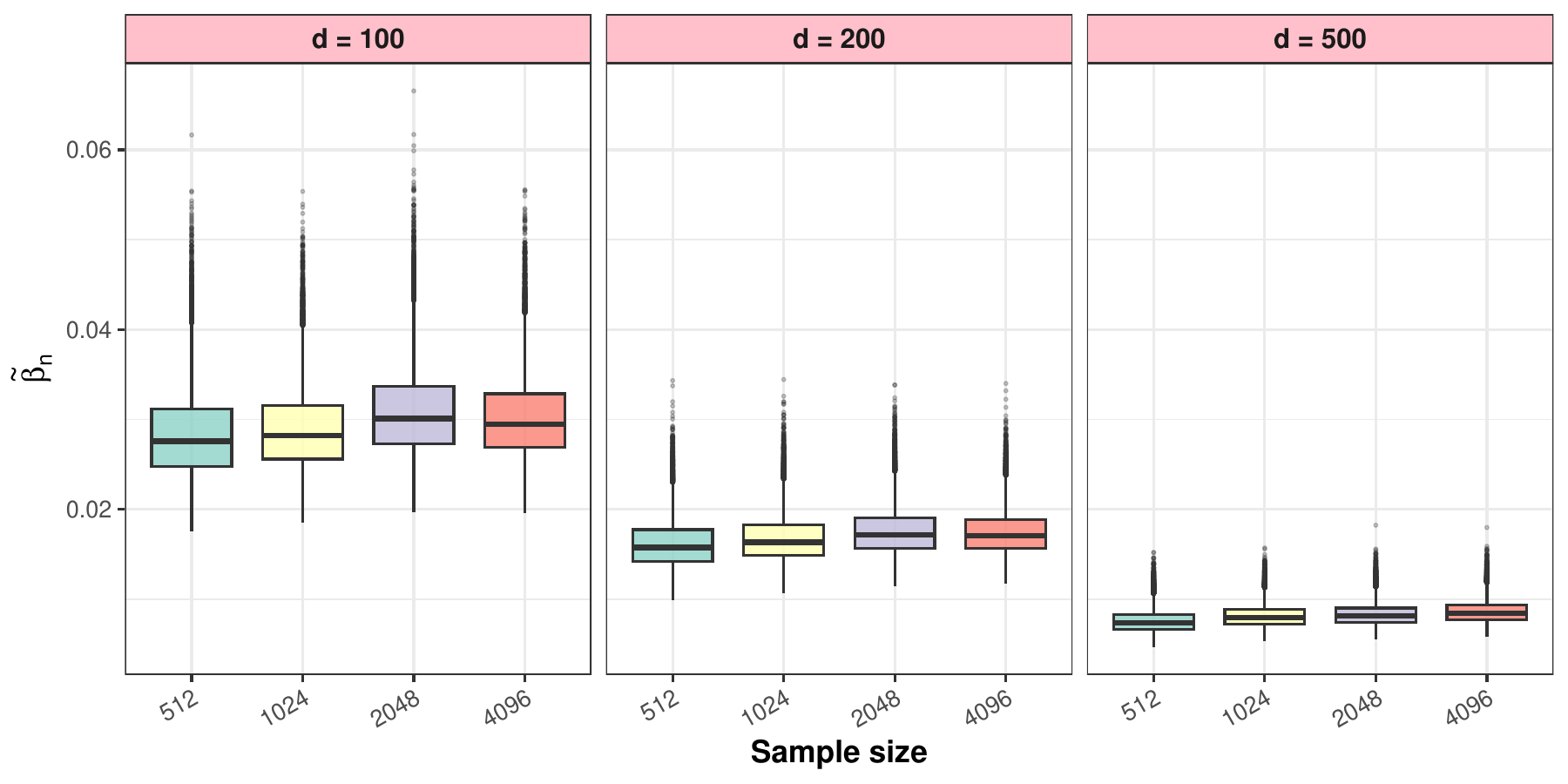} \\ [5pt]
    \multicolumn{2}{c}{(e) Example~\ref{example7} (mixture of Gaussians)} \\
    \multicolumn{2}{c}{\includegraphics[width=0.48\textwidth]{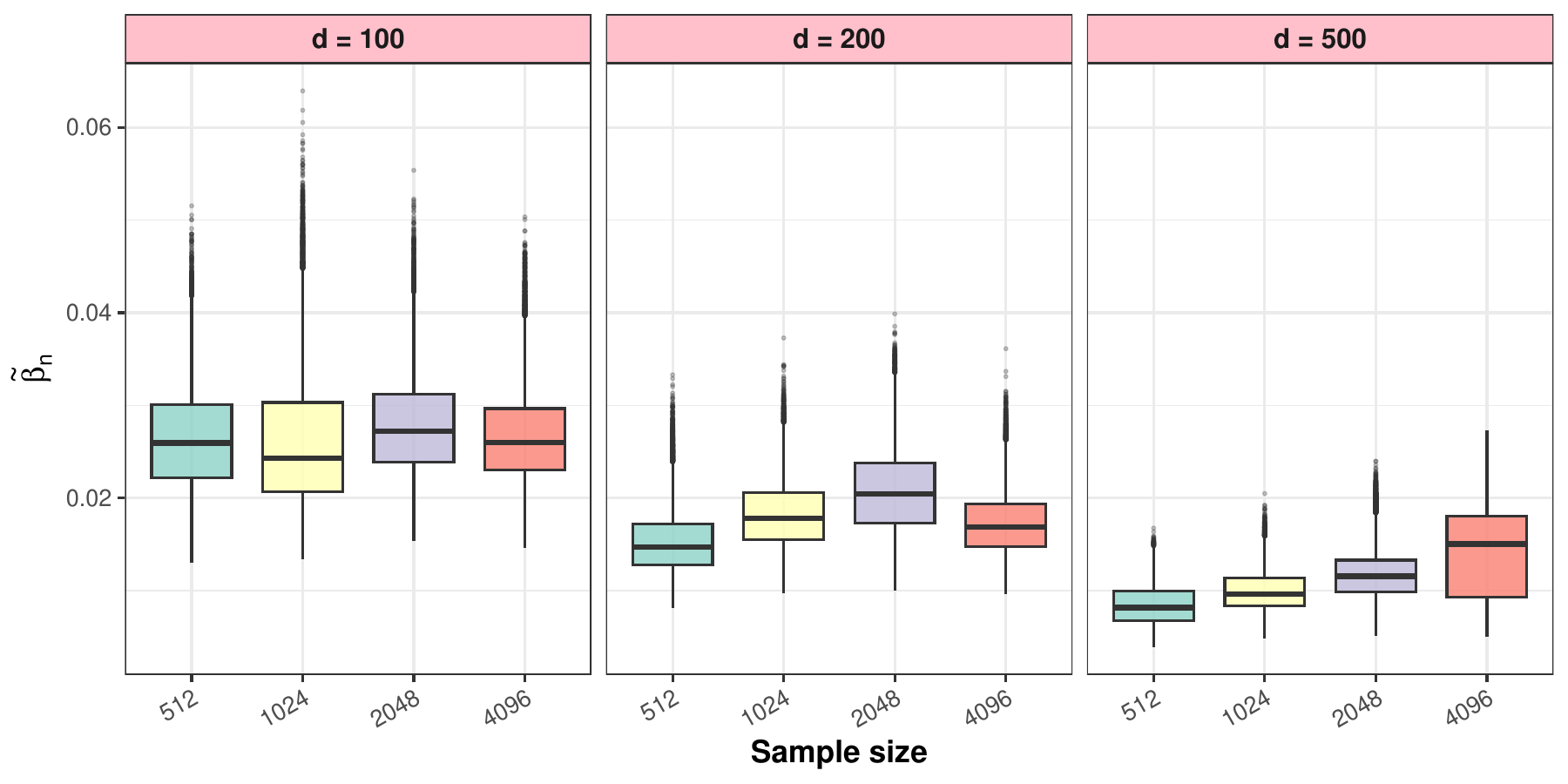}}
    \end{tabular}
    \caption{Boxplots of $\widetilde\beta(\vec Z)$ values in Examples~\ref{example1}, \ref{example3}, \ref{example4}, \ref{example6} and \ref{example7} for varying $n$ and $d$.\label{fig:assumption_appendix}}
\end{figure}

\section{Additional Real Data Analyses}
\label{appendix:Real}

Here, we report some additional results from real data analyses. In the main text, we reported the misclassification rates of different classifiers in six benchmark datasets from the UCR Time Series Classification Archive. In Table~\ref{tab:real_data_results_all}, we report the misclassification rates for the remaining benchmark datasets from the UCR Time Series Classification Archive that we analyzed (datasets with at least $500$ samples, at least $50$ dimensions, and at most $10$ classes). The results for both NN-MADD$_{\rm sc}$ and NN-gMADD$_{\rm sc}$ are reported. In some of the datasets, viz., \textbf{Computers}, \textbf{SonyAIBORobotSurface1}, \textbf{LargekitchenAppliances}, \textbf{CBF}, and \textbf{SemgHandMovementCh2}, one of the proposed classifiers achieves the lowest misclassification rate. In the other datasets, their performance is close to that of the best-performing classifier. Overall, the results further support the effectiveness of the proposed scalable MADD-based classifiers for large data classification.

\begin{table}[h!]
    \centering
    \renewcommand{\se}[1]{\tiny{(\textit{#1})}}
    \setlength{\tabcolsep}{2pt}
    \caption{Average misclassification rates (in \%) of different classifiers for benchmark datasets from UCR time series classification archive. The corresponding standard errors are reported in a smaller font within parentheses. The underlined numbers represent the best result in each setup. \label{tab:real_data_results_all} }
        
    \scriptsize
    \renewcommand{\arraystretch}{0.92}
    \resizebox{\textwidth}{!}{
    \begin{tabular}{l|rrrr|rrrrrrrr}
        \toprule
        Dataset & $J$ & $d$ & $n$ & $n_{\rm test}$ &
        NN & GLMNET & CART & RF & LSVM & NLSVM &
        NN-MADD$_{\rm sc}$ & NN-gMADD$_{\rm sc}$ \\
        \midrule

        Computers & 2 & 720 & 350 & 150 & 41.52 & 42.67 & 39.31 & 36.85 & 44.67 & 39.55 & 40.69 & \best{36.40}  \\
         &  &  &  &  & \se{0.48} & \se{0.80} & \se{0.90} & \se{0.78} & \se{0.84} & \se{0.54}  & \se{0.75} & \se{0.58} \\
        \addlinespace[2pt]

         SonyAIBORobotSurface1 & 2 & 70 & 433 & 188 & 1.36 & 2.23 & 6.98 & 1.40 & 2.36 & 0.57 & \best{0.51} & 0.66 \\
         &  &  &  &  & \se{0.15} & \se{0.21} & \se{0.34} & \se{0.13} & \se{0.19} & \se{0.09} & \se{0.09} & \se{0.10} \\
        \addlinespace[2pt]

        InlineSkate & 7 & 1882 & 456 & 194 & 49.81 & 61.79 & 65.46 & 48.04 & 62.76 & \best{48.00} & 56.37  & 55.46\\
         &  &  &  &  & \se{0.61} & \se{0.67} & \se{0.74} & \se{0.63} & \se{0.71} & \se{0.67} & \se{0.56}  & \se{0.47}\\
        \addlinespace[2pt]

        LargeKitchenAppliances & 3 & 720 & 519 & 231 & 43.84 & 49.75 & 43.39 & 36.48 & 55.88 & 44.64  & 38.82  & \best{36.12}\\
         &  &  &  &  & \se{0.55} & \se{0.73} & \se{0.49} & \se{0.61} & \se{0.53} & \se{0.55} & \se{0.61} & \se{0.50} \\
        \addlinespace[2pt]

        RefrigerationDevices & 3 & 720 & 519 & 231 & 56.38 & 63.81 & 49.40 & \best{41.63} & 63.08 & 60.10 & 59.57  & 48.29\\
         &  &  &  &  & \se{0.56} & \se{0.66} & \se{0.60} & \se{0.61} & \se{0.56} & \se{0.60} & \se{0.54}  & \se{0.56}\\
        \addlinespace[2pt]

        ScreenType & 3 & 720 & 519 & 231 & 55.55 & 55.36 & 56.40 & 52.95 & 59.71 & \best{52.54} & 54.74 & 53.87 \\
         &  &  &  &  & \se{0.61} & \se{0.57} & \se{0.61} & \se{0.59} & \se{0.58} & \se{0.41}  & \se{0.60} & \se{0.62}\\
        \addlinespace[2pt]

        SmallKitchenAppliances & 3 & 720 & 519 & 231 & 61.45 & 43.55 & 38.23 & \best{32.31} & 49.66 & 44.78  & 53.70 & 43.26\\
         &  &  &  &  & \se{0.48} & \se{0.67} & \se{0.56} & \se{0.64} & \se{0.57} & \se{0.73}  & \se{0.61} & \se{0.87} \\
        \addlinespace[2pt]

        ECGFiveDays & 2 & 136 & 618 & 266 & 0.65 & 0.15 & 7.13 & 1.97 & \best{0.05} & 0.33  & 1.52 & 3.11\\
         &  &  &  &  & \se{0.11} & \se{0.04} & \se{0.39} & \se{0.18} & \se{0.02} & \se{0.05}  & \se{0.16} & \se{0.20}\\

        \bottomrule
    \end{tabular}
    }
\end{table}

\begin{table}[h!]
    \centering
    \renewcommand{\se}[1]{\tiny{(\textit{#1})}}
    \setlength{\tabcolsep}{2pt}
    \caption*{Table~\thetable~(continued)}
        
    \scriptsize
    \renewcommand{\arraystretch}{0.92}
    \resizebox{\textwidth}{!}{
    \begin{tabular}{l|rrrr|rrrrrrrr}
        \toprule
        Dataset & $J$ & $d$ & $n$ & $n_{\rm test}$ &
        NN & GLMNET & CART & RF & LSVM & NLSVM &
        NN-MADD$_{\rm sc}$ & NN-gMADD$_{\rm sc}$ \\
        \midrule
        
        SemgHandGenderCh2 & 2 & 1500 & 630 & 270 & 8.46 & 7.59 & 16.16 & \best{6.52} & 14.95 & 7.08  & 7.66 & 24.57\\
         &  &  &  &  & \se{0.41} & \se{0.35} & \se{0.41} & \se{0.35} & \se{0.35} & \se{0.34}  & \se{0.30} & \se{0.54}\\
        \addlinespace[2pt]

        SemgHandSubjectCh2 & 5 & 1500 & 630 & 270 & 16.71 & \best{14.16} & 39.59 & 20.12 & 14.79 & 14.25  & 20.13 & 39.32\\
         &  &  &  &  & \se{0.34} & \se{0.41} & \se{0.60} & \se{0.46} & \se{0.52} & \se{0.51}  & \se{0.47}& \se{0.45} \\
        \addlinespace[2pt]

        CBF & 3 & 128 & 642 & 288 & 1.33 & 1.93 & 11.61 & 0.28 & 2.26 & 0.15 & 0.08  & \best{0.00}\\
         &  &  &  &  & \se{0.13} & \se{0.18} & \se{0.54} & \se{0.06} & \se{0.13} & \se{0.05} & \se{0.04}  & \se{0.00}\\
        \addlinespace[2pt]

         SemgHandMovementCh2 & 6 & 1500 & 642 & 258 & 30.12 & 45.22 & 56.76 & 41.57 & 43.38 & 33.32  & \best{24.25} & 46.65\\
         &  &  &  &  & \se{0.57} & \se{0.62} & \se{0.74} & \se{0.63} & \se{0.51} & \se{0.42} & \se{0.41} & \se{0.53} \\
        \addlinespace[2pt]

        SonyAIBORobotSurface2 & 2 & 65 & 685 & 295 & 1.64 & 5.10 & 9.88 & 3.67 & 5.55 & \best{1.11}  & 2.66 & 2.49\\
         &  &  &  &  & \se{0.15} & \se{0.20} & \se{0.39} & \se{0.24} & \se{0.26} & \se{0.11}  & \se{0.17} & \se{0.19}\\
        \addlinespace[2pt]

        Symbols & 6 & 398 & 712 & 308 & 3.43 & 5.61 & 8.87 & \best{2.87} & 5.18 & 3.69  & 4.71 & 4.17\\
         &  &  &  &  & \se{0.17} & \se{0.22} & \se{0.39} & \se{0.17} & \se{0.17} & \se{0.19} & \se{0.17} & \se{0.18} \\
        \addlinespace[2pt]

        ItalyPowerDemand & 2 & 24 & 766 & 330 & 3.07 & \best{2.69} & 3.18 & 2.70 & 2.85 & 2.72 & 3.54 & 3.88 \\
         &  &  &  &  & \se{0.18} & \se{0.16} & \se{0.15} & \se{0.13} & \se{0.17} & \se{0.13} & \se{0.19} & \se{0.18} \\
        \addlinespace[2pt]

        MedicalImages & 10 & 99 & 789 & 352 & 24.50 & 33.47 & 34.56 & \best{21.20} & 30.58 & 22.99  & 25.44 & 25.84\\
         &  &  &  &  & \se{0.34} & \se{0.34} & \se{0.53} & \se{0.37} & \se{0.29} & \se{0.36} & \se{0.35} & \se{0.39} \\
        \addlinespace[2pt]

        CinCECGTorso & 4 & 1639 & 992 & 428 & 0.21 & 32.32 & 7.20 & 0.78 & 23.42 & \best{0.17}  & 0.49 & 0.96\\
         &  &  &  &  & \se{0.04} & \se{0.44} & \se{0.29} & \se{0.10} & \se{0.45} & \se{0.04}  & \se{0.04} & \se{0.10}\\
        \addlinespace[2pt]

        Mallat & 8 & 1024 & 1608 & 792 & 1.77 & 1.96 & 2.46 & \best{0.54} & 2.01 & 1.17 & 1.48  & 1.63\\
         &  &  &  &  & \se{0.08} & \se{0.10} & \se{0.10} & \se{0.07} & \se{0.08} & \se{0.08}  & \se{0.07} & \se{0.08}\\
        \addlinespace[2pt]

        MixedShapesSmallTrain & 5 & 1024 & 1765 & 760 & 6.49 & 10.38 & 16.82 & 7.41 & 12.38 & \best{5.19} & 7.75 & 7.64 \\
         &  &  &  &  & \se{0.15} & \se{0.17} & \se{0.23} & \se{0.22} & \se{0.21} & \se{0.15}  & \se{0.19} & \se{0.19}\\
        \addlinespace[2pt]

        FreezerSmallTrain & 2 & 301 & 2014 & 864 & 4.32 & 0.23 & 3.53 & \best{0.18} & 0.21 & 0.42 & 2.19 & 2.62  \\
         &  &  &  &  & \se{0.17} & \se{0.04} & \se{0.19} & \se{0.04} & \se{0.03} & \se{0.06}  & \se{0.11}& \se{0.18} \\
        \addlinespace[2pt]

        MixedShapesRegularTrain & 5 & 1024 & 2044 & 881 & 6.55 & 10.32 & 16.21 & 7.06 & 12.00 & \best{4.89} & 7.92 & 7.51\\
         &  &  &  &  & \se{0.13} & \se{0.20} & \se{0.25} & \se{0.21} & \se{0.22} & \se{0.14} & \se{0.15} & \se{0.15} \\
        \addlinespace[2pt]

        FreezerRegularTrain & 2 & 301 & 2100 & 900 & 3.94 & 0.26 & 3.16 & \best{0.17} & 0.22 & 0.39 & 1.80 & 2.40 \\
         &  &  &  &  & \se{0.14} & \se{0.04} & \se{0.16} & \se{0.03} & \se{0.03} & \se{0.04} & \se{0.09} & \se{0.16} \\
        \addlinespace[2pt]

        UWaveGestureLibraryAll & 8 & 945 & 3194 & 1284 & 3.52 & 10.08 & 18.02 & 3.53 & 8.36 & \best{2.11} & 3.74  & 2.82\\
         &  &  &  &  & \se{0.09} & \se{0.09} & \se{0.22} & \se{0.10} & \se{0.10} & \se{0.06}  & \se{0.08} & \se{0.06}\\
        \addlinespace[2pt]

        UWaveGestureLibraryX & 8 & 315 & 3194 & 1284 & 22.68 & 37.47 & 32.00 & 18.84 & 32.12 & \best{17.88}  & 23.26 & 21.53\\
         &  &  &  &  & \se{0.16} & \se{3.22} & \se{0.25} & \se{0.21} & \se{0.14} & \se{0.14}  & \se{0.18} & \se{0.25}\\
        \addlinespace[2pt]

        UWaveGestureLibraryY & 8 & 315 & 3194 & 1284 & 28.86 & 39.43 & 36.60 & 24.52 & 35.29 & \best{24.36}& 33.40  & 30.80 \\
         &  &  &  &  & \se{0.13} & \se{0.28} & \se{0.24} & \se{0.20} & \se{0.23} & \se{0.18} & \se{0.23} & \se{0.23} \\
        \addlinespace[2pt]

        UWaveGestureLibraryZ & 8 & 315 & 3194 & 1284 & 29.30 & 41.25 & 35.94 & \best{23.09} & 37.31 & 23.17 & 30.74  & 27.73\\
         &  &  &  &  & \se{0.18} & \se{0.20} & \se{0.28} & \se{0.17} & \se{0.18} & \se{0.18}  & \se{0.18} & \se{0.19}\\
        \addlinespace[2pt]

        ECG5000 & 4 & 140 & 3464 & 1512 & 5.69 & 4.67 & 5.65 & \best{3.99} & 4.39 & 4.09 & 6.24 & 6.29 \\
         &  &  &  &  & \se{0.10} & \se{0.07} & \se{0.10} & \se{0.08} & \se{0.08} & \se{0.08} & \se{0.11}  & \se{0.10}\\

        \bottomrule
    \end{tabular}
    }
\end{table}

\bibliographystyle{apalike}
\bibliography{ref}

\end{document}